\newcommand{\ignore}[1]{}
\newcommand{\notinproc}[1]{#1}
\newcommand{\onlyinproc}[1]{}
\newtheorem{thm}{Theorem}[section]
\newtheorem{theorem}{Theorem}[section]
\newtheorem{lemma}[thm]{Lemma}
\def\E{{\textsf E}}
\def\var{{\textsf var}}
\def\vecv{\boldsymbol{v}}
\def\vecm{\boldsymbol{m}}
\def\veco{\boldsymbol{o}}
\def\vecf{\boldsymbol{f}}
\def\vecpsi{\boldsymbol{\psi}}
\def\vecphi{\boldsymbol{\phi}}
\def\vecbeta{\boldsymbol{\beta}}
\def\Cap{\textsf{cap}}
\def\Exp{\textsf{Exp}}
\def\Erlang{\textsf{Erlang}}
\begin{document}
\title{Stream Sampling for Frequency Cap Statistics}

\author{Edith Cohen \\ {\tt edith@cohenwang.com} \\ Google Research,
  CA, USA \\Tel Aviv
  University, Israel}
\ignore{
 \numberofauthors{1}
\author{
\alignauthor Edith Cohen\\
       \affaddr{Google Research, CA, USA}\\ Tel Aviv University, Israel \\  
       \email{edith@cohenwang.com}
}}

\ignore{
\newfont{\mycrnotice}{ptmr8t at 7pt}
\newfont{\myconfname}{ptmri8t at 7pt}
\let\crnotice\mycrnotice%
\let\confname\myconfname%

\permission{
Permission to make digital or hard copies of part or all of this work for personal or classroom use is granted without fee provided that copies are not made or distributed for profit or commercial advantage and that copies bear this notice and the full citation on the first page. Copyrights for third-party components of this work must be honored. For all other uses, contact the Owner/Author.}
\conferenceinfo{KDD'15,}{August 10-13, 2015, Sydney, NSW,
  Australia.} 
\copyrightetc{ACM \the\acmcopyr}
\crdata{978-1-4503-3664-2/15/08.\\
http://dx.doi.org/10.1145/2783258.2783279}

\clubpenalty=10000 
\widowpenalty = 10000
}


\SetKwFunction{Hash}{Hash}
\SetKwFunction{seed}{seed}
\SetKwFunction{keybase}{KeyBase}
\SetKwFunction{key}{Key}

\IEEEcompsoctitleabstractindextext{
 \begin{abstract}
Unaggregated data, in a streamed or distributed form, is prevalent and comes from diverse sources such as interactions of users with web services and IP traffic. Data elements have {\em keys}  (cookies, users, queries) and elements with different keys interleave.

  Analytics on such data typically utilizes statistics expressed as
  a sum over keys in a specified segment of a function $f$ applied to the 
frequency (the total number of occurrences) of the key. In particular,
 {\em Distinct} is the number of active keys in the segment,
{\em Sum} is the sum of their frequencies, and both are special cases of
{\em frequency cap}
  statistics, which cap the frequency by a parameter $T$. An
  important application of cap statistics is staging advertisement
  campaigns, where the cap parameter is the maximum
  number of impressions per user and the statistics is the total number of qualifying impressions.

The number of
 distinct active keys in the data can be very large, making exact
 computation of queries costly.  Instead, we can estimate
these statistics from a sample.   An optimal sample for
 a given function $f$ would include a key with frequency $w$ with
 probability roughly proportional to $f(w)$.  But while such  a "gold-standard"
 sample can be easily computed over the aggregated data (the set of
 key-frequency pairs), exact aggregation itself is costly, requiring state
 proportional to the number of active keys.  Ideally,
 we would like to compute a sample without exact aggregation.

We present a sampling framework for unaggregated data that uses a
single pass (for streams) or  two passes (for distributed data) and
state proportional to the desired sample size. Our design unifies
classic solutions for Distinct and Sum. Specifically, our
$\ell$-capped samples provide nonnegative unbiased estimates of any
monotone non-decreasing frequency statistics, and close to
gold-standard estimates for frequency cap statistics with
$T=\Theta(\ell)$.  Furthermore, our design facilitates multi-objective
samples, which provide tight estimates for a specified set of
statistics using a single smaller sample.
 \end{abstract}
}

\maketitle

\ignore{
Unaggregated data streams are prevalent and come from diverse
application domains which include
interactions of users with web services and IP traffic.  The elements of
the stream have {\em keys}  (cookies, users, queries) and
elements with different keys interleave in the stream.

  Analytics on such data typically utilizes statistics stated in terms of the frequencies of
keys.  The two most common statistics are {\em distinct keys}, which
 is the number of active keys in a specified segment, and {\em sum}, which
 is the sum of the frequencies of keys in the segment.   These are two
 special cases of {\em frequency cap} statistics, defined as
the sum of frequencies {\em capped}  by a parameter $T$, which are
popular in online advertising platforms.

 As the number of distinct active keys is often very large, the
 computation of exact frequencies can be expensive.  
A common practice is therefore to compute a small
 size sample or sketch, using a small state and a   single pass over
 the data, from which statistics can be approximated.  Existing
 designs,  however, are geared for either distinct or sum
 statistics.

 We propose a  framework for sampling unaggregated
 streams which facilitates the first solution
 for scalable and accurate estimation of general frequency cap statistics.  Our $\ell$-capped samples
provide estimates with statistical guarantees for 
cap statistics with $T=\Theta(\ell)$ and
 nonnegative unbiased estimators for {\em any} monotone non-decreasing
 frequency statistics.  Moreover, our algorithms are simple and
 practical and we demonstrate their effectiveness using extensive
 simulations.  An added benefit of our unified design is facilitating
 {\em multi-objective samples}, which provide estimates with
 statistical guarantees for a specified set of statistics,
 using a single sample.
}



\SetKwArray{Counters}{Counters}

\section{Introduction}

 The data available from many services, such as interactions
 of users with Web services or content, search logs, and IP traffic,
 is presented in an
{\em unaggregated} form. In this model, each data
{\em element} has a {\em key}  from a
 universe ${\cal X}$ and a weight $w > 0$.
Data elements with different keys interleave in a data stream or
distributed storage.

  The {\em aggregated view} of the data is a set of pairs that consists of an {\em active}
  keys $x\in {\cal X}$  (a key that occurred at least once) and the respective 
  total weight $w_x$  of all elements 
  with key $x$. 
  When all element weights are uniform, $w_x$ is 
the number of occurrences of an element with key $x$ in the data.
The weight $w_x$ is often referred to as the {\em frequency} of the key 
(it is proportional to the actual frequency in the data set).

{\em Frequency statistics} of such data are fundamental  to data analytics.
Queries have the form
\begin{equation}  \label{basicquery:eq}
Q(f,H)  \equiv \sum_{x\in {\cal X} \cap H} f(w_x)\ ,
\end{equation}
where $f(w)\geq 0$ is a nonnegative function such that $f(0) = 0$ and
 $H$ is a
  selection predicate that specifies a segment of the key population
  ${\cal X}$.  
Typically $f$ is monotone non decreasing, which means
  that more frequent keys carry at least the same contribution as less
  frequent ones.   Some prominent examples are the {\em
    $p^{\text{th}}$ frequency
    moment}, where $f(x) = x^p$  for $p>0$ \cite{ams99} and
 {\em frequency cap} statistics, where
$f$ is a {\em cap}  function with parameter $T > 0$:
 $$\Cap_T(y)  \equiv \min \{ y,T \}\ .$$  

  Two special cases of both cap statistics and frequency moments, that
are widely studied and applied in big data analytics,  are  {\em
  Distinct} -- the number of distinct (active) keys in the segment ($L_0$
moment or $\Cap_1$, assuming elements weights are $\geq 1$), 
and  {\em Sum} -- the sum of weights of elements with keys in the
segment ($L_1$ moment or $\Cap_\infty$).

Frequency caps  that are in the mid-range mitigate the domination of  the
 statistics by the (typically few) very frequent keys but still provide
 a larger representation of the  more frequent  keys.
Mid-range frequency cap  statistics are prevalent in
online advertising platforms \cite{GoogleFreqCap,FacebookFreqCap}:   A common practice is to allow an advertiser to
 specify a limit to the number of impressions of an ad campaign that
 any individual user is exposed to in a particular duration of time.
Advertisements also typically target only a segment $H$ of users 
(say certain demographics and geographic location).  The statistics $Q(\Cap_T,H)$ is the number of
qualifying opportunities for placing an ad.  These queries are
posed over past data in order to provide an advertiser with a prediction for the
total potential number
of qualifying impressions.  Often, the prediction needs to be computed or
estimated quickly, to facilitate interactive campaign planning.

An exact computation of frequency statistics \eqref{basicquery:eq}
requires aggregating the data by key.
The representation size of the aggregated view, however, and the runtime state 
needed to produce it, are  linear in the number of distinct keys. 
Often, the number of distinct keys is very 
 large and our system can be using the same resources to process 
 many different streams or workloads.   
  To scalably mine such data,  our computation needs to be limited to
one or few passes over elements while maintaining a small runtime
 state (which translates to memory or communication). 
A single pass (stream computation) is necessary when the data is discarded  (such 
as with IP traffic) or when statistics are collected for live 
dashboards.  
Under these constraints, we often must settle for 
a small summary of the data set which
 can provide  approximate answers \cite{FlajoletMartin85,GM:sigmod98,ams99}.

   A solution which only addresses the final summary size 
is to first compute the aggregated view, and  
  then retain a sample for future queries:  For each key 
  we compute a weight equal to 
  $f(w_x)$, and we then apply a weighted sampling scheme
such as Probability Proportion to Size (pps) \cite{Tille:book},
VarOpt \cite{Cha82,varopt_full:CDKLT10}, or bottom-$k$,
which includes 
{\em  successive weighted sampling without replacement} (ppswor) and
sequential Poisson/Priority sampling \cite{Rosen1972:successive,Ohlsson_SPS:1998,DLT:jacm07}.

From the weighted sample  we can  compute {\em approximate}  segment frequency statistics  by applying an 
  appropriate {\em estimator} to the sample.  
 There is a well-understood tradeoff 
  between the sample size $k$ and the accuracy of our approximation. 
For a segment $H$ that has proportion  $q=Q(f,H)/Q(f,{\cal
  X})$ of the statistics value on the general key population, the
coefficient of variation (CV)  is (roughly) $(qk)^{-0.5}$: the inverse of the
square root of $qk$.   
The CV is the standard error normalized by the mean, and corresponds
to the NRMSE (normalized root mean square error).  That is, in order
to obtain NRMSE of $\epsilon=0.1$ (10\%)  on segments that have
at least $q=0.001$ fraction of the total value of the statistics, we need to
choose a sample size of $k=\epsilon^{-2}/q = 10^5$, which is usually
much smaller than the number of distinct keys we might have.  This
also means that we can obtain confidence intervals on our estimates using the actual number of
samples from our segment.
Moreover, this CV   bound is the best we can 
hope for (on average over segments) and will be the gold standard we use
in the design of sampling schemes and estimators in more constrained
settings, which preclude aggregating the data.


The challenge is to produce an effective sample of the data, using 
one or few passes while maintaining state that ideally is of the order
of the desired sample size.
  There is a large body of work on stream sampling schemes designed for
distinct and sum queries.
 The Sample and Hold (SH) family of sampling schemes
 \cite{GM:sigmod98,EV:ATAP02,CCD:sigmetrics12,flowsketch:JCSS2014}
 and another based on VarOpt \cite{incsum:VLDB2009} are suited
 for sum queries. Distinct reservoir sampling of keys
 \cite{Knuth2f,FlajoletMartin85}  is suited for distinct queries.
Both SH \cite{flowsketch:JCSS2014} and distinct sampling support unbiased estimates of all frequency
  statistics  and meet our $(qk)^{-0.5}$ CV upper bound target for
 the particular statistics they are designed for (the claim for SH is established here).
They do not provide, however, comparable statistical guarantees
  for other statistics.


\subsection*{Contributions and Road Map}

  Our main contribution is a general sampling framework for
  unaggregated streams.  The sampling scheme is specified by a
 random scoring function that is applied to stream elements.  The scoring
 function is tailored to the statistics we want to estimate.
Our framework is presented in Section \ref{framework:sec} and we
cast the existing distinct and SH sampling schemes as special cases.

Our framework facilitates the 
  first stream sampling solution with CV upper bound that is close to
  the $(qk)^{-0.5}$ gold standard for general  frequency cap statistics.    
  We offer two basic designs:  A discrete spectrum that only handles 
  uniform elements and a continuous spectrum which handles arbitrary 
  positive element weights.


 Our discrete spectrum is presented in Section \ref{discrete:sec}.
The sampling algorithms SH$_\ell$ are
 parametrized by an integer {\em cap} parameter $\ell$.  When $\ell$
 exceeds the maximum frequency over keys, SH$_\ell$ is equivalent to
classic SH.  For $\ell=1$, it is
 identical to distinct sampling.   
We derive unbiased and admissible estimators for 
any {\em discrete} frequency statistics, that is,  $f$  specified for nonnegative 
integers.

  Our continuous spectrum SH$_\ell$, for a positive real cap
  parameter $\ell$, is presented in Section~\ref{cont:sec}.
When $\ell \gg \max_x w_x$,   SH$_\ell$ is identical to
weighted SH \cite{CCD:sigmetrics12}.    For $\ell\ll \min_x w_x$,
SH$_\ell$ is distinct sampling.
We derive estimators of frequency
   statistics where the function $f$ is continuous and differentiable almost
   everywhere.  Note that most natural statistics, including frequency moments and 
   cap statistics, can be
   expressed as continuous monotone functions, which are
   differentiable almost everywhere.
Surprisingly perhaps, the continuous spectrum,
   which may seem less intuitive than the discrete spectrum, yields an
   elegant and simple  specification of estimators.

  We show that our  estimates of $\Cap_T$ statistics from SH$_\ell$
  samples have CV upper bounded by $O((qk)^{-0.5})$ when
 $T=O(\ell)$.  The CV bound gracefully degrades with
{\em disparity} $\max\{T/\ell,\ell/T\}$ between the sample cap parameter $\ell$ and the statistics cap parameter $T$.
The estimate of any frequency function $f$ is unbiased and for
$f$ that is monotone  non-decreasing, also nonnegative.
This makes our  design very versatile.

Our estimators are derived by expressing sampling as a
transform from frequencies to expected ``sampled frequencies,'' and then inverting the transform.
The transform is a matrix vector product In the discrete case and an
integral transform in the continuous case.
For the latter, the estimator is a 
simple expression in terms of $f$ and its derivative
$f'$.
Since our estimators are the unique inverse of the transform,
they are the minimum variance unbiased nonnegative estimators for the sampling
  scheme, meaning that in terms of variance,  they optimally use
the  information in the sample.  
Our discrete estimators generalize a matrix inversion applied in
\cite{hv:imc03,flowsketch:JCSS2014} to estimate the flow size distribution
from Sampled Netflow and SH IP flow records.  
Our continuous spectrum estimators are novel even 
for the basic weighted SH  scheme, for which previously only
estimators for sum statistics  were provided \cite{CCD:sigmetrics12}.

 In Section \ref{MO:sec} we \onlyinproc{propose a design for
   multi-objective sampling \cite{multiw:VLDB2009} that addresses}\notinproc{ address} applications that
 require estimates with statistical guarantees for multiple, possibly
 all, cap statistics. 
\notinproc{
One solution is to compute a set of samples with
different cap parameters which cover the range of statistics we are
interested in.  A  $\Cap_T$ statistics query can then be estimated
from the sample that has $\ell$ parameter closest to $T$.
We propose a design of a single  {\em
  multi-objective} sample that offers both more efficient sampling and
a better tradeoff of accuracy and sample size.
The design is based on our continuous spectrum and draws on
a multi-objective design for aggregated data \cite{multiw:VLDB2009} and 
the notion of sample coordination \cite{BrEaJo:1972}.
}

Our proposed sampling algorithms and estimators are simple and highly
 practical, despite a technical analysis.
The application resembles that of classic (uncapped) SH,  distinct sampling, and approximate distinct counting algorithms that
 are prevalent in industrial applications \cite{hyperloglogpractice:EDBT2013}.  Section \ref{experiments:sec}
 includes an experimental evaluation which demonstrates  superior
 accuracy versus sample size tradeoffs by using a sample that is
 suited for the statistic.
\onlyinproc{
 Due to a page limit, we could not include most proofs and many
 details and refer the reader to the technical report
 \cite{freqCap:2015}.
}

\section{Preliminaries} \label{prelim:sec}

\SetKwFunction{ElementScore}{ElementScore}
\SetKwFunction{Hash}{Hash}
\SetKwFunction{AdjustCountScore}{AdjustCountScore}
\SetKwFunction{rand}{rand}
We work with
key value data sets that consist of {\em elements} $(x,w)$, where $x$ is a key from a universe ${\cal
  X}$ and $w>0$.  The data set is {\em aggregated} if each key appears
in at most one element and is {\em unaggregated} otherwise.
We define the weight $w_x \equiv   \sum_{h \in
  x} w(h)$ of a key $x$ to be the sum of the weights
of elements  with key   $x$.  If $x$ is not active (there are no
elements with key $x$), we define $w_x=0$.
When element weights are uniform, we define $w_x$ to be the 
number of elements with key $x$.  
The {\em aggregated view} of an unaggregated data set has elements $(x,w_x)$ for all
active keys $x$.

  We are interested in sampling algorithms that process the
  unaggregated data in
  one or few passes while maintaining state that is proportional to
  the sample size.  
Such   algorithms can be scalably executed when elements are streamed
(presented sequentially  to the algorithm) or distributed across
multiple locations.

  We start with a quick review of relevant sampling schemes for aggregated data sets.
  A Poisson sample of a key value dataset  $\{(x,w_x)\}$ is
specified by sampling probabilities $p_x$.  The sample $S$ includes each
$x\in {\cal X}$ with independent probability $p_x$ and has
expected size $\E[|S|]=\sum_x p_x \equiv k$.  To estimate 
a frequency statistics $Q(f,H)$ from the sample, we can apply  the inverse 
probability estimator $\hat{Q}(f,H) = \sum_{x\in H\cap S}
f(w_x)/p_x$ \cite{HT52}.  
 This estimator can be interpreted as a sum of per-key estimates 
that are $f(w_x)/p_x$ if $x\in S$ and $0$ otherwise. 
Note that this estimator can only be applied when
$w_x$ and $p_x$ are available for all $x\in S$.  It is
nonnegative and is unbiased if $p_x   >0$ when $f(w_x)>0$.
It is actually the minimum variance 
 unbiased and nonnegative {\em sum} estimator (sum of per-key
 estimates) for the given probabilities $\{p_x\}$.

   For a dataset $\{(x,w_x)\}$, function $f$, and (expected) sample size $k$, 
one can ask what are the ``optimal'' sampling probabilities. 
  It is  well known that if we sample keys with probability 
  proportional to their contribution $f(w_x)$  (pps),
we minimize the sum of per-key variances $\sum_x f(w_x)^2 
  (1/p_x-1)$.    
With pps, we have  the following statistical guarantee:
For estimates of the statistics
$Q(f,H)$, where the segment $H$ has proportion
$$q =  \frac{Q(f,H)}{Q(f,{\cal X})} = \frac{\sum_{x\in H}
f(w_x)}{\sum_x f(w_x)} $$ of the statistics, the variance 
of our estimate is 
$$\var[\hat{Q}(f,H)] \leq \frac{1}{q k}Q(f,H)^2\ .$$   Thus the 
CV (normalized standard error) is at most $(q k)^{-0.5}$, which
is the best bound we can hope for on average
over segments with proportion $q$.  That is, any scheme that would do
better on some segments, would do worse on others.  Other weighted
sampling schemes we mentioned in the introduction provide this
statistical guarantee with a fixed
sample size $k$\onlyinproc{.}\notinproc{: VarOpt provides the
$(qk)^{-0.5}$ quality with better estimation for $q$ closer to $1$.
Sequential Poisson (Priority) sampling has $(q(k-1))^{-0.5}$ quality \cite{Szegedy05}.}
\ignore{
\footnote{An alternative scheme that obtains a fixed 
  sample size are {\em VarOpt sampling}, which realizes PPS sampling 
  probabilities and also has ``good'' correlations that improve 
  estimates for large segments \cite{Cha82,varopt_full:CDKLT10}.
  Another
class of sampling schemes that obtains a fixed sample size are
bottom-$k$ sampling, which draws for each key a random $\seed{x}$
which depends on $w_x$.  The sample then consists of the $k$ keys with
minimum seed and a threshold value $\tau$, which is the $(k+1)$ smallest
seed.   These schemes include Sequential Poisson (priority) sampling 
 \cite{Ohlsson_SPS:1998,DLT:jacm07, Szegedy:stoc06}, also has CV that
 is at most $(q(k-1))^{-0.5}$ \cite{Szegedy:stoc06} and also ppswor.}}

One of these schemes that is particularly relevant for our treatment of
unaggregated data sets is ppswor:
Keys are drawn successively so that at each 
step the probability that we draw $x$ is proportional to its weight 
relative to the remaining unsampled keys: $f(w_x)/\sum_{y\not\in S} f(w_y)$.
The sampling can be realized by associating with each key 
a random $\seed{x}\sim \Exp[f(w_x)]$ (exponentially
distributed seed with parameter $f(w_x)$)
\cite{Rosen1972:successive}.  
Ordering keys by increasing seed
value turns out to correspond exactly to ppswor sampling order.  
A  {\em fixed-threshold}
sample,  for a pre-specified threshold $\tau$, 
includes all keys with $\seed{x} < \tau$.
Alternatively, we can obtain a {\em fixed size} (bottom-$k$) sample by
taking the $k$ keys with smallest seed values.
In the latter case, it is convenient to define $\tau$ as the $(k+1)$ smallest
seed.

  Finally, we can estimate a statistics $Q(g,H)$ from the
ppswor sample taken for weights $f(w_x)$ as follows.  When we use fixed threshold
sampling, we compute the probability that $x$ is sampled
 $$\Phi_\tau(w_x)  \equiv 
\Pr[\seed{x}< \tau] = 1-e^{-f(w_x)\tau}\ ,$$ 
and apply inverse probability:
\begin{equation} \label{ppsworest}
\hat{Q}(g,H) = \sum_{x\in H\cap S} \hat{g}(w_x \mid \tau),\,
\text{where}\, 
\hat{g}(w_x \mid \tau) \equiv \frac{g(w_x)}{\Phi_\tau(w_x)}\ . 
\end{equation}
Note that $\Phi_\tau(w_x)$ only depends on $w_x$ and $\tau$  (which are 
available for sampled keys). 
When we work with a fixed sample size $k$ and define $\tau$ to be the
$(k+1)$ smallest seed, 
we can interpret $\Phi_\tau(w_x)$ as the 
probability that 
the  key $x$ is sampled, conditioned on fixed randomization 
of other keys.  This means that the estimator \eqref{ppsworest} is unbiased
 \cite{bottomk:VLDB2008}.  
Moreover, the estimates $\hat{g}(w_x \mid \tau)$ obtained for
different keys $x$ have
zero covariances \cite{bottomk:VLDB2008}, which allows us to bound the
variance on segment queries as we would do when sampling with a pre-specified threshold.
It turns out \notinproc{(see Theorem \ref{ppsworcv:thm})}\onlyinproc{(see TR\cite{freqCap:2015})}
that for statistics $\hat{Q}(f,H)$ with proportion $q$, the
CV is at most $(q(k-1))^{-0.5}$, which is essentially (within a single
sample) our ``gold standard'' CV.


A ppswor sample with respect to any function $f(w_x)$
can be computed from a streamed (or distributed)
aggregated data $\{(x,w_x)\}$, using state proportional to the sample size.
This is not generally possible, however, over unaggregated data: For example, there are polynomial
lower bounds on the state needed by a streaming algorithm which approximates frequency moments
$Q(x^p,{\cal X})$ with $p>2$ \cite{ams99}.

\section{Sampling framework} \label{framework:sec}

  We present a framework for sampling unaggregated data sets and
  cast SH and distinct sampling in our framework.
Our algorithms compute a sample $S$ while 
maintaining state, in the form of a cache $S$ of sampled keys, that is 
proportional to the sample size.  
  Each sampling scheme in specified through a random mapping
$\ElementScore{h}$ of elements $h=(x,w)$ to numeric score values.
The distribution of \ElementScore may only depend on the key $x$ and $w$.
We then define the {\em seed} of a key $x$
\begin{equation}\label{seed:Eq}
\seed(x) = \min_{h \text{ with key }  x} \ElementScore{h}
\end{equation} to be  the random variable that is the minimum score of all its elements.  
 As with ppswor, we can obtain a fixed-threshold sample $S = \{x \mid
 seed(x)<\tau \}$, which for a
 given $\tau$  includes all keys   with $\seed(x) < \tau$,  or a
 fixed-size sample, which for a specified  sample size $k$
includes the $k$ keys with smallest seed values and define
$\tau$ to be the $(k+1)$ smallest seed.

   Once we have the sample, we can apply estimators to it to approximate
statistics.  To do so, we need to
   have information on the weight  of sampled keys.
The exact weights $w_x$ of sampled keys $x\in S$  can be computed in a second pass over the data, as we detail below.
We also consider a pure streaming (single sequential pass) setting, where we 
generally settle for
some $c_x \leq w_x$ and we derive estimators that are able to work with 
this information.  
In terms of computation platform, our 2-pass schemes can be fully
parallelized or distributed whereas
our 1-pass (streaming) schemes are not as flexible: They
can be executed on multiple streams that are processed
separately (as with sharding) provided that  all
elements with the same key are processed  at  the same shard.

\subsection{2-pass scheme} \label{2pass:sec}
The first pass identifies 
the set of keys $S$ with smallest seeds.  
For fixed-threshold sampling, our summary contains all keys 
  with scores below $\tau$.  With  fixed-size sampling, the summary
  contain the keys with $k$ smallest minimum scores.
 These summaries are mergeable, that is, from the summaries of
two data sets we can compute a summary of their union.  For
fixed-$\tau$, the merged summary is the union of the keys in the two summaries.  For
fixed-$k$, we compute the seed of each key in the union as the
minimum seed attained in each of the summaries.  We then take the $k$
keys with smallest seeds (retaining their seed values) as a summary of
the union.  Either way the summary sizes never exceed $|S|$, which is
the final sample size.  The second pass, which computes $w_x$ for
$x\in S$ uses summaries that are the weight of each key $x\in S$ the
data set.  We merge two summaries by key-wise addition of weights to
obtain a summary of the union.
  Algorithm \ref{2pass:alg} is 2-pass stream sampling of a fixed sample size
  $k$.   \notinproc{Simple variations handle distributed or parallel
  computation or fixed threshold sampling.}

\begin{algorithm2e}[h] \label{twopassdiscrete:alg}
\caption{2-pass stream sampling: fixed size $k$  \label{2pass:alg}}
\DontPrintSemicolon 
\SetKwArray{Counters}{Counters}
\SetKwFunction{rand}{rand}
\SetKwFunction{Hash}{Hash}
\KwData{sample size $k$, elements $(x,w)$ where $x\in
  {\cal X}$ and $w>0$}
\KwOut{set of $k$ pairs $(x,w_x)$ where $x\in {\cal X}$}
\SetKwFunction{Return}{return}
$\Counters \gets \emptyset$ \tcp*[h]{Initialize sample} \;
$\tau \gets +\infty$ \tcp*[h]{Upper bound on \ElementScore} \;
\tcp{Pass I: Identify the $k$  sampled keys}
\ForEach{stream element $h=(x,w)$}
{
  \If{$x$ is in $\Counters$}
  {
    $\seed{x} \gets \min\{\seed{x}, \ElementScore{h}\}$\;
  }
  \Else 
  {
    $s \gets \ElementScore{h}$\;
    \If{$s < \tau$}
    {
      $\seed{x}\gets s$; $\Counters{x} \gets 0$\;
      \If {$|\Counters| = k+1$}
      {
        $y \gets \arg\max\{\seed{x} \mid x \text{ in } \Counters\}$\;
        $\tau \gets \seed{y}$\;
        delete $\seed(y)$, $\Counters{y}$\;
      }
    }
  }
}
\tcp{Pass II:  Compute $w_x$ for sampled keys}
\ForEach{stream element $h=(x,w)$}
{
  \If{$x$ is in $\Counters$}
  {
    $\Counters{x} \gets \Counters{x}+w$
  }
}
\Return{$\tau$ ;  $(x,\Counters{x})$ for $x$ in $\Counters$}
\end{algorithm2e}

\subsection{Fixed threshold stream sampling}
 A fixed threshold scheme processes an element $h=(x,w)$ as follows.
If $x\in S$ (key $x$ is cached/sampled), then $c_x \gets c_x+w$.  Otherwise,
if $\ElementScore{h} < \tau$, then $x$ is inserted to $S$ and
$c_x \leq w$ is initialized.  The discrete scheme which applies to
uniform weights $w=1$, is provided as  Algorithm \ref{ellSHth:alg}, and
uses the initialization  $c_x\gets 1$ ($\Counters{x}$ in the
pseudocode).
A continuous scheme is presented in Section \ref{cont:sec}.

\begin{algorithm2e}[h]
\caption{Stream sampling with fixed threshold $\tau$ \label{ellSHth:alg}}
\DontPrintSemicolon 
\SetKwArray{Counters}{Counters}
\SetKwFunction{rand}{rand}
\SetKwFunction{Hash}{Hash}
\SetKwFunction{Return}{return}
\SetKwFunction{ElementScore}{ElementScore}
\KwData{threshold $\tau$, stream of elements with key $x\in {\cal X}$}
\KwOut{set of pairs $(x,c_x)$ where $x\in {\cal X}$ and $c_x \in [1,w_x]$}
$\Counters \gets \emptyset$  \tcp*[h]{Initialize $\Counters$ cache}\; 
\ForEach(\tcp*[h]{Process a stream element}){stream element $h$ with
  key $x$}
{
\If{$x$ is in $\Counters$}{$\Counters{x} \gets \Counters{x} +1$;}
\Else{
 \If{$\ElementScore{h}  <
   \tau$}{$\Counters{x} \gets 1$;
 \tcp*[h]{Initialize $c_x$ }}
}
}
\Return{$(x,\Counters{x})$ for $x$ in $\Counters$}
\end{algorithm2e}

\subsection{Fixed size stream sampling}

Algorithm~\ref{ellaSH:alg} provides pseudocode for discrete (uniform
weights) stream sampling with a fixed sample size $k$.

  The algorithm maintain a set $S$ ($\Counters$) of cached keys.  For
  each cached key $x$, it keeps a count $c_x$ ($\Counters{x}$) and a
  lazily computed seed value $\seed(x)$.
When processing an element $h$ with key $x$, we compute
$y\gets \ElementScore{h}$.  If $x\in S$, we increment $c_x$.

Otherwise, if $x\not\in S$ and $y< \tau$, we 
insert $x\in S$ with $c_x\gets 1$ and $\seed(x) \gets y$.  
As a result, we may have
$|S|=k+1$ cached keys.  In this case, we would like to evict from $S$ the key 
with maximum seed.  But the seeds are not fully evaluated yet, in that
the current $\seed(x)$ only reflect the seed up to the first element
that is currently counted in $c_x$.

We repeat the following until a key is evicted.
We  pop from $S$ the key $y$ with maximum 
current seed and set  $\tau \gets \seed(y)$.  
We then iterate decreasing the count $c_y$ and scoring ``uncounted'' elements until
either the count becomes $c_y=0$ and $y$ is evicted or we obtain a
score that is below $\tau$.
 In the latter case, we reinsert $y$ to $S$ with 
$\seed(y)$ equal to that score.

\begin{algorithm2e}[h]
\caption{Stream sampling with fixed size $k$  \label{ellaSH:alg}}
\DontPrintSemicolon 
\SetKwArray{Counters}{Counters}
\SetKwFunction{rand}{rand}
\SetKwFunction{Hash}{Hash}
\SetKwFunction{AdjustCountScore}{AdjustCountScore}
\KwData{sample size $k$, stream of elements with key $x\in {\cal X}$}
\KwOut{set of $k$ pairs $(x,c_x)$ where $x\in {\cal X}$ and $c_x \in [1,w_x]$}
\SetKwFunction{Return}{return}
$\Counters \gets \emptyset$  \tcp*[h]{Initialization} \;
$\tau \gets 1$ \tcp*[h]{Supremum of \ElementScore range} \;
\ForEach{element $h$ with key $x$}
{
  \If{$x$ is in $\Counters$}{$\Counters{x} \gets \Counters{x}+1$\;}
  \Else{
    $score \gets \ElementScore{h}$ \;
    \If{$score < \tau$}
    {
      $\seed{x} \gets score$\; 
      $\Counters{x} \gets 1$\;
      \While {$|\Counters| > k$}
      {
        $y \gets \arg\max\{\seed{x} \mid x \text{ in } \Counters\}$\;
        $\tau \gets \seed{y}$\;
        \While{$\Counters{y} > 0$ and $\seed{y} \geq \tau$}
        {$\Counters{y} \gets \Counters{y}-1$\;
          $\seed{y} \gets \ElementScore{y}$\;      } 
        \If{$\Counters{y} == 0$}{ delete $\Counters{y}$, $\seed{y}$}\;
      }
    }
  }
}
\Return{$\tau$ ;  $(x,\Counters{x})$ for $x$ in $\Counters$}
\end{algorithm2e}

\paragraph*{Analysis}
Clearly, the work of Algorithm \ref{ellSHth:alg} (fixed-threshold sampling) is $O(1)$ per
stream element.  \onlyinproc{We show amortized work $O(1)$ 
  for  Algorithm \ref{ellaSH:alg}  (fixed-size sampling).  (See TR \cite{freqCap:2015} for proof.)}
\notinproc{
We show the following for fixed-size sampling:
\begin{lemma}
The amortized per-element work of  Algorithm \ref{ellaSH:alg}  is $O(1)$.
\end{lemma}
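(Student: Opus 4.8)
The plan is to bound the total running time over the whole stream by an accounting (charging) argument and then divide by the number of elements. Let $n$ denote the number of stream elements. The per-element work naturally splits into two parts. The first is the constant-time bookkeeping performed for every element in the main loop: a lookup of $x$ in the cache $S$, followed by either an increment of $c_x$ (if $x\in S$) or a single evaluation of the scoring function together with the comparison $score<\tau$ (if $x\notin S$). This part is clearly $O(1)$ per element and contributes $O(n)$ overall, so all the difficulty is concentrated in the eviction block, i.e.\ the outer loop executed while $|S|>k$ and its inner rescoring loop, whose length is not bounded by any obvious per-element quantity.

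First I would bound the \emph{total} number of count decrements performed over the entire run. Every decrement of some $c_y$ removes a count unit that must have been created earlier, and each stream element creates at most one such unit: either by initializing $c_x\gets 1$ when a new key enters $S$, or by incrementing $c_x$ when an already-cached key recurs (the rescoring loop never increments). Hence at most $n$ count units are ever created, so the total number of decrements across the whole execution, and with it the total number of score re-evaluations performed inside the inner loop, is at most $n$.

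Then I would bound the number of iterations of the outer eviction loop by showing that every iteration makes progress. When a key $y$ is popped as the current maximum and $\tau$ is set to $\mathrm{seed}(y)$, we have $c_y\ge 1$ (a key is deleted the instant its count reaches $0$) and $\mathrm{seed}(y)=\tau$, so the inner guard ``$c_y>0$ and $\mathrm{seed}(y)\ge\tau$'' holds and at least one decrement is executed. Therefore the number of outer rounds over the whole run is at most the total number of decrements, i.e.\ at most $n$. Each round incurs only the selection of the maximum-seed key and the associated priority-queue maintenance (a pop and possibly a reinsertion of $y$ with its new, smaller seed), which on a heap over the $k$ cached seeds is $O(\log k)$ — a quantity independent of the stream length. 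Summing the three contributions gives $O(n)$ for the main-loop bookkeeping, $O(n)$ for the decrements and their score evaluations, and $O(n\log k)$ for the eviction overhead, for an amortized cost of $O(1)$ per element with the sample size $k$ regarded as a fixed parameter (and $O(1)$ outright if the maximum is maintained in $O(1)$ amortized time).

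The main obstacle is precisely the rescoring loop, which can be long and is not charged to the element currently being processed; the resolution is the two-sided charge above: each decrement is paid for by the increment that created its count unit, and the number of eviction rounds is in turn dominated by the number of decrements because each round is guaranteed to consume at least one unit. I would be careful to verify this ``every round does at least one decrement'' corner case explicitly, since it is exactly what prevents a cascade of ``free'' outer rounds triggered by a single insertion and is the crux of the bound.
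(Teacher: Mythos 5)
Your accounting for the rescoring work coincides with the paper's: every decrement of a counter consumes a unit that was created by an earlier initialization or increment, each stream element creates at most one such unit, so the total number of decrements (and score re-evaluations) is at most $n$; and your explicit check that every outer round performs at least one decrement --- because the popped key has $c_y\geq 1$ and $\seed(y)=\tau$ when the inner guard is first tested --- is correct and is indeed the step that prevents free cascades of rounds. Where you diverge is in the priority-queue accounting. You bound the number of pop/reinsert rounds deterministically by the number of decrements, hence by $n$, and pay $O(\log k)$ per round, which gives $O(\log k)$ amortized per element --- a clean worst-case bound, but not the $O(1)$ the lemma asserts unless $k$ is frozen as a constant, as you yourself concede. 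The paper instead leans on the randomness of the scores: the expected number of eviction steps is at most $k\ln m'$ (the standard bottom-$k$ argument, with $m'$ the number of distinct element scores), and within an eviction step each place-back replaces $\seed(y)$ by a value that is, in expectation, the conditional median of the score distribution below $\tau$, so the expected number of place-backs per eviction step is $O(1)$; consequently only $O(k\ln m')$ priority-queue operations occur in expectation and their $O(\log k)$ cost disappears in the amortization over $n$ elements. In short, the two routes buy different guarantees: yours is unconditional and distribution-free but falls a $\log k$ factor short of the stated bound, while the paper's is an expectation bound that does deliver $O(1)$. To match the lemma as written you would need to import something like the paper's argument that priority-queue touches are rare, not merely that each one is accompanied by a decrement.
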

\begin{proof}
 The algorithm maintains at most $k$ cached keys in a max priority queue,
  accessible by decreasing  $\seed(x)$. When there are fewer than $k$
  active keys, all of them are cached, and otherwise $k$ keys are 
  cached.   The costlier operations are 
eviction steps, which happen when a new key is inserted and the cache 
  is full.  The expected total number of evictions is at most $k\ln
  m'$,  where $m'$ is the expected number of distinct element scores. 
The value of $m'$ depends on our element scoring function but is
always at most the number of elements and at least the number
of distinct keys (since scores of different keys are independent).
In any case, the number of evictions is logarithmic in the stream size.


  In an eviction step, an element $x$ is popped 
and $c_x$ is decremented at least once.  If the final count is not zero, then 
  $x$ is placed back on the queue with a strictly lower $\seed{x}$. 
  The median value of the new $\seed{x}$ in this case is the
  median of the score distribution, provided it is lower than $\tau$.

  The total work on decreasing the 
  counts can be ``charged'' to the processing of the corresponding element, but there is also a 
  possible charge of a priority queue insertion for keys whose count got 
  decreased and did not get removed.  A priority queue operation cost
  is about $O(\log k)$.  We can bound the number of such operations by
  noting that in expectation, $\seed(x)$  decreases so that in
  expectation the  probability of a new element score being below it
  is  halved.   Which means that the expected number of times a key
  can be placed back is at most logarithmic in the number of distinct
  scores its elements can have.  It also means that $k$ ``place
  backs''  corresponds in expectation to a decrease of the threshold to the
  conditional median, which can happen when $m'$ doubles. So in
  expectation there are $O(1)$ place backs per eviction step.
\end{proof}
}

\subsection{Element scoring properties} \label{escoreprop:sec}

We will select the element scoring function according to the
 statistics $f$ we are interested in.  Intuitively, to obtain quality estimates
 (CV upper bound of $(qk)^{-0.5}$), we would need  to
sample each key $x$  with probability roughly proportional to
 $f(w_x)$.  The challenge is to identify when and how we can
 achieve this by a small state streaming algorithm.

 Some properties of our element scoring functions that greatly
 simplify the derivation of estimators are that
$\seed$ values of different keys are independent and that for a particular key $x$,
the distribution of $\seed(x)$ (the minimum element score) depends only
on $w_x$, and not on the arrangement of elements or on the breakdown
  of the weight of each key to different elements.  
Furthermore, we would also want 
the distribution of $c_x$ for $x\in S$ to only depend on $w_x$ and
$\tau$.
We assume here that we work with perfectly random 
numbers and hash functions.

\subsection{Estimation}
As with the ppswor 
estimator reviewed in Section \ref{prelim:sec}, we use estimators that
can be expressed as
a sum over keys $x\in H$ of individual estimates $\hat{f}(w_x)$ of
$f(w_x)$.  The estimate are unbiased and are $0$ for keys $x\not\in S$.

With two-pass sampling (Algorithm \ref{2pass:alg}), we have the weight
$w_x$, and therefore $f(w_x)$, for each sampled key $x\in S$.  
When the seed distribution only depends on $w_x$ and $\tau$, we can compute
the inclusion probability $\Phi_\tau(w_x)$ of a key $x$ from its weight $w_x$
and apply inverse probability estimation as in \eqref{ppsworest}.
In the streaming (single pass) schemes, the sample includes a
partial count $c_x\leq w_x$ for each $x\in S$.  The requirement that
the distribution of $c_x$ only depends on $w_x$ and $\tau$ allows us to 
express sampling 
as a  transform (which depends on $\tau$) from the distribution $w_x$ to the
expected outcome distribution $c_x$.  The derivation of unbiased 
estimators then corresponds to inverting this transform.

  The transforms we obtain have a unique inverse, which means that our
estimators are
the optimal (minimum variance) unbiased and nonnegative sum
estimators.
Because the 2-pass estimators \eqref{ppsworest} are also optimal, and rely on more
information -- the exact value $w_x$ instead 
of a sample from a distribution with parameter $w_x$, the variance 
of the streaming estimators is always at least that of the  2-pass estimator.

The estimators for both the fixed-threshold and the
 fixed sample-size schemes are stated in terms of the threshold
 probability $\tau$. When working with a fixed sample-size $k$, $\tau$ is
 defined as the $(k+1)$st smallest seed.  As with ppswor, 
$\tau$  when defined this way 
plays the same role as the threshold value $\tau$ used in a fixed sampling threshold
  scheme: The probability that a key is sampled, conditioned on fixed randomization
of other keys, is the probability  that its seed value is below the
$k$th smallest seed of other keys.  When the key is
included in the sample, this value is $\tau$.  Similarly, under the same conditioning, 
the distribution of $c_x$ only depends on $\tau$ and
$w_x$,  and is the same one as the respective fixed threshold scheme
with $\tau$.  
Moreover,  the covariance of the
estimates obtained for two different keys $x,y$ is zero.  The argument
is the same as with ppswor \cite{bottomk:VLDB2008} and SH \cite{flowsketch:JCSS2014}. This important
property allows us to bound the variance on estimates of segment
statistics by the sum of variance of estimates for individual keys.

  We now cast two existing basic sampling schemes in our framework:
Distinct, which 
is designed for $\Cap_1$ statistics and
SH, which is designed for $\Cap_\infty$ (sum) statistics.

\subsection{Distinct sampling}  A distinct sample is a uniform
sample of active keys (those with $w_x>0$), meaning that conditioned
on sample size $k$, all subsets of active keys are equally likely.
For an element $h$ with key $x$, we use 
$\ElementScore{h} = \Hash{x}$, where $\Hash{x} \sim U[0,1]$ is a 
random hash function selected before we process the stream.
Note that all elements of the same key $x$ have the same score and 
therefore $\seed(x) \equiv \Hash{x}$.

When we sample with respect to a fixed threshold $\tau$, we retain
  all keys with $\Hash{x} < \tau$. When using a fixed
  sample size $k$, the scheme is
the following (distinct variant) of reservoir sampling
\cite{Knuth2f}:   For each stream element, compute $\Hash(x)$ and
retain the $k$ keys with smallest hash values.  


 With distinct sampling, the value $c_x$ is equal to the exact weight
$w_x$ for each sampled key $x$.  This is because any key that enters
our cache does so on the first element of the key.  If a key is evicted,
(in the fixed $k$ scheme), it can never re-enter.   We also have that
for all keys with $w_x>0$, the probability that $x$ is sampled is
$\Phi_\tau(w_x) \equiv \tau^{-1}$.
We can therefore apply the  inverse probability estimator
\eqref{ppsworest}:
\begin{equation} \label{distinctest}
\hat{Q}(f,H) = \tau^{-1} \sum_{x\in S\cap    H} f(w_x) \ .
\end{equation}

Distinct sampling is optimized for distinct ($\Cap_1$) statistics.  In particular,
$\hat{Q}(\Cap_1, {\cal X})$ has CV upper bounded by $(k-1)^{-0.5}$
\cite{ECohen6f,ECohenADS:PODS2014} and for  a segment $H$ with proportion
$q$ of distinct keys, $\hat{Q}(\Cap_1,H)$ has CV upper bounded by $(q(k-1))^{-0.5}$, as it is
  equivalent to the ppswor estimator for $f(w)\equiv 1$.
For general $\Cap_T$ statistics, however, the CV grows rapidly with
$T$ (we shall see it is $\propto \sqrt{T}$).
 This is because our
uniform sample of active keys can easily miss 
keys with high $f(w_x)$ values which contribute more to the statistics.

\onlyinproc{\newpage}
\subsection{Sample and Hold (SH)}  Classic SH, with fixed sampling
threshold $\tau$ or with fixed sample size $k$,
\cite{GM:sigmod98,EV:ATAP02} is specified for uniform 
element weights, so that $w_x$ is the number of elements with key $x$.
We cast SH in our framework 
using 
$\ElementScore{h} \sim U[0,1]$.  Note that  each key $x$ can have  many independent scores drawn, one
 for each element of $x$. Therefore, the more elements a key has, the more 
  likely it is to be sampled.  The seed is the minimum element score,
  which can be transformed to an exponentially distributed random
  variable with parameter $w_x$.  Therefore, as observed in
  \cite{flowsketch:JCSS2014},  the SH sample is actually a
  ppswor sample with respect to the weights $w_x$
  \cite{Rosen1972:successive}.  
When we use  a second pass (Section \ref{2pass:sec}) to obtain the
exact weights $w_x$, we can apply the ppswor estimator \eqref{ppsworest}.

 With stream sampling (Algorithm~\ref{ellSHth:alg} and Algorithm~\ref{ellaSH:alg}),
the final count of a key $x$ has $c_x \leq w_x$, where 
$w_x- c_x+1$ is geometric  with parameter
  $\tau$, truncated at $w_x+1$ (probability of $c_x=0$ is $(1-\tau)^{w_x}$).
An unbiased estimator for  statistics $Q(f,H)$ from an SH sample is \cite{flowsketch:JCSS2014}:
\notinproc{\footnote{Estimators for a related scheme  (where 
  elements are drawn with replacement) were presented in  \cite{ams99}.}}
\begin{equation} \label{pureSHest}
\hat{Q}(f,H) = \tau^{-1} \sum_{x \in S\cap H} \bigg(f(c_x) - f(c_x-1)(1-\tau)\bigg)\ . 
\end{equation}
\notinproc{\footnote{With fixed-size sampling, we can instead use here the 
stratified   value $\tau =k/(k+\sum_{x\in {\cal X}} w_x - \sum_{x\in S\cap {\cal X}} c_x)$.}}
Note that this estimator is nonnegative when $f$ is monotone non 
decreasing.  This is  because for all $i > 0$, $f(i)-f(i-1)(1-\tau) >
0$.  Surprisingly perhaps, we show here \onlyinproc{(see TR \cite{freqCap:2015})}\notinproc{(Theorem \ref{aSHcv:thm})}
that the 1-pass estimate is not too far from the 2-pass estimate in that
for sum statistics  ($f(x)=x$) the CV is also upper bounded by $(q(k-1))^{-0.5}$.

 For cap statistics with small $T$, however, the SH estimates can have
 CV that far exceeds our $(qk)^{-0.5}$  target:  When the frequency distribution is
 highly skewed,  the ppswor sample would be dominated by heavy
  keys. This means that segments with a large proportion of the $\Cap_T$
  statistics that mostly include keys with low frequencies
would have a disproportionally small representation in the sample and
thus large errors.

\begin{figure}[htbp]
\centerline{
\resizebox{0.4\textwidth}{!}{\input{cumphiarxiv.tex}}
}
\caption{SH$_\ell$ sampling probability per key weight $w$, for
 selected values of $\ell$  ($\tau=0.01$).
 Note that for $w \gg \ell\log \ell$ probability is constant and 
for $w \ll \ell$,  probability is
proportional to $w$.  We can see that the probability is close to
being proportional to $\min\{w,\ell\}$, which is what we want for estimating
$\Cap_\ell$ statistics.
 \label{phicum:fig}}
\end{figure}

\section{The discrete SH spectrum} \label{discrete:sec}

 Our discrete SH spectrum is parametrized by an integer $\ell \geq 1$.  
Distinct sampling is SH$_1$ and classic SH is SH$_\infty$.   In general, SH$_{\ell}$ is designed
  to estimate well frequency cap statistics with $T \approx \ell$.

  The SH$_\ell$ element scoring function for an element $h$ with key $x$ 
draws a uniform random bucket 
$b \sim U[1,\cdots,\ell]$ and returns a hash of the pair 
$\Hash(x,b) \sim U[0,1]$.
Note that  the buckets are independent for different elements with
key $x$.
\begin{equation}\label{delementscore:eq}
\ElementScore{h} \gets \Hash(\lfloor(\ell*\rand{})\rfloor,x)\ .
\end{equation}
Recall that $\seed(x)$  \eqref{seed:Eq} is  the minimum
score of an element with key $x$.
When $\ell=1$, the
seed distribution is uniform for all keys with $w_x>0$.
More generally, we can see that the element scoring \eqref{delementscore:eq}
provides  up to $\ell$ ``independent'' attempts for each key to obtain
a lower seed. That way,  keys with more elements are more likely to have a lower seed
and be sampled, but with diminishing return:  Keys where  $w_x \ll
\min\{\ell,\tau^{-1}\}$ are sampled with probability roughly
proportional to $w_x$ whereas keys with $w_x \gg \min\{\ell,\tau^{-1}\}$ have a roughly constant 
inclusion probability regardless of frequency.  Also note that when
the cap parameter is large relative to the inverse sampling threshold $\ell \gg \tau^{-1}$, SH$_\ell$  is similar to
SH$_\infty$.
 Figure~\ref{phicum:fig} illustrates these properties by showing the sampling probability of a key 
as a function of $w_x$, for selected values of the  parameter $\ell$.

\ignore{
\begin{gnuplot}[terminal=tikz,terminaloptions=font ”10” linewidth 3]
set logscale x
set logscale y
set style data lines
set key left top
set title 'SH$_\ell$ Sampling probability per flow size $\tau=0.01$'
plot [:] [0.005:] 'phiplot_0.01_1000.data' using 1:3 title '$\ell = 1$' lw 3 lc 'red', 'phiplot_0.01_1000.data' using 1:5 title '$\ell = 5$' lw 3 lc 'blue', 'phiplot_0.01_1000.data' using 1:7 title '$\ell = 10$' lw 3 lc 'green', 'phiplot_0.01_1000.data' using 1:9 title '$\ell = 100$' lw 3 lc 'purple', 'phiplot_0.01_1000.data' using 1:11 title '$\ell = 500$' lw 3 lc 'brown'
\end{gnuplot}
}

\subsection{Estimators for discrete SH$_\ell$} \label{discreteSHest:sec}

  The output of our stream sampling algorithm is a threshold 
value $\tau$ and a set $S$ of pairs of the form 
  $(y,c_y)$, where $y\in {\cal X}$ and $c_y \in [1,w_y]$.

\smallskip
\noindent
{\bf Coefficient form.}
  We express our estimators as vectors 
 $\vecbeta^{(f,\tau,\ell)}$, which depends on $f$, the threshold
 $\tau$, and the parameter $\ell$.  The $c$th entry
 $\vecbeta^{(f,\tau,\ell)}_c$
is the contribution to the estimate of a key with count $c$. The
estimate on the statistics $Q(f,H)$ is then
\begin{equation} \label{coefform:eq}
 \hat{Q}(f,H) = \sum_{x \in S\cap H} \beta_{c_x} \ .
\end{equation}

  The distinct sample ($\ell=1$) estimator \eqref{distinctest} is
  expressed using $\beta_i  \equiv  f_i \tau^{-1}$  (using the notation
  $f_i\equiv f(i)$) whereas the SH
  estimator ($\ell = +\infty$)  \eqref{pureSHest}  is expressed using $\beta_i \equiv \tau^{-1}\bigg( f_i-f_{i-1}(1-\tau)
\bigg)$.   We seek estimators of this form for general
$\ell$ that are unbiased,  admissible, and nonnegative
$\vecbeta \geq 0$ when $f$ is non-decreasing.

\smallskip
\noindent
{\bf Probability vector $\vecphi$.}
Let $\phi_i$ be the
  probability that the $i$th element of the same key was the first one to get
  counted by SH$_\ell$.  The vector $\vecphi$ depends on the parameters $\ell$ and $\tau$.

  For $\ell=1$, we have the closed form $\phi_1 \equiv \tau$ and $\phi_i = 0$ for
  $i>1$.  For $\ell = +\infty$, we have $\phi_i = (1-\tau)^{i-1}\tau$.

  To express $\vecphi$ for general $\ell$,  we
 let $a_{ij}$ be the probability that we used exactly
  $j\leq \min\{\ell,i\}$ buckets in the first $i$ elements of a key.

By definition $a_{0i}\equiv 0$ when $i\geq 1$,
$a_{ij}\equiv 0$ when $j > \min\{\ell,i\}$, and $a_{1,0}= 0$.
Otherwise,  $a_{1,1}=1$ and for $i>1$, $j\leq
\min\{\ell,i\}$, the values can be computed from the relation 
\begin{equation} \label{avalues}
a_{ij} = a_{i-1,j}\frac{j}{\ell}+ a_{i-1,j-1}\frac{\ell-j+1}{\ell}\
. 
\end{equation}
\notinproc{
Note that as $i$ grows, the vectors $a_{i\cdot}$ converge to a vector that has all
entries $0$ except  $a_{i\ell}=1$.  It therefore suffices to
compute these entries only until $i=O(\ell \log(\ell))$.
For larger values of $i$ we can use the vector $a_{i\cdot}=(0,\ldots,0,1)$.

} We can now write
$$ \phi_i = \tau \sum_{j=1}^{\min\{i-1,\ell-1\}} a_{i-1,j}(1-\tau)^j
\frac{\ell-j}{\ell}\ .$$ \notinproc{
  Note that it always suffices to compute only the $M$ first entries 
  of $\phi$, where $$M = O(\min\{ \ell \log \ell, \tau^{-1} \log 
  \tau^{-1} \})\ .$$}

\smallskip
\noindent
{\bf A 2-pass estimator.} 
  The probability that a key $x$ is sampled (illustrated in Figure
  \ref{phicum:fig})  is
$$\Phi_{\tau,\ell}(w_x) \equiv \sum_{j=1}^{w_x} \phi_j\ .$$ 
If we use a 2-pass scheme (Section \ref{2pass:sec}), we can apply
the inverse probability estimator \eqref{ppsworest} $\hat{Q}(f,H) = \sum_{x\in S\cap H} \frac{f(w_x)}{\Phi(w_x)}$.

\ignore{
 The inverse probability estimator, which extends the ppswor estimator 
 \eqref{ppsworest} is 
\begin{equation} \label{invprobest:eq}
\hat{Q}(f,H) = \sum_{x\in S\cap H} \frac{f(w_x)}{\Phi(w_x)}\ . 
\end{equation}
}

\smallskip
\noindent
{\bf Inverting the sample counts.}
We now derive a streaming estimator.
We use the notation
$o_i = \{x\in S\cap H \mid c_x=i\}$  (the ``observed'' count) 
for the random variable that  is
the number of keys  $x\in S\cap 
H$ with $c_x = i$.  Let $m_i = \{x\in H \mid w_x=i \}$  be the
number of keys in $H$ with count $w_x=i$. 
Our statistics \eqref{basicquery:eq} can be expressed as $Q(f,H)=\vecf^{T} \vecm$.  
 We have the relation
$\E[o_i] = \sum_{j\geq i} \phi_{j-i+1} m_i$ 
and can write
$$\E[\veco] = Y^{(\vecphi)} \vecm\ .$$
We use the notation $Y^{(\vecv)} $ for an
upper triangular  matrix that corresponds to a vector $\vecv$, such
that $\forall ,j \geq i$, $[Y^{(\vecv)}]_{ij}   \equiv v_{j-i+1}$.

We have $\vecm = (Y^{(\vecphi)})^{-1} \E[\veco]$.   Therefore, from linearity, 
$\hat{\vecm} \equiv (Y^{(\vecphi)})^{-1} \veco$ is an
 unbiased estimator of $\vecm$.  Therefore, to compute the estimate we
 need to invert $Y^{(\vecphi)}$.

  The inverse of the matrix $Y^{(\vecphi)}$ has the same upper triangular
  structure, and can be expressed as   $Y^{(\vecpsi)}$  with   respect to another vector $\vecpsi$.
To compute $\vecpsi$, we consider the constraints  $Y^{(\vecpsi)}
Y^{(\vecphi)} = I$ obtained from the product of the first row of
$Y^{(\vecpsi)}$ with the columns of $Y^{(\vecphi)}$.
We obtain the equations
$\psi_1 = \phi_1^{-1}$,  and
  for $j>1$,
$$\sum_{j=1}^i \psi_j \phi_{1+i-j} = 0\ .$$
This allows us to iteratively solve for $\psi_i$ after computing $\psi_j$
for $j<i$ using
 $$\psi_i = \phi_1^{-1} (-\sum_{j=1}^{i-1} \phi_{1+i-j} \psi_j)\ .$$
For distinct sampling we have  $\psi_1 = \tau^{-1}$ and 
 $\psi_i=0$ for $i>1$.  For SH 
 \cite{flowsketch:JCSS2014} we have
$\psi_1 = \tau^{-1}$, $\psi_2 =
 -(1-\tau)\tau^{-1}$, and $\psi_i = 0$ for $i\geq 2$. In general,
 however, $\vecpsi$ can  have many non-zero entries.

 We show the following \onlyinproc{(see TR \cite{freqCap:2015})}:
\begin{theorem} \label{discreteestcoef:thm}
The estimator $\hat{Q}(f,H)=\sum_{x\in S\cap H} \beta_{c_x}$ , where
$$\beta^{(f,\tau,\ell)}_i \equiv \sum_{j=1}^i \psi_j f_{i-j+1}\ $$
is unbiased.
\end{theorem}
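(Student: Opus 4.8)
The plan is to recognize that the stated estimator is nothing more than the linear functional $\vecf^{T}$ applied to the unbiased histogram estimator $\hat{\vecm}=(Y^{(\vecphi)})^{-1}\veco=Y^{(\vecpsi)}\veco$ that was already constructed above the statement. Since $Q(f,H)=\vecf^{T}\vecm$ is itself a linear functional of the unknown frequency histogram $\vecm$, the natural and only real task is to verify two things: first, that $\hat{\vecm}$ is genuinely unbiased for $\vecm$; and second, that contracting $\vecf$ against $Y^{(\vecpsi)}$ reproduces exactly the coefficients $\beta^{(f,\tau,\ell)}_i$ so that $\vecf^{T}\hat{\vecm}$ coincides with the per-key sum $\sum_{x\in S\cap H}\beta_{c_x}$. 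Unbiasedness then follows immediately from linearity of expectation.

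First I would confirm the transform $\E[\veco]=Y^{(\vecphi)}\vecm$. A key $x$ with true frequency $w_x=j$ ends the stream with recorded count $c_x=i\le j$ precisely when its $(j-i+1)$st element is the first one to be counted, an event of probability $\phi_{j-i+1}$ by the definition of $\vecphi$; a key with $j<i$ cannot produce count $i$. These events are disjoint across $j$, so each key of frequency $j$ contributes $\phi_{j-i+1}$ to $\E[o_i]$, and summing over the $m_j$ such keys gives $\E[o_i]=\sum_{j\ge i}\phi_{j-i+1}m_j$, i.e. the matrix identity with $[Y^{(\vecphi)}]_{ij}=\phi_{j-i+1}$. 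Since $Y^{(\vecphi)}$ is upper triangular with constant nonzero diagonal $\phi_1=\tau$, it is invertible, and its inverse is the upper-triangular $Y^{(\vecpsi)}$ determined by the recurrence $\psi_1=\phi_1^{-1}$, $\sum_{j=1}^{i}\psi_j\phi_{1+i-j}=0$ already derived above. Hence $\E[\hat{\vecm}]=Y^{(\vecpsi)}\E[\veco]=Y^{(\vecpsi)}Y^{(\vecphi)}\vecm=\vecm$, so $\hat{\vecm}$ is unbiased.

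The remaining step is the coefficient identification. Writing $\vecf^{T}\hat{\vecm}=\vecf^{T}Y^{(\vecpsi)}\veco=\sum_{j}(\vecf^{T}Y^{(\vecpsi)})_{j}\,o_{j}$ and recalling that $o_j$ counts the sampled keys in $H$ with $c_x=j$, the estimator equals $\sum_{x\in S\cap H}(\vecf^{T}Y^{(\vecpsi)})_{c_x}$, so it suffices to show $(\vecf^{T}Y^{(\vecpsi)})_{j}=\beta^{(f,\tau,\ell)}_{j}$. Using $[Y^{(\vecpsi)}]_{ij}=\psi_{j-i+1}$ for $i\le j$ gives $(\vecf^{T}Y^{(\vecpsi)})_{j}=\sum_{i=1}^{j}f_i\,\psi_{j-i+1}$, and reindexing by $k=j-i+1$ turns this into $\sum_{k=1}^{j}\psi_k f_{j-k+1}$, which is precisely the stated $\beta^{(f,\tau,\ell)}_{j}$. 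Combining the two steps, $\E[\hat{Q}(f,H)]=\vecf^{T}\E[\hat{\vecm}]=\vecf^{T}\vecm=Q(f,H)$, proving unbiasedness.

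The conceptual content is light because the transform-and-invert machinery is fully in place; the one place demanding care is the index bookkeeping in the last step, where one must respect the convention $[Y^{(\vecv)}]_{ij}=v_{j-i+1}$ and track the reindexing $k=j-i+1$ so that the contraction $\vecf^{T}Y^{(\vecpsi)}$ genuinely yields the convolution $\beta_j=\sum_{k}\psi_k f_{j-k+1}$ rather than its reflection. Once that alignment is checked, I expect no further obstacle: unbiasedness is a one-line consequence of the linearity of the (exact) inverse transform applied to an unbiased $\hat{\vecm}$.
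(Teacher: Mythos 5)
Your proposal is correct and follows essentially the same route as the paper's proof: substitute $\hat{\vecm}=Y^{(\vecpsi)}\veco$ into $Q(f)=\vecf^{T}\vecm$, use linearity and the inverse relation $Y^{(\vecpsi)}Y^{(\vecphi)}=I$ for unbiasedness, and swap/reindex the double sum to read off $\beta_i=\sum_{j=1}^{i}\psi_j f_{i-j+1}$. Your write-up is in fact slightly more careful than the paper's in spelling out why $\E[\veco]=Y^{(\vecphi)}\vecm$ holds and why the triangular matrix is invertible, but the substance is identical.
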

\notinproc{
\begin{proof}
 By substituting $\hat{\vecm} = Y^{(\vecpsi)} \veco$ in 
 $Q(f) = \vecf^T\vecm$, we obtain the estimator 
$\hat{Q}(f) = \vecf^T Y^{(\vecpsi)} \veco$.

  The unbiased estimate for $m_i$ is
$$\hat{m}_i = \sum_{j \geq i}  o_j  \psi_{j-i+1}\ .$$
The unbiased estimate for the contribution of keys with $i$ elements to the
statistics is
$$f_i \hat{m}_i = f_i \sum_{j \geq i}  o_j  \psi_{j-i+1}\ .$$

Therefore, the total contribution, and expressed in terms of $o_i$ is
$$\sum_i \sum_{j\geq i}  f_i o_j  \psi_{j-i+1} = \sum_i o_i \sum_{j=1}^i \psi_j f_{i-j+1}\ .$$
\end{proof}
Since the inverse is unique, our estimator is the only unbiased
estimator of this form and thus also admissible (minimum variance of
this form).
}
\notinproc{
 Note that when we only compute the first $M$ entries of $\psi$, we
limit the sum expression to range from $1$ to $\min\{M,i\}$.   In
applications, the coefficients $\beta$ only need to be computed for
$i$ such that there is at least one key $x$ in the sketch with $c_x = i$.
}

  We show that the estimates are nonnegative when 
$f$ is monotone non-decreasing:
\begin{theorem}
When $f$ is monotone non-decreasing, then for all $\ell$ and $\tau$,
$\vecbeta^{(f,\tau,\ell)} \geq 0$.  
\end{theorem}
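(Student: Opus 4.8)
The plan is to reduce the claim to a single concrete statistic and then exploit a convexity property of the sampling process. First I would write the monotone function as a nonnegative combination of unit steps: set $g_i := f_i - f_{i-1}\ge 0$ (with $f_0=0$), so that $f_i=\sum_{k=1}^i g_k$ and every $g_k\ge 0$. Passing to generating functions, with $F(z)=\sum_i f_i z^{i-1}$, $G(z)=\sum_i g_i z^{i-1}$, and $\Psi(z)=\sum_i \psi_i z^{i-1}$, the theorem's formula $\beta_i=\sum_j \psi_j f_{i-j+1}$ says $B(z)=\Psi(z)F(z)$, and $F=G/(1-z)$ since $\vecf$ is the prefix-sum of $\vecg$. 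Collecting the factor $\Psi/(1-z)$ and reading off coefficients yields the identity
\[
\beta_i=\sum_{k=1}^i g_k\,c_{i-k+1},\qquad c_m:=\sum_{j=1}^m\psi_j .
\]
Since all $g_k\ge 0$, it suffices to show the partial sums $c_m$ of $\vecpsi$ are nonnegative. Equivalently, $c_m$ is exactly $\beta_m$ for the constant statistic $f\equiv 1$ (distinct counting, $\Cap_1$), so this step reduces the general monotone case to one specific function.

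Next I would prove $c_m\ge 0$ by induction on $m$. The matrix identity $Y^{(\vecpsi)}Y^{(\vecphi)}=I$ is, as formal power series, $\Psi(z)\Phi(z)=1$; multiplying by $1/(1-z)$ gives $\Phi(z)\,C(z)=1/(1-z)$ where $C=\Psi/(1-z)$. Reading off the coefficient of $z^{m-1}$ gives $\sum_{b=1}^m \phi_{m+1-b}\,c_b=1$ for every $m\ge 1$. Subtracting the relations for $m$ and $m-1$ eliminates the constant and leaves
\[
\phi_1\,c_m=\sum_{b=1}^{m-1}\bigl(\phi_{m-b}-\phi_{m-b+1}\bigr)\,c_b .
\]
Here $\phi_1=\tau>0$, and the base case is $c_1=\psi_1=\tau^{-1}>0$. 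Hence, provided every coefficient $\phi_{m-b}-\phi_{m-b+1}$ is nonnegative, the induction closes and $c_m\ge 0$. The entire statement therefore rests on one monotonicity fact: the probability vector $\vecphi$ is non-increasing, $\phi_1\ge\phi_2\ge\cdots$.

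The main obstacle is establishing this monotonicity of $\vecphi$, which I expect to be cleanest through the survival probabilities $q_i:=\Pr[\text{the first } i \text{ elements of a key all score}\ \ge \tau]$. Since $\phi_i$ is the probability that element $i$ is the first to score below $\tau$, we have $\phi_i=q_{i-1}-q_i$ (with $q_0=1$), so $\vecphi$ is non-increasing exactly when the sequence $q_i$ is convex: $\phi_i-\phi_{i+1}=q_{i-1}-2q_i+q_{i+1}$. To obtain the convexity I would condition on the random set of ``below-$\tau$'' buckets. Each of the $\ell$ buckets of the key independently has its hash below $\tau$ with probability $\tau$; conditioned on there being $d$ such buckets, each element independently avoids all of them with probability $\rho:=1-d/\ell$, so $q_i=\E[\rho^{\,i}]$ with the expectation taken over $d\sim\text{Binomial}(\ell,\tau)$. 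For each fixed $\rho\in[0,1]$ the map $i\mapsto\rho^{\,i}$ is convex, and convexity survives the mixture; concretely
\[
q_{i-1}-2q_i+q_{i+1}=\E\!\bigl[\rho^{\,i-1}(1-\rho)^2\bigr]\ \ge\ 0 .
\]
This proves $\vecphi$ is non-increasing, which closes the induction giving $c_m\ge 0$, and hence $\vecbeta^{(f,\tau,\ell)}\ge 0$ for every monotone non-decreasing $f$. The only remaining care is boundary bookkeeping (the values $\phi_1=\tau$ and $c_1=\tau^{-1}$ and the empty-sum conventions), which is routine.
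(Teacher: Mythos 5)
Your proposal is correct and follows essentially the same route as the paper's proof: you perform the same Abel summation to write $\beta_i$ as a nonnegative combination $\sum_k (f_k-f_{k-1})\,c_{i-k+1}$ of the prefix sums $c_m=\sum_{j\le m}\psi_j$, and you establish $c_m\ge 0$ by the same induction via the identity $\phi_1 c_m=\sum_{b<m}(\phi_{m-b}-\phi_{m-b+1})c_b$, hinging on $\vecphi$ being non-increasing. The one point where you go beyond the paper is that you actually prove the monotonicity of $\vecphi$ --- by conditioning on the number $d\sim\mathrm{Binomial}(\ell,\tau)$ of below-threshold buckets so that the survival probabilities become a mixture $q_i=\E[\rho^{\,i}]$ with $\rho=1-d/\ell$, whence $\phi_i-\phi_{i+1}=\E[\rho^{\,i-1}(1-\rho)^2]\ge 0$ --- whereas the paper merely asserts this from the interpretation of $\phi_i$ as the probability that the $i$th element is the first counted; this is a worthwhile addition, since that monotonicity is not automatic for arbitrary first-success distributions.
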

\notinproc{
\begin{proof}
 We first claim that any prefix sum of $\vecpsi$ is positive.  That is,
\begin{equation} \label{psiprefix:eq} 
\forall j\geq 1,\ \sum_{i=1}^j \psi_i > 0\ .
\end{equation}
We prove the claim  by induction on $i$.
The base case of the induction has $\psi_1 = \phi_1 ^{-1} \equiv
\tau^{-1} > 0$.
We now show that $\sum_{i=1}^h \psi_i > 0$ if the claim \eqref{psiprefix:eq}
holds for all $j<h$.
We have 
\begin{eqnarray*}
0 &=& \sum_{j=1}^h \psi_j \phi_{h-j+1} \\
 & = & \phi_1 \sum_{i=1}^h \psi_j
+ \sum_{j=1}^{h-1} (\sum_{i=1}^j \psi_i) (\phi_{h-j+1} - \phi_{h-j})\
.
\end{eqnarray*}
Rearranging, we obtain
\begin{eqnarray*}
\phi_1 \sum_{i=1}^h \psi_j &=&
 \sum_{j=1}^{h-1} (\sum_{i=1}^j \psi_i) (\phi_{h-j}-\phi_{h-j+1})\ .
\end{eqnarray*}
We now argue that the right hand side is nonnegative.  In fact, each
summand, and each term in the product are nonnegative.  
Nonnegativity of the sums $\sum_{i=1}^j \psi_i$ follows from the
induction hypothesis for $j<h$.  
Nonnegativity of the differences $\phi_{h-j}-\phi_{h-j+1}$ for $j<h$ 
follows from
$\phi_i \geq 0$ being non-increasing (recall that 
$\phi_i$ is the probability that the $i$th element of a 
key is the first one to be counted).  
Now, the left hand side is nonnegative and $\phi_1 = \tau^{-1} >0$.
Therefore, $\sum_{i=1}^h \psi_j \geq 0$.

  We now use the claim on the prefix sums of $\psi$ to show that the
  estimation coefficients are nonnegative.
\begin{eqnarray*}
\beta_i & = & \sum_{j=1}^i \psi_j f_{i-j+1} \\
& = &  \sum_{h=1}^i (f_h - f_{h-1})
\sum_{j=1}^{i-h+1} \psi_j\ .
\end{eqnarray*}
We now observe that the right hand side is nonnegative.  This follows from
From monotonicity of $f$ and our claim \eqref{psiprefix:eq} on the nonnegativity of the
$\vecpsi$ prefix sums

\end{proof}
}

\section{The continuous SH spectrum} \label{cont:sec}

  We now present our continuous  SH$_\ell$ sampling
  schemes, which generalizes  SH with weighted
  updates ($\ell = \infty$) \cite{CCD:sigmetrics12}.
The continuous design offers the following advantages over the
  discrete design even when applied to uniform weights.
First,  fixed sample-size sampling no longer requires explicitly maintain a lazy $\seed(x)$
for cached keys as we did in Algorithm \ref{ellaSH:alg}: The lazy
value is implicitly captured by the current threshold $\tau$.  Second,
the estimator 
can be expressed in terms of $f$ and its derivative.
Lastly, the continuous spectrum facilitates multi-objective samples
(Section \ref{MO:sec}).

 Our input is a stream of elements $h=(x,w)$ with key $x$ and a weight
 $w>0$.  
Our element scoring is as follows:
 Each key has a {\em base hash }
$\keybase(x) \sim U[0,1/\ell]$, that is fixed for the computation and
is uniformly distributed in $[0,1/\ell]$:
$\keybase(x) \gets \Hash(x)/\ell$.
An element $h=(x,w)$ is assigned a score 
by first drawing $v \sim \Exp[w]$ and then returning $v$ if $v>1/\ell$
and $\keybase(x)$ otherwise:
{\small
\begin{equation} \label{elementscorecont:eq}
\ElementScore{h} = (v \sim \Exp[w]) \leq 1/\ell\, \,  ? \, 
\, \keybase(x)\,   : \, v \ .
 \end{equation}
}
The random variables
$\Exp[w]$ are independent for different elements
and are also independent of $\keybase(x)$.

  We now consider the distribution of $\seed(x)$ (the minimum 
  element score of stream elements with key $x$).  
\onlyinproc{We use properties of the exponential distribution (see
  TR \cite{freqCap:2015} for details) and show that
$$\seed(x) \sim (v \sim \Exp[w_x]) \leq 1/\ell\, \,  ? \, 
\, U[0,1/\ell]\,   : \, v \ .$$}
\notinproc{
We show that 
$\seed(x) \sim U[0,1/\ell]$ with probability 
$(1-e^{-w_x/\ell})$ and $\seed(x) \sim 1/\ell + \Exp[w_x]$ otherwise:
\begin{lemma}
$$\seed(x) \sim (v \sim \Exp[w_x]) \leq 1/\ell\, \,  ? \, 
\, U[0,1/\ell]\,   : \, v \ .$$
\end{lemma}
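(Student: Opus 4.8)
The plan is to case-split on whether any element of key $x$ triggers the \emph{fallback} branch of the scoring rule \eqref{elementscorecont:eq}, exploiting two elementary facts about the exponential distribution: a minimum of independent exponentials is again exponential with the summed rate, and memorylessness. The key structural observation is that a score produced by the fallback branch equals $\keybase(x)$ and is therefore $\le 1/\ell$, whereas a score produced by the other branch is the raw draw $v$ and is strictly above $1/\ell$. Hence the two branches never compete: as soon as a single element falls back, the seed is pinned to $\keybase(x)$.

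Concretely, I would write the elements of key $x$ as carrying weights $w_1,\dots,w_n$ with $\sum_i w_i = w_x$, draw $v_i \sim \Exp[w_i]$ independently, and let $E$ be the event that at least one $v_i \le 1/\ell$. On $E$, every fallback score equals $\keybase(x)$ while every non-fallback score exceeds $1/\ell \ge \keybase(x)$, so $\seed(x)=\keybase(x)$. On the complement $E^c$ no element falls back, so each score is its own $v_i$ and $\seed(x)=\min_i v_i$. The probability of $E^c$ is $\prod_i \Pr[v_i>1/\ell] = \prod_i e^{-w_i/\ell} = e^{-w_x/\ell}$, which already shows that the split depends on the elements only through the aggregate $w_x$ and matches the probability $e^{-w_x/\ell}$ that the $\Exp[w_x]$ variable in the statement exceeds $1/\ell$.

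It remains to identify the two conditional laws. Since $E$ is measurable with respect to the $v_i$ alone and $\keybase(x)$ is independent of them, conditioning on $E$ leaves $\keybase(x)\sim U[0,1/\ell]$ unchanged, which is the uniform component. For the complementary regime I would invoke memorylessness: conditioned on $v_i>1/\ell$, the overshoot $v_i-1/\ell$ is again $\Exp[w_i]$, and the overshoots remain independent, so $\min_i v_i - 1/\ell \sim \Exp[\sum_i w_i]=\Exp[w_x]$, i.e. $\seed(x)\sim 1/\ell+\Exp[w_x]$ on $E^c$. Assembling the two pieces with weights $1-e^{-w_x/\ell}$ and $e^{-w_x/\ell}$ reproduces exactly the stated mixture, because conditioning an $\Exp[w_x]$ variable on exceeding $1/\ell$ yields precisely the $1/\ell+\Exp[w_x]$ law.

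The only genuinely delicate point is the structural claim that a fallback pins the seed to $\keybase(x)$: it rests on the strict inequality $v>1/\ell$ on the non-fallback branch together with $\keybase(x)\le 1/\ell$, so that fallback scores are never dominated by raw scores, and on all fallback scores of a fixed key coinciding, since they share the single base hash. Once this is in place the rest is routine manipulation of exponentials, and the desired consequence — that $\seed(x)$ depends on the elements only through $w_x$ — falls out automatically from the aggregation of rates, both in $\Pr[E^c]$ and in the minimum on $E^c$.
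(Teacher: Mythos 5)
Your proposal is correct and follows essentially the same route as the paper's proof: both condition on whether the minimum of the per-element exponentials falls below $1/\ell$ (equivalently, whether any element triggers the fallback to $\keybase(x)$), use the fact that a minimum of independent exponentials is exponential with the summed rate to reduce everything to $w_x$, and invoke memorylessness for the tail case. Your write-up is somewhat more explicit about why a fallback pins the seed to $\keybase(x)$ (the ordering of fallback versus raw scores) and about the independence of $\keybase(x)$ from the conditioning event, but the underlying argument is the same.
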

\begin{proof}
If at least one of the random variables $\Exp[w(h)]$ for
$h\in x$ is smaller than $1/\ell$, then $\seed(x)=\keybase{x}$.
The distribution of the minimum $\min_{h\in x} \Exp[w(h)]$ is
$\Exp[\sum_{h \in x} w(h)]=\Exp[w_x]$
(the distribution of the minimum of independent 
exponentially distributed random variables with sum of parameters
$w_x$  is exponentially distributed with parameter $w_x$).
So we obtain that if $y\sim \Exp[w_x]$ is such that $y<1/\ell$, which
happens with probability $1-e^{-w_x/\ell}$,  the
seed is $\keybase{x}$.  Otherwise, $\seed(x)=y$.
We now use the memoryless property of the exponential distribution,
which implies that the
conditional distribution of $y-1/\ell$ given that $y>1/\ell$ is
$1/\ell+ \Exp[w_x]$.
So with probability $e^{-w_x/\ell}$, the distribution is $1/\ell + \Exp[w_x]$.
\end{proof}
}
 Note that the element scoring satisfies our requirement (Section
 \ref{escoreprop:sec}) that 
the distribution of $\seed(x)$ depends only on $w_x$.  
 Qualitatively, when $w_x \ll \ell$, the seed is close to 
  exponentially distributed with parameter $w_x$, which is ppswor.  When $w_x \gg 
  \ell$, the seed is uniform, which results in distinct
  sampling. We obtain the property that the
  sampling probability  a key $x$ is roughly proportional to
  $\Cap_\ell(w_x)$, which is needed to approach the ``gold standard'' CV.


\subsection{2-pass estimator}

  Consider 2-pass sampling (Section \ref{2pass:sec}) with
  our element scoring function \eqref{elementscorecont:eq}.
 For estimation, we need to compute the probability
   $\Phi_{\tau,\ell}(w_x) = \Pr[\seed(x) < \tau]$ of a key with weight
  $w_x$ in a sample with parameters $\ell$ and $\tau$.
If $\tau\ell <1$, then a key is included if $\Exp[w_x] < 1/\ell$ and
then $\keybase{x}< \tau$.  These two events are independent and have
joint
probability $(1-e^{-w_x/\ell}) \tau\ell$.  If $\tau\ell \geq 1$ then
a key is included if $\Exp[w_x] < \tau$, which has probability
$(1-e^{-\tau w_x})$.  We can express the combined probability as
\begin{equation} \label{contcumphi}
\Phi_{\tau,\ell}(w_x) \equiv (1-e^{-w_x \max\{1/\ell,\tau\}})\min\{1,\tau\ell\}\
.
\end{equation}
We can then apply inverse probability \eqref{ppsworest} to 
estimate a segment statistics
  $$\hat{Q}(f,H)=\sum_{x\in S\cap H}  \frac{f(w_x)}{\Phi_{\tau,\ell}(w_x)}\ .$$

  We show the following (Proof provided  in \onlyinproc{TR \cite{freqCap:2015}}\notinproc{Appendix~\ref{2passcontashell:sec}}):
\begin{theorem} \label{cv2passSHell:thm}
The CV of estimating $Q(\Cap_T,H)$ from an SH$_\ell$ sample which provides
exact weights $w_x$ ($x\in S$) is
at most 
$$\bigg(\frac{e}{e-1} \frac{\max\{T/\ell,\ell/T\}}{q(k-1)}\bigg)^{0.5}\ .$$
\end{theorem}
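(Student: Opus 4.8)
The plan is to follow the template of the ppswor bound (Theorem~\ref{ppsworcv:thm}) and to charge the mismatch between the sampling weights and the statistics weights $\Cap_T$ to a single multiplicative ``disparity'' factor. The entry point is the zero-covariance property established for the framework: the per-key estimates have pairwise zero covariance, so $\var[\hat{Q}(\Cap_T,H)]=\sum_{x\in H}\var[\hat{\Cap}_T(w_x)]$, and it suffices to bound each per-key variance and sum. The structural fact that makes this tractable is that, by \eqref{contcumphi}, for a threshold $\tau\le 1/\ell$ the inclusion probability is \emph{exactly linear}, $\Phi_{\tau,\ell}(w)=\tau\,\rho(w)$ with $\rho(w)\equiv \ell(1-e^{-w/\ell})$. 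Hence on $[0,1/\ell]$ each seed carries constant density $\rho(w_x)$, the superposition of all seeds is a Poisson process of rate $\sum_x\rho(w_x)$, and the SH$_\ell$ sample restricted to this range is distributed exactly as the initial segment of a ppswor sample taken with respect to the weights $\rho(w_x)$.

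I would first condition on the seeds of all keys other than $x$ and let $\tau_x$ be the $k$th smallest among them, so that $x$ is sampled with probability $\Phi_{\tau_x,\ell}(w_x)$ and $\var[\hat{\Cap}_T(w_x)\mid\tau_x]=\Cap_T(w_x)^2\bigl(1/\Phi_{\tau_x,\ell}(w_x)-1\bigr)\le \Cap_T(w_x)^2/\Phi_{\tau_x,\ell}(w_x)$. Using $\Phi_{\tau_x,\ell}(w_x)=\tau_x\rho(w_x)$ together with the fact that $\tau_x$ is the $k$th point of a Poisson process of rate $\sum_{y\ne x}\rho(w_y)$, hence Gamma-distributed with $\E[1/\tau_x]=\bigl(\sum_{y\ne x}\rho(w_y)\bigr)/(k-1)$, yields the per-key bound $\var[\hat{\Cap}_T(w_x)]\le \dfrac{\Cap_T(w_x)^2}{\rho(w_x)}\cdot\dfrac{\sum_y\rho(w_y)}{k-1}$.

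Summing over $x\in H$ and writing $\Cap_T(w)=h(w)\,\rho(w)$ with $h(w)\equiv \Cap_T(w)/\rho(w)$ separates the two sums: $\sum_{x\in H}\Cap_T(w_x)^2/\rho(w_x)\le h_{\max}\,Q(\Cap_T,H)$ and $\sum_y\rho(w_y)\le Q(\Cap_T,{\cal X})/h_{\min}$, where $h_{\max}=\sup_w h(w)$ and $h_{\min}=\inf_w h(w)$. This gives $\var[\hat{Q}(\Cap_T,H)]\le \dfrac{h_{\max}/h_{\min}}{k-1}\,Q(\Cap_T,H)\,Q(\Cap_T,{\cal X})$, i.e. $\mathrm{CV}^2\le \dfrac{h_{\max}/h_{\min}}{q(k-1)}$. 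It then remains to show $h_{\max}/h_{\min}\le \tfrac{e}{e-1}\max\{T/\ell,\ell/T\}$. Setting $s=T/\ell$ and analyzing $h(w)=\min\{w,T\}/\bigl(\ell(1-e^{-w/\ell})\bigr)$, I find $h$ increases on $(0,T]$ to its maximum $h_{\max}=s/(1-e^{-s})$ at $w=T$ and decreases back toward $s$ as $w\to\infty$, with $\inf_w h=\min\{1,s\}$; a short monotonicity check of $s/(1-e^{-s})$, which equals $e/(e-1)$ at $s=1$, delivers the claimed factor in both the $T\ge\ell$ and $T<\ell$ cases and is the source of the constant $e/(e-1)$.

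The main obstacle is the per-key step $\E[1/\tau_x]\le \bigl(\sum_y\rho(w_y)\bigr)/(k-1)$, which is clean only while $\tau_x\le 1/\ell$, the regime in which the seeds genuinely form a Poisson process of rate $\rho$. I would argue this is precisely the regime captured by Theorem~\ref{ppsworcv:thm}, since the superposition of SH$_\ell$ seeds on $[0,1/\ell]$ coincides in distribution with the initial segment of the ppswor process for weights $\rho$, so the order-statistic estimate and the resulting factor $(k-1)^{-1}$ transfer. The complementary event $\tau_x>1/\ell$ occurs only when fewer than $k$ of the other keys have seed below $1/\ell$, i.e. when the effective number of keys distinguishable at resolution $\ell$ is comparable to or below $k$; there the sample is already nearly exhaustive and the variance is correspondingly small, but making this rigorous (the inclusion probability $1-e^{-\tau_x w_x}$ is no longer uniformly comparable to $\tau_x\rho(w_x)$ just above $\tau_x=1/\ell$) is the delicate part and must be handled by a separate comparison rather than by the Poisson identity above.
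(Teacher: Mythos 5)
Your overall architecture matches the paper's: zero covariances reduce the problem to per-key variances, each per-key variance is bounded conditionally on the threshold, and the threshold distribution is controlled by an Erlang-type bound as in Theorem~\ref{ppsworcv:thm}. Your bookkeeping of the disparity constant is a genuinely nice reorganization: the paper splits the factor $\frac{e}{e-1}\max\{T/\ell,\ell/T\}$ between a seed-domination lemma (contributing $\frac{e}{e-1}\max\{1,\ell/T\}$) and a conditional-variance lemma (contributing $\max\{1,T/\ell\}$), whereas you extract the whole factor as $\sup_w h(w)/\inf_w h(w)$ for $h(w)=\Cap_T(w)/\rho(w)$ with $\rho(w)=\ell(1-e^{-w/\ell})$, and your monotonicity analysis of $\min\{u,s\}/(1-e^{-u})$ correctly yields $h_{\max}/h_{\min}\le\frac{e}{e-1}\max\{T/\ell,\ell/T\}$. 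That part is a clean and correct explanation of where $e/(e-1)$ comes from.

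The gap is in the threshold analysis, and it is not a removable corner case. First, the superposition of the other keys' seeds on $[0,1/\ell]$ is not a Poisson process: each key contributes at most one point, present with probability $1-e^{-w_y/\ell}$ and uniform when present, so the count in $[0,z]$ is Poisson-binomial and $\tau_x$ is not Gamma-distributed. The identity $\E[1/\tau_x]=\bigl(\sum_{y\ne x}\rho(w_y)\bigr)/(k-1)$ therefore does not follow; even in the pure ppswor setting the paper only obtains stochastic \emph{domination} of the $k$th smallest seed by $\Erlang(W,k)$ (Lemmas~\ref{Erlang:lemma} and~\ref{Bdomination:lemma}), and here the analogous domination must be proved for the mixed uniform/exponential seed law, which is exactly what Lemma~\ref{seedrel:lemma} does --- at the cost of the multiplicative factor you are trying to recover elsewhere. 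Second, the regime $\tau_x>1/\ell$, which you defer to ``a separate comparison,'' is the entire classic-SH end of the spectrum: when $\ell\gg\max_x w_x$ essentially every seed exceeds $1/\ell$, the inclusion probability is $1-e^{-\tau w_x}$ rather than $\tau\rho(w_x)$, and none of your Poisson/linearity structure applies. The paper avoids a case split on $\tau$ at the level of the threshold distribution by proving a single domination statement for the full seed density $b_{\ell,w}$ on all of $[0,\infty)$ (including the exponential tail $we^{-wy}$ beyond $1/\ell$) and then handling the two regimes $\tau\ell>1$ and $\tau\ell\le 1$ only inside the elementary conditional-variance bound (Lemma~\ref{twopassaSHellvar}). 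Without an argument of that kind, your proof establishes the claimed CV bound only when the $k$th smallest competing seed falls below $1/\ell$, which is roughly half the parameter space and excludes the sum-statistics limit the theorem must cover.
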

When $\ell=T$, we obtain a bound of at most $1.26$ times
the CV bound of $(\frac{\max\{T/\ell,\ell/T\}}{q(k-1)})^{0.5}$ we
can obtain
for samples computed over the aggregated data (Section \ref{prelim:sec}).
When $\ell = \Theta(T)$, the CV is 
$O((q(k-1))^{-0.5})$ and the upper bound degrades smoothly with the
{\em disparity} $\max\{T/\ell,\ell/T\}$ between $\ell$ and $T$.  
Also note
that the increased CV due to the constant
$e/(e-1)$ and the disparity arise from a worst-case analysis and are
not inherent.  

\notinproc{
Figure \ref{incprob:fig}  shows the relative inclusion probabilities as a
function of the weight $w_x$ for SH$_{10}$, and pps and ppswor with
respect to $\Cap_{10}(w_x)$.  The
``gap'' between the ratios for SH$_{10}$ and for the pps/ppswor (which
are realizable on aggregated data) illustrates our loss relative to
the ``gold standard'' CV.  We can see that the gap is larger
for weights that are around the cap parameter of $10$, and maximizes
at the ratio $(1-1/e)$. In this sense,
data with many keys with weight close to the cap thresholds are the
``worst case'' for the variance.
\begin{figure}[htbp]
\centerline{
\resizebox{0.4\textwidth}{!}{\input{inclusionprobarxiv.tex}}
}
\caption{(Relative) inclusion probabilities as a
function of the weight $w_x$ for SH$_{10}$, pps, and ppswor, computed with respect
to $\Cap_{10}$.  The $y$ axis shows the inclusion probability as a
function of the maximum inclusion probability. For all schemes we
normalized the threshold to be such that the inclusion
probability maximizes at $0.01$.
 \label{incprob:fig}}
\end{figure}
}

\subsection{1-pass algorithms}

The streaming (1-pass)  algorithms compute a sample $S$  of cached keys and
a value $c_x \leq w_x$ for each $x\in S$.  

Algorithm \ref{cellSHth:alg} performs
fixed threshold sampling.  When processing an element $h=(x,w)$ with a
cached key, we update $c_x \gets c_x+w$.  Otherwise, we compute
the weight $\Delta$ that would be needed for the score to be below
 $\max\{1/\ell,\tau\}$.  If $w\leq \Delta$, we break.
 Otherwise, if $\tau<1/\ell$, we break if $\keybase{x} \geq  \tau$.
  Finally, we initialize a couter $c_x \gets w-\Delta$.
Intuitively, the score is continuously assigned to the
  mass $w_x$.
The value $c_x \leq w_x$ is the weight observed after the point in which the 
score gets below $\tau$.

\begin{algorithm2e}[h]
\caption{Continuous  SH$_\ell$ stream sampling: fixed $\tau$ \label{cellSHth:alg}}
\DontPrintSemicolon 
\SetKwArray{Counters}{Counters}
\SetKwFunction{rand}{rand}
\SetKwFunction{Hash}{Hash}
\SetKwFunction{Return}{return}
\SetKwFunction{ElementScore}{ElementScore}
\KwData{threshold $\tau$, stream of elements $(x,w)$ where
$x\in {\cal X}$ and $w > 0$}
\KwOut{set of pairs $(x,c_x)$ where $x\in {\cal X}$ and $c_x \in
  (0,w_x]$}
$\Counters \gets \emptyset$  \tcp*[h]{Initialize $\Counters$ cache}\; 
\ForEach(\tcp*[h]{Process a stream element}){stream element $(x,w)$}
{
  \If{$x$ is in $\Counters$}{$\Counters{x} \gets \Counters{x} +w$;}
  \Else{
    $\Delta \gets
    -\frac{\ln(1-\rand())}{\max\{\tau,1/\ell\}}$ \tcp*[h]{$\sim \Exp[\max\{\tau,1/\ell\}]$}\;
    \If(\tcp*[h]{initialize counter for $x$}){$\keybase{x} <
      \min\{\tau,1/\ell\}$ and $\Delta < w$}
    {$\Counters{x} \gets w-\Delta$;}
  }
}
\Return{$(x,\Counters{x})$ for $x$ in $\Counters$}
\end{algorithm2e}

Fixed sample size sampling  is provided as Algorithm \ref{cellaSH:alg}. 
To maintain a fixed size sample, the threshold is decreased when there are
$k+1$ cached keys, to the point needed to evict a key. 
The algorithm ``simulates'' the end result of 
working with the lower threshold to begin with.

The eviction step is as follows.  We draw and fix some ``randomization'' and 
compute for each cached key the threshold needed to evict the key. 
The randomization for 
key $x$, in the form of $u_x$ and $r_x$.  We then compute $z_x$ which is 
the maximum threshold value that is needed to evict $x$ with respect 
to that randomization.  
We then take 
the new threshold to be the maximum $z_x$ over keys.  One key (the one 
with maximum $z_x$) is evicted.  For remaining keys, $c_x$ ($\Counters{x}$) is 
updated according to the same $u_x,r_x$.

  We elaborate on how $z_x$ is determined when the current threshold is
  $\tau$.  The key $x$ can be viewed as having 
  a score (computed to the point the key entered the cache) that is at 
  most $\tau$.   We can consider the distribution of the score given
  that it is at most $\tau$: With the randomization, we can take it
as $u_x\tau$.   A necessary requirement for $x$ to be evicted is that the new 
  threshold $\tau^*$ is below $u_x\tau$, so we have  $z_x < u_x\tau$. 
Conditioned on $\tau^*< u_x\tau$,  we can treat 
this as processing  an element with a  new (uncached)  key $x$ and 
weight $c_x$.  We consider the threshold value $\tau^*$ needed for the key 
to enter the cache.  We simply reverse the entry rule:
If $-\ln(1-r_x)/c_x \geq \ell^{-1}$, then the key would enter 
the cache when $\tau^* \geq -\ln(1-r_x)/c_x$. 
If $-\ln(1-r_x)/c_x <  \ell^{-1}$, then the key would enter 
the cache if and only if $\tau^* \geq \keybase{x}$, with count 
$c_x- \ell(-\ln(1-r_x))$.

\begin{algorithm2e}[h]
\caption{Continuous SH$_\ell$ stream sampling: fixed $k$ \label{cellaSH:alg}}
\DontPrintSemicolon 
\SetKwArray{Counters}{Counters}
\SetKwFunction{rand}{rand}
\SetKwFunction{Hash}{Hash}
\SetKwFunction{Return}{return}
\SetKwFunction{Delete}{delete}
\SetKwFunction{ElementScore}{ElementScore}
\KwData{sample size $k$, stream of elements of the form $(x,w)$
  with key $x\in {\cal X}$ and $w > 0$}
\KwOut{$\tau$; set of pairs $(x,c_x)$ where $x\in {\cal X}$ and $c_x \in (0,w_x]$}
$\Counters \gets \emptyset$; $\tau \gets \infty$  \tcp*[h]{Initialize
  cache}\; 
\ForEach(\tcp*[h]{Process element}){stream element $(x,w)$}
{
  \If{$x$ is in $\Counters$}{$\Counters{x} \gets \Counters{x} +w$;}
  \Else{
    $\Delta \gets -\frac{\ln(1-\rand())}{\max\{\ell^{-1},\tau
      \}}$\tcp*[h]{$\sim \Exp[\max\{\ell^{-1},\tau\}]$}\;
    \If (\tcp*[h]{insert $x$}){$\Delta < w$ and ($\tau\ell > 1$ or $\tau\ell \leq 1$ and $\keybase{x}< \tau$)}{
      $\Counters{x} \gets w-\Delta$\;
      \If(\tcp*[h]{Evict a key}){$|\Counters| = k+1$}{
        \If{$\tau\ell > 1$}{
          \ForEach{$x\in \Counters$}{
            $u_x \gets \rand{}$; $r_x \gets \rand{}$ \;
            $z_x \gets \min\{\tau u_x,
            \frac{-\ln(1-r_x)}{\Counters{x}}\}$\tcp*[h]{eviction threshold of $x$}\;
            \If{$z_x \leq \ell^{-1}$}{$z_x \gets \keybase{x}$}
          }
          $y \gets \arg\max_{x\in \Counters} z_x$; Delete
          $y$ from $\Counters$ \tcp*[h]{key to evict} \;
          $\tau^* \gets z_y$\tcp*[h]{new  threshold}\;
          \ForEach(\tcp*[h]{Adjust counters according to $\tau^*$}){$x \in \Counters$}{
            \If{$u_x > \max\{\tau^*,\ell^{-1}\}/\tau$}{$\Counters{x} \overset{-}{\gets}
              \frac{-\ln(1-r_x)}{\max\{\ell^{-1},\tau^*\}}$}
          }
          $\tau \gets \tau^*$;  \Delete $u,r,z,b$  \tcp*[h]{deallocate
            memory}\;
        }
        \Else(\tcp*[h]{$\tau\ell \leq 1$}){
          $y \gets \arg\max_{x\in \Counters} \keybase{x}$; Delete
          $y$ from $\Counters$ \tcp*[h]{evict} \;
          $\tau \gets \keybase{y}$\tcp*[h]{new  threshold}\;
        }
      }
    }
  }
}
\Return{$\tau$;\,  $(x,\Counters{x})$ for $x$ in $\Counters$}
\end{algorithm2e}

 We now express the distribution of $c_x$ ($\Counters{x}$) and verify that it
 satisfies our requirement that for any key $x$,  it 
only depends on $w_x$, $\ell$,  and $\tau$.  
Recall that for fixed-threshold  SH$_\ell$ we use the specified $\tau$ whereas with
fixed-cache size SH$_\ell$, the statement is conditioned on the
randomization on all other keys, which determines $\tau$ when $x\in
S$.
The proof is provided in \onlyinproc{TR \cite{freqCap:2015}}\notinproc{Appendix \ref{contdistc:sec}}.
\begin{theorem} \label{cdensity:thm}
With  fixed-$\tau$ SH$_\ell$ (Algorithm \ref{cellSHth:alg}) and
fixed-$k$ SH$_\ell$ (Algorithm \ref{cellaSH:alg}),
for any key $x$, 
$c_x \sim \max\{0, w_x-\phi\}$,
where $\phi$ has density
$$\phi(y)= \tau \exp(- y \max\{1/\ell,\tau\})\  $$
in the interval $y\in [0,w_x]$.
\end{theorem}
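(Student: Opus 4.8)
The plan is to prove the statement first for the fixed-threshold scheme (Algorithm \ref{cellSHth:alg}) and then reduce the fixed-size scheme (Algorithm \ref{cellaSH:alg}) to it by conditioning on the randomness of the other keys. Throughout write $\lambda \equiv \max\{1/\ell,\tau\}$ and let $e_1,\dots,e_n$ be the elements of $x$ in stream order, with weights $w_1,\dots,w_n$ summing to $w_x$. The first observation is structural: an uncached key is tested on each of its elements, and enters the cache at the first element $e_{i^*}$ for which both the base-hash test $\keybase(x)<\min\{\tau,1/\ell\}$ and the test $\Delta<w$ succeed; from that point on every element adds its full weight, so the returned value is $c_x=(w_{i^*}-\Delta_{i^*})+\sum_{i>i^*}w_i = w_x-\phi$, where $\phi\equiv\sum_{i<i^*}w_i+\Delta_{i^*}$ is the cumulative weight of $x$ consumed up to the entry point (and $\phi>w_x$, i.e. $c_x=0$, if no element triggers entry). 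Thus it suffices to show that $\phi$ has the stated sub-density on $[0,w_x]$.

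Next I would compute the law of $\phi$ by stitching the independent per-element draws $\Delta_i\sim\Exp[\lambda]$ across element boundaries using the memoryless property. Since $\keybase(x)$ is fixed for the whole run, the base-hash test is a single Bernoulli event independent of the $\Delta_i$, so I separate it out. Conditioned on that test succeeding, a one-line induction on $i$ shows $\Pr[\phi\in dy]=\lambda e^{-\lambda y}\,dy$ for $y\in[0,w_x)$: the event ``no entry through $e_1,\dots,e_{i-1}$ and then $\phi\in dy$ inside $e_i$'' has probability $e^{-\lambda(w_1+\dots+w_{i-1})}\,\lambda e^{-\lambda(y-w_1-\dots-w_{i-1})}\,dy=\lambda e^{-\lambda y}\,dy$, which is independent of where the element boundaries fall (this is exactly where memorylessness is used). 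It remains to insert the base-hash factor in the two regimes. When $\tau\ge 1/\ell$ we have $\min\{\tau,1/\ell\}=1/\ell$, so $\keybase(x)\sim U[0,1/\ell]$ passes the test almost surely and $\lambda=\tau$, giving sub-density $\tau e^{-\tau y}$. When $\tau<1/\ell$ the test $\keybase(x)<\tau$ holds with probability $\tau\ell$ and $\lambda=1/\ell$, giving sub-density $\tau\ell\cdot(1/\ell)e^{-y/\ell}=\tau e^{-y/\ell}$. Both equal $\tau\exp(-y\max\{1/\ell,\tau\})$, proving the fixed-$\tau$ statement; as a sanity check, integrating this sub-density over $[0,w_x]$ recovers the inclusion probability $\Phi_{\tau,\ell}(w_x)$ of \eqref{contcumphi}, and the complementary mass is exactly $\Pr[c_x=0]$.

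For the fixed-size scheme I would fix the randomness of all keys other than $x$; as with ppswor and SH this determines the final threshold $\tau$ and the event $\{x\in S\}=\{\seed(x)<\tau\}$, reducing the claim to showing that, conditioned on $x\in S$ at threshold $\tau$, Algorithm \ref{cellaSH:alg} returns $c_x$ with the fixed-$\tau$ law derived above. The content is that each eviction performs an exact consistency update: when the cache overflows and the threshold drops from $\tau$ to $\tau^*$, a surviving key $x$ must end up distributed as if $\tau^*$ had been used from the start, conditioned on its seed having been below $\tau$. I would verify this by reading $u_x$ as the value of the seed conditioned on being below $\tau$ and $r_x$ as the exponential clock, checking that the test $u_x>\max\{\tau^*,\ell^{-1}\}/\tau$ correctly detects whether the entry point recedes past the new band, and that the decrement $\Counters{x}\overset{-}{\gets}-\ln(1-r_x)/\max\{\ell^{-1},\tau^*\}$ subtracts precisely the extra cumulative weight $x$ must now wait before entering — once more a consequence of the memoryless property of $\Exp[\lambda]$. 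Proving that this per-eviction update is distributionally exact, and that it composes over the whole sequence of threshold decreases so that the final $c_x$ matches the fixed-$\tau$ law for the final $\tau$, is the main obstacle; the fixed-threshold computation of the previous paragraph then closes the argument.
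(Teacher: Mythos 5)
Your treatment of the fixed-threshold case is correct and complete, and it takes a somewhat more explicit route than the paper: the paper obtains the density by differentiating the inclusion probability $\Phi_{\tau,\ell}(w)$ of \eqref{contcumphi} with respect to $w$ (interpreting $\partial\Phi/\partial w$ as the density of the entry point), whereas you compute the law of the entry point directly by stitching the independent per-element $\Exp[\lambda]$ draws across element boundaries via memorylessness and then multiplying in the base-hash Bernoulli factor. Both yield $\tau e^{-y\max\{1/\ell,\tau\}}$; yours makes explicit the prefix argument that the paper's one-line differentiation implicitly relies on, which is a fair trade.

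The fixed-$k$ half, however, has a genuine gap. You correctly identify the proof architecture (condition on the randomization of the other keys, note that $\tau$ is then independent of $x$'s randomness whenever $x$ is cached, and induct over threshold decreases), but you explicitly defer the step that carries essentially all of the content: verifying that the eviction-time adjustment --- drawing fresh $u_x,r_x$, testing $u_x>\max\{\tau^*,\ell^{-1}\}/\tau$, and decrementing by $-\ln(1-r_x)/\max\{\ell^{-1},\tau^*\}$ --- maps the density at threshold $\tau$ exactly onto the density at threshold $\tau^*$. This is not a routine check: because the claimed density switches form at $\tau\ell=1$, the paper must handle three regimes separately ($\tau\ell>1$ with $\tau^*\ell\geq 1$, $\tau\ell>1$ with $\tau^*\ell<1$, and $\tau\ell\leq 1$), and in the first two it must evaluate a convolution of the old density with a fresh exponential deduction, e.g.\ showing
\begin{equation*}
\frac{\tau^*}{\tau}\,\tau e^{-\tau y}+\Bigl(1-\frac{\tau^*}{\tau}\Bigr)\int_0^y \tau e^{-\tau u}\,\tau^* e^{-\tau^*(y-u)}\,du=\tau^* e^{-\tau^* y},
\end{equation*}
and the analogous identity when the new threshold crosses $1/\ell$ and the deduction becomes $\Exp[1/\ell]$ gated by the base hash. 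Naming this as ``the main obstacle'' and asserting it follows from memorylessness does not discharge it; the per-case probabilities ($\tau^*/\tau$, $\tau^*(\ell-\tau^{-1})$, $1-\tau^*\ell$) have to come out exactly right for the telescoping to work, and that is precisely what the paper's case analysis establishes. A smaller omission: you should also note that when $x\notin S$ the threshold may depend on $x$'s own randomness, which is harmless only because $c_x=0$ in that event --- the paper states this explicitly.
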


\notinproc{
\subsubsection*{Optimization comment: Batch evictions}
   Each decrease of the threshold involves scanning all keys in the
   cache.  The expected total number of evictions, however, is at most $k\ln
   m$, where $m$ is the number of elements.
To reduce amortized eviction cost, we can use a slight modification of the
   algorithm which evicts $\delta$ keys when the cache is full, where $\delta$ is a
   fraction of $k$.  The
modification uses 
the $\delta$th largest $z_x$ instead of the maximum one as the new $\tau^*$.  All keys
$y$ with $z_y \geq \tau$ are then evicted.  The new threshold is
$\tau^*$.  
The computation of the
estimators, which we present next, is with respect to the current
threshold and is the same with the batched evictions or one at a time eviction.
}

\ignore{
To avoid that, we can consider an algorithm that work with discretized
  thresholds, which decrease by a factor $\alpha < 1$.   This
  algorithm will evict on average a fraction of the cache in each
  iteration, but there would be much fewer eviction passes in total.

\begin{algorithm2e}[h]
\caption{Continuous SH$_\ell$ stream sampling: fixed $k$ discretized threshold\label{cellaSHdt:alg}}
\DontPrintSemicolon 
\SetKwArray{Counters}{Counters}
\SetKwFunction{rand}{rand}
\SetKwFunction{Hash}{Hash}
\SetKwFunction{Return}{return}
\SetKwFunction{ElementScore}{ElementScore}
\KwData{parameters $k$, $\ell$, stream of elements of form $(x,w)$
  with key $x\in {\cal X}$ and $w > 0$}
\KwOut{$\tau$; set of pairs $(x,c_x)$ where $x\in {\cal X}$ and $c_x \in (0,w_x]$}
$\tau \gets \infty$; $\Counters \gets \emptyset$  \tcp*[h]{Initialize
  threshold and cache}\; 
\ForEach(\tcp*[h]{Process a stream element}){stream element $(x,w)$}
{
  \If{$x$ is in $\Counters$}{$\Counters{x} \gets \Counters{x} +w$;}
  \Else{
    $\Delta \sim \Exp[\max\{\tau,1/\ell\}]$ \tcp*{$\Delta=0$ if $\tau=\infty$}\;
    \If(\tcp*[h]{initialize counter for $x$}){$\tau\ell \geq 1$ and $\Delta < w$ or $\tau\ell < 1$  and
      $\Delta < w$ and $\keybase{x} < \tau$}
    {$\Counters{x} \gets w-\Delta$;}
    \If(\tcp*[h]{eviction}){$|\Counters|>k$}
    {
      \If(\tcp*[h]{Set $\tau$ once to finite initial value $>1/\ell$}){$\tau = \infty$}{
      $\tau \gets  \log(k)/\min_x   \Counters{x}$\;
      $\tau\gets\ell^{-1} \alpha^{\lceil
        \log_\alpha \tau \ell\rceil}$\;
      \ForEach{$x\in \Counters$}{$\Counters{x} \gets \Counters{x}
        -\frac{-\ln \rand()}{\tau}$}
    }
    \While(\tcp*[h]{decrease threshold until at least one key is
      evicted}){$|\Counters| > k$}{
      $\tau \gets \alpha \tau$\;
      \ForEach{$x\in \Counters$}{
        \If{$\rand{} < \alpha$}
        {$\Counters{x} \gets \Counters{x} -\frac{-\ln \rand()}{\max\{\ell^{-1},\tau\}}$\;
          \If{$\Counters{x} \leq 0$ or $\tau< \keybase{x}$}{Evict $x$ from $\Counters$}\;
        }
      }
    }
  }
}
}
\Return{$(x,\Counters{x})$ for $x$ in $\Counters$}
\end{algorithm2e}
} 

\subsection{Estimators for Continuous SH$_\ell$}

\ignore{

We can also consider a distribution $\vecm$ over key sizes and a respected
expected distribution $\veco$ over the counted sizes in the sample.

  We have the relation by the following integral transform:

\begin{equation}
o(c) = \int_c^\infty m(w) \phi(w-c) dw 
\end{equation}

  We are interested in the inverse transform, stated as a function
  $\psi$, which expresses $\vecm$
  in terms of the expected $\veco$:
\begin{equation}
m(w) = \int_w^\infty o(c) \psi(c-w) dc 
\end{equation}

Combining, we obtain the relation
\begin{equation}
m(w) = \int_w^\infty  \psi(c-w) \int_c^\infty m(x) \phi(x-c) dx   dc 
\end{equation}

  From $\psi$ we can obtain estimation coefficient function which can
  be used in an estimator \eqref{coefform:eq}
\begin{equation}
\beta(x) = \int_0^x \psi(y) f(x-y) dy
\end{equation}

}

  We seek an unbiased and nonnegative estimator in a coefficient form, that
  is, a function $\beta^{(f,\tau,\ell)}(c)$ defined for any   $c>0$
  and we use the estimator 
\begin{equation}  \label{contest:eq}
\hat{Q}(f,H) = \sum_{x\in H\cap S}
  \beta^{(f,\tau,\ell)}(c_x)\ .
\end{equation}
\begin{theorem} \label{estcoeffcont:thm}
For any continuous $f$ that is differentiable almost everywhere, the
estimator that uses
\begin{equation} \label{contbeta:eq}
\beta^{(f,\tau,\ell)}(c) \equiv  f(c)/\min\{1, \ell\tau\} + f'(c)/\tau
\ 
\end{equation}
is unbiased.
\end{theorem}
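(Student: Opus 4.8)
The plan is to reduce the claim to a single one-dimensional integral against the per-key count distribution supplied by Theorem~\ref{cdensity:thm}, and then to verify unbiasedness key-by-key. Since the estimator $\hat Q(f,H)=\sum_{x\in H\cap S}\beta^{(f,\tau,\ell)}(c_x)$ is a sum of per-key contributions that vanish when $x\notin S$, and since the seeds of distinct keys are independent (so in the fixed-size case we may condition on the randomization of the other keys, which fixes $\tau$ whenever $x\in S$), it suffices by linearity of expectation to prove that for every active key $x$ with weight $w\equiv w_x$,
\[
\E\!\left[\beta^{(f,\tau,\ell)}(c_x)\,\mathbf{1}[x\in S]\right]=f(w).
\]
Summing over $x\in H$ (inactive keys are never sampled and contribute $0=f(0)$) then gives $\E[\hat Q(f,H)]=Q(f,H)$.

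First I would translate Theorem~\ref{cdensity:thm} into a density for $c_x$. Writing $a\equiv\max\{1/\ell,\tau\}$ and $c_x=\max\{0,w-\phi\}$ with $\phi$ of density $\tau e^{-ya}$ on $[0,w]$, the change of variables $c=w-\phi$ shows that on the event $x\in S$ the count $c_x$ has density $\tau e^{-(w-c)a}$ for $c\in(0,w]$, while the complementary event $x\notin S$ contributes $0$ to the estimate. Hence the target expectation equals
\[
\int_0^{w}\beta^{(f,\tau,\ell)}(c)\,\tau e^{-(w-c)a}\,dc .
\]

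The crux is a short algebraic identity followed by one integration by parts. I would first record $\tau/\min\{1,\ell\tau\}=\max\{1/\ell,\tau\}=a$ (checking the two cases $\tau\ell\ge 1$ and $\tau\ell<1$); this is the same relation that makes $\Phi_{\tau,\ell}$ in \eqref{contcumphi} a genuine probability, and it lets me rewrite the coefficient of the $f$-term. Substituting $\beta=f/\min\{1,\ell\tau\}+f'/\tau$ and using this identity, the $f$-term contributes $a\int_0^w f(c)e^{-(w-c)a}\,dc$, while integrating the $f'$-term by parts (with $du=a\,e^{-(w-c)a}\,dc$) yields $\bigl[f(c)e^{-(w-c)a}\bigr]_0^w-a\int_0^w f(c)e^{-(w-c)a}\,dc$. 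The boundary term equals $f(w)-f(0)e^{-wa}=f(w)$ since $f(0)=0$, and the surviving integral exactly cancels the $f$-term, leaving $f(w)$ as required.

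The main obstacle is not the calculus but keeping the distributional and regularity bookkeeping airtight: extracting the correct density on $(0,w]$ from Theorem~\ref{cdensity:thm} so that the non-inclusion mass sits entirely on the event $x\notin S$ and produces no spurious contribution at the endpoint $c=w$, and justifying the integration by parts. The latter requires $f$ to be absolutely continuous, so that $f(w)-f(0)=\int_0^w f'(c)\,dc$; I would note that the statistics of interest (cap functions and frequency moments) are Lipschitz or locally Lipschitz and hence absolutely continuous, so ``differentiable almost everywhere'' is used here in this stronger sense. Finally, as in the discrete spectrum, the integral transform from the frequency distribution to the expected sampled-count distribution has a unique inverse, so this $\beta^{(f,\tau,\ell)}$ is the unique unbiased coefficient estimator of its form and is therefore admissible.
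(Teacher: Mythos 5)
Your proof is correct and is essentially the paper's argument: the paper performs the same per-key reduction and the same one-line calculus computation, splitting into the cases $\tau\ell>1$ and $\tau\ell<1$ and recognizing the integrand as the exact derivative $(f(y)e^{ay})'e^{-ay}\tau^{-1}$, which is precisely your integration by parts in disguise. Your unification of the two cases via the identity $\tau/\min\{1,\ell\tau\}=\max\{1/\ell,\tau\}$, and your explicit observation that the boundary-term step $f(w)-f(0)=\int_0^w f'(c)\,dc$ requires $f$ to be absolutely continuous rather than merely differentiable almost everywhere, are small but genuine improvements in presentation and rigor over the paper's version.
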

\begin{proof}  
We separately
 treat the cases where $\tau\ell<1$ and $\tau\ell>1$.
We first show that when $\tau\ell >1$,
 $\beta(c) = f(c) + f'(c)/\tau $ are unbiased
  estimation coefficients.

For a key of size
  $w$, we have density $\tau e^{-\tau x}$ to have count of $w-x \in
  (0,w)$ (otherwise the key has count $0$ and the estimate is $0$).
 We can write $$\beta(y) =  (f(y) e^{\tau y})' e^{-\tau y}
 \tau^{-1}\ .$$
  Consider a key of size $w$.  Its expected contribution to the
  estimate is
{\small
\begin{eqnarray*}
\lefteqn{\int_0^w \tau e^{-\tau x} \beta(w-x) dx }\\
&=& \int_0^w \tau e^{-\tau x} (f(w-x) e^{-\tau(w-x)})'
e^{-\tau(w-x)} \tau^{-1} dx \\
&=& e^{-\tau w} \int_0^w (f(w-x)e^{-\tau(w-x)})' dx \\
&=& e^{-\tau w} f(w) e^{-\tau w} = f(w)
\end{eqnarray*}
}

  We now consider the case where $\tau < 1/\ell$, showing that
 $$\beta(c) = f(c)/(\ell\tau) + f'(c)/\tau $$ are unbiased
  estimation coefficients.
For a key with weight
  $w$, we have density $\tau e^{-x/\ell}$ to have count of $w-x \in
  (0,w)$.
We write
$$\beta(y) = (f(y) e^{y/\ell})' e^{-y/\ell} \tau^{-1}\ .$$
\begin{eqnarray*}
\lefteqn{\int_0^w \tau e^{-x/\ell} \beta(w-x) dx} \\
&=& \int_0^w \tau e^{-x/\ell} (f(w-x) e^{(w-x)/\ell})' e^{-(w-x)/\ell}
\tau^{-1}\\
&=& e^{-w/\ell} \int_0^w (f(w-x)e^{(w-x)/\ell})'dx = f(w)\ .
\end{eqnarray*}
\end{proof}
Note that any continuous monotone function, including the
$\Cap_T$ functions, is differentiable almost everywhere and hence satisfies
the requirements of the theorem.

 We upper bound the CV of the streaming fixed-$k$ SH$_\ell$ estimator (Proof is
in \onlyinproc{TR \cite{freqCap:2015}}\notinproc{ Appendix \ref{1passcontashell:sec}}):
\begin{theorem} \label{cv1passSHell:thm}
The CV of estimating $Q(\Cap_T,H)$ from an SH$_\ell$ sample is upper bounded 
by 
\begin{eqnarray*}
\lefteqn{\frac{e}{e-1}\max\{1,\frac{\ell}{T}\}\bigg( \frac{\ell}{T}(1-e^{-T/\ell}) +
\frac{T}{\ell} \bigg) }\\
&\leq& \bigg(\frac{\frac{e}{e-1}
  (1+\max\{\ell/T,T/\ell\})}{q(k-1)}\bigg)^{0.5}\ .
\end{eqnarray*}
\end{theorem}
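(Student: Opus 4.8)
\emph{Plan.} The estimator \eqref{contest:eq} is a sum over $x\in S\cap H$ of per-key estimates $\hat q_x\equiv\beta^{(\Cap_T,\tau,\ell)}(c_x)$, and, as established in Section~\ref{escoreprop:sec}, these have pairwise zero covariance once we condition on the threshold $\tau$ (for fixed-$k$ sampling, the $(k{+}1)$st smallest seed among the other keys). Hence $\var[\hat Q(\Cap_T,H)\mid\tau]=\sum_{x\in H}\var[\hat q_x\mid\tau]$, and it suffices to bound each per-key variance and then run the very same normalization used for the $2$-pass bound (Theorem~\ref{cv2passSHell:thm}) and the ppswor bound (Theorem~\ref{ppsworcv:thm}). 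First I would fix $\tau$, write $\Phi\equiv\Phi_{\tau,\ell}(w_x)$ from \eqref{contcumphi}, and split the per-key variance using unbiasedness (Theorem~\ref{estcoeffcont:thm}, which gives $\E[\hat q_x\mid\tau]=\Cap_T(w_x)$ and $\E[\hat q_x\mid c_x>0]=\Cap_T(w_x)/\Phi$):
\begin{equation*}
\var[\hat q_x\mid\tau]=\underbrace{\Cap_T(w_x)^2\Big(\tfrac{1}{\Phi}-1\Big)}_{\text{$2$-pass term}}+\underbrace{\Phi\,\var[\hat q_x\mid c_x>0]}_{\text{partial-count term}}\ .
\end{equation*}
The first summand is exactly the variance the $2$-pass inverse-probability estimator incurs, and the second, nonnegative by Cauchy--Schwarz, is the price of observing only $c_x\le w_x$ instead of $w_x$. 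This decomposition is what produces the additive ``disparity $+\,1$'' shape of the bound.

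The $2$-pass term is already controlled: its normalization is precisely the content of Theorem~\ref{cv2passSHell:thm}, which I would reuse verbatim to conclude that, after dividing by $Q(\Cap_T,H)^2$ and using $\sum_x\Phi_{\tau,\ell}(w_x)=\E[|S|]=k$, the $2$-pass term contributes at most $\tfrac{e}{e-1}\max\{T/\ell,\ell/T\}/(q(k-1))$ to $\mathrm{CV}^2$. For the partial-count term I would substitute the count law of Theorem~\ref{cdensity:thm} (conditioned on $c_x>0$, the deficit $w_x-c_x$ is a truncated exponential with density $\propto e^{-y\max\{1/\ell,\tau\}}$ on $(0,w_x)$) into the explicit coefficients $\beta^{(\Cap_T,\tau,\ell)}(c)=\Cap_T(c)/\min\{1,\ell\tau\}+\mathbf{1}[c<T]/\tau$ from \eqref{contbeta:eq}, and evaluate $\var[\hat q_x\mid c_x>0]$ in closed form. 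Because $\Cap_T$ is piecewise linear, the integral splits at $c_x=T$: on $\{c_x>T\}$ the coefficient is the constant $T/\min\{1,\ell\tau\}$ and contributes nothing to the conditional variance, so all the variance comes from the range $c_x<T$.

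Assembling both pieces, together with the worst-case inclusion-probability distortion $\tfrac{e}{e-1}$ (the factor $1/(1-e^{-1})$ by which $\Phi_{\tau,\ell}$ can fall short of the ideal linear sampling rate near $w_x\approx\ell$, as noted in the discussion of Figure~\ref{incprob:fig}), yields exactly the first displayed expression $\tfrac{e}{e-1}\max\{1,\ell/T\}\big(\tfrac{\ell}{T}(1-e^{-T/\ell})+\tfrac{T}{\ell}\big)$ as a bound on $\mathrm{CV}^2\cdot q(k-1)$. The second inequality is then pure algebra: writing $r=\ell/T$ and using $1-e^{-u}\le\min\{u,1\}$, one checks $r(1-e^{-1/r})\le 1$ in the regime $r\le1$ (directly, since $r\le1$) and in the regime $r>1$ (since $1-e^{-1/r}\le 1/r$). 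This collapses the first expression to $\tfrac{e}{e-1}\big(1+\max\{\ell/T,T/\ell\}\big)$, and taking square roots gives the stated $\mathrm{CV}$ bound.

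The main obstacle I anticipate is the partial-count term: obtaining its closed form across the two regimes $\tau\le 1/\ell$ and $\tau>1/\ell$ and the two cases $w_x\le T$ and $w_x>T$, and then showing that after the expected-sample-size normalization it contributes a \emph{constant} loss (the ``$+1$'') rather than one scaling with the disparity. The delicate point is the cross-over region $w_x\approx\ell\approx T$, where both $\Phi_{\tau,\ell}$ and the conditional count distribution deviate most from their idealized proportional-to-$\Cap_T$ behavior and where the $\tfrac{e}{e-1}$ worst case is saturated; verifying that these worst cases compose into a single factor $\tfrac{e}{e-1}$ (rather than its square) across the whole key population is the crux of the argument.
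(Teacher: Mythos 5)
Your overall architecture --- per-key variance conditioned on $\tau$, zero covariances across keys, the Erlang domination of the threshold distribution (Lemma~\ref{dominateaSHell:lemma}), and the final normalization as in Theorem~\ref{ppsworcv:thm} --- is exactly the paper's, and your law-of-total-variance split over the inclusion event, $\var[\hat{q}_x\mid\tau]=\Cap_T(w_x)^2(1/\Phi-1)+\Phi\,\var[\hat{q}_x\mid c_x>0]$, is a valid identity. The gap is that the decisive step is neither carried out nor reachable by the route you propose. The paper does not decompose: it computes $\E[\beta(c_x)^2\mid\tau]$ in closed form (four cases, $\tau\ell\gtrless 1$ and $w_x\gtrless T$, using the count density of Theorem~\ref{cdensity:thm}) and obtains the single non-increasing bound $\var[\hat{\Cap}_T(w_x)\mid\tau]\le\frac{\min\{w_x,T\}}{\tau}\bigl(\frac{\ell}{T}(1-e^{-T/\ell})+\frac{T}{\ell}\bigr)$, which combines with the $\frac{e}{e-1}\max\{1,\ell/T\}$ domination factor to give exactly $1+\max\{\ell/T,T/\ell\}$. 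Bounding your two pieces separately and adding cannot recover this: in the regime $\tau\ell>1$, $w_x\le T$, the pieces are $\tau^{-2}u^2e^{-u}/(1-e^{-u})$ and $\tau^{-2}\bigl[(1-e^{-u})-u^2e^{-u}/(1-e^{-u})\bigr]$ with $u=\tau w_x$, and their worst cases occur at different $u$ (the first as $u\to 0$, the second near $u\approx 4$, where it is about $0.17\,w_x/\tau$). In particular, when $\ell\gg T$ the partial-count piece is \emph{not} bounded by $\min\{1,T/\ell\}\min\{w_x,T\}/\tau$ (take $\tau w_x\approx 4$ with $T/\ell$ tiny), so after the $\max\{1,\ell/T\}$ factor it contributes another term proportional to $\ell/T$ rather than the ``$+1$'' you need; you land at $\frac{e}{e-1}(1+c)\,\ell/T$ for a constant $c>0$ instead of $\frac{e}{e-1}(1+\ell/T)$ --- the same order, but not the stated bound. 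A further obstruction: the exact partial-count term is not non-increasing in $\tau$ (it behaves like $\tau w_x^3/12$ for small $\tau w_x$), so Lemma~\ref{dom:lemma} cannot be applied to it directly; you would be forced back to a worst-case-over-$u$ constant, which is precisely where the loss occurs.

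Two smaller corrections. Your claim that the region $c_x>T$ ``contributes nothing to the conditional variance'' because $\beta$ is constant there is false: variance measures deviation from the conditional mean $\Cap_T(w_x)/\Phi$, and the constant value $T/\min\{1,\ell\tau\}$ differs from that mean for every finite $w_x$; the paper's $w_x>T$ cases show this region contributes materially. Also, the factor $\frac{e}{e-1}$ does not arise from the per-key inclusion probability near $w_x\approx\ell$ as you suggest; it enters once, through Lemma~\ref{seedrel:lemma}, as the cost of dominating the SH$_\ell$ seed density by a multiple of the $\Exp[\min\{w_x,T\}]$ density governing the threshold, so the worry about it being ``squared'' across two stages does not apply.
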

In particular, when
$\ell = \Theta(T)$, the CV  is $O(q(k-1)^{-0.5})$, and when 
$\ell=T$, the CV is at
most $$\bigg(\frac{2e-1}{e-1}\frac{1}{q(k-1)}\bigg)^{0.5} \approx 1.6 (q(k-1)^{-0.5}\ .$$

\section{Multi-objective samples}   \label{MO:sec}

  We established that from a fixed-$k$ SH$_\ell$ sample we can estimate well 
$\Cap_T$ statistics when $T=\Theta(\ell)$.
  This means that if we are interested in estimates with statistical
  guarantees for cap values $T=[a,b]$, it suffices to use SH$_\ell$ samples with
parameters  $\ell_i=2^i a$ for $i\leq \lceil \log(b/a)\rceil$.  
To process a query for a $\Cap_T$ statistics, we can  use the SH$_\ell$ sample with
 $\ell$ that is closest (within a factor of $\sqrt{2}$) from $T$.
In particular,
  to estimate all cap statistics, it suffices to use 
$\lceil \log(\max_x w_x/\min_x   w_x))\rceil$ samples.

 We now improve over this basic approach \onlyinproc{(see TR \cite{freqCap:2015} for
   details) }by instead of working with a
 set $\{S_\ell\}$ of samples with respective caps $\ell\in L$, we work with a single sample
$S_L=\bigcup_{\ell\in L} S_\ell$.  The improvement has several components:
Sample coordination, which ensures that samples with closer $\ell$
are more similar so that 
$|S_L| \ll k|L|$,  using estimators that benefit from the combined sample, and
sampling algorithms that  use state that is proportional to $|S_L|$.
\notinproc{
\subsection{Sample coordination}
We {\em coordinate} the samples for different
$\ell$ \cite{BrEaJo:1972,multiw:VLDB2009} 
by using the same ``randomization.''  In our context, the randomization
of each key constitutes of two independent random variables $\Hash(x)
\sim U[0,1]$ and 
$y_x \sim  \Exp[w_x]$, which is the minimum over elements with key $x$ of the $\Exp[w]$
component used for scoring elements $(x,w)$.
With coordination, we can express the seed of $x$ as a function of $\ell$ as:
$$\seed_\ell(x) = y_x\leq 1/\ell\, ?\, \Hash{x}/\ell \, : \,  y_x\ \ .$$
The sample $S_\ell$ includes the $k$ keys with smallest
$\seed_\ell$ values and its threshold $\tau_\ell$ is the $(k+1)$st
smallest $\seed_\ell$ value (or $+\infty$ if there are fewer than
$k+1$ active keys).

 Surprisingly perhaps, we show that the expected number of distinct
 keys in $S_L$ for $L=(0,\infty)$  when the samples $S_\ell$ are coordinated is
at most $k\ln n$, where  $n$ is the number of active keys in the data set.
  In particular, this upper bounds $|S_L|$ for {\em any} set of cap 
parameters $L$.
\begin{lemma}
Let $\{S_\ell\}$ for $\ell\in L=(0,\infty)$ be coordinated  fixed-$k$ SH$_\ell$
samples.
Then  $\E[|S_L|] = \E[|\bigcup_{\ell>0} S_\ell|] < k\ln n$.  Moreover,
for $a>1$, the probability of $|S_L| > a k\ln n$ decreases exponentially with $a$.
\end{lemma}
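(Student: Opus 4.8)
The plan is to reduce the union size to a per-key event by linearity, $\E[|S_L|]=\sum_x\Pr[x\in S_L]$, and to replace the geometric ``sampled for some $\ell$'' condition by a clean combinatorial one. The crux is the identity
\[
\min_{\ell>0}\bigl|\{z\neq x:\seed_\ell(z)<\seed_\ell(x)\}\bigr|
=\bigl|\{z\neq x:\Hash(z)<\Hash(x)\ \text{and}\ y_z<y_x\}\bigr|=:\kappa_x ,
\]
which yields $x\in S_L$ if and only if at most $k-1$ keys \emph{dominate} $x$ in both coordinates, i.e.\ $\kappa_x<k$: if $\kappa_x<k$ then at the minimizing $\ell$ fewer than $k$ keys beat $x$, so $x$ is among the $k$ smallest seeds and $x\in S_\ell\subseteq S_L$, and conversely.

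First I would establish this identity from the coordinated seed $\seed_\ell(x)=y_x\le 1/\ell\,?\,\Hash(x)/\ell:y_x$. As $\ell$ increases, $\seed_\ell(x)=\Hash(x)/\ell$ decreases from $+\infty$ to $\Hash(x)y_x$ on $(0,1/y_x]$ and then equals the constant $y_x$ for $\ell>1/y_x$. A short case analysis of who beats $x$ shows that in the weight regime $\ell>1/y_x$ exactly the keys with $y_z<y_x$ beat $x$, while in the hash regime $\ell<1/y_x$ the keys beating $x$ are the hash-active ones ($y_z\le 1/\ell$) with $\Hash(z)<\Hash(x)$, whose number only grows as $\ell$ decreases. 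Choosing $\ell$ in the nonempty gap between $1/y_x$ and the next larger $y$-value makes the hash-active set exactly $\{z:y_z<y_x\}$, so the beating-count there equals $\kappa_x$, which is the global minimum over $\ell$. This case analysis, in particular handling the discontinuity of $\seed_\ell(x)$ at $\ell=1/y_x$ and verifying no other $\ell$ does better, is the step I expect to be the main obstacle.

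Given the characterization, I would compute the expectation by ordering keys by increasing $y$-value as $v_1,\dots,v_n$. Since the hashes $\Hash(\cdot)\sim U[0,1]$ are i.i.d.\ and independent of the $y_x$'s, conditioned on this order the hashes remain i.i.d.\ uniform; hence the relative rank $r_j$ of $\Hash(v_j)$ among $\Hash(v_1),\dots,\Hash(v_j)$ is uniform on $\{1,\dots,j\}$, we have $\kappa_{v_j}=r_j-1$, and, by the standard relative-rank independence for a uniform random order, the $r_j$ are mutually independent. Thus $\Pr[\kappa_{v_j}<k]=\min\{1,k/j\}$ and
\[
\E[|S_L|]=\sum_{j=1}^n\min\{1,k/j\}=k+k\!\!\sum_{j=k+1}^{n}\frac1j<k\bigl(1+\ln(n/k)\bigr)\le k\ln n ,
\]
the last step for $n>k\ge e$ (and the bound is $\Theta(k\ln n)$ for all $k$).

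Finally, the tail bound follows from the same representation with no extra work: $|S_L|=\sum_{j=1}^n\mathbf 1[r_j\le k]$ is a sum of \emph{independent} Bernoulli indicators with mean $\mu=\E[|S_L|]\in[k,\,k\ln n]$ (for $n>k$). A Chernoff bound gives $\Pr[|S_L|\ge a\mu]\le \exp\bigl(-\mu(a\ln a-a+1)\bigr)$, and since $ak\ln n\ge a\mu$ this bounds $\Pr[|S_L|>a\,k\ln n]$ as well; the exponent grows like $\mu\,(a\ln a-a+1)\ge k\,(a\ln a-a+1)$, so the probability decays (faster than) exponentially in $a$. The only ingredients beyond the characterization are linearity of expectation, relative-rank independence for a uniform order, and a standard Chernoff bound for independent Bernoulli sums.
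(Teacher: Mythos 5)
Your proof is correct and follows essentially the same route as the paper's: order the keys by increasing $y_x$, observe that every $S_\ell$ consists of the $k$ smallest hash values in a prefix of this order, so the $i$th key lies in $S_L$ exactly when its hash ranks among the $k$ smallest of the first $i$, giving per-key probability $\min\{1,k/i\}$, and conclude by summing and applying a Chernoff bound. You merely make explicit some steps the paper leaves implicit, namely the characterization via $\kappa_x$ and the mutual independence of the sequential ranks that justifies the concentration claim.
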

\begin{proof}
Consider an order of all keys by increasing  $y_x$.  Any sample
$S_\ell$ must have the form of the keys with $k$ smallest $\Hash{x}$
values in a prefix of this order.  
We now consider the $i$th key in this order and
the probability that it qualifies for some sample, which is the 
probability that $\Hash{x}$ is among the $k$ smallest in the prefix of
the first $i$ keys.
Since the random variables $\Hash{x}$ are
independent of $y_x$ and unrelated to the order, this is exactly
the probability that the key  is in one of the first $k$ positions
in a random permutation of size $i$,  which is $\min\{1, k/i\}$.
Summing over all $i$ we obtain the claim.  Concentration follows from
Chernoff bounds.
\end{proof}
A corollary of the proof is that the property $x\in S_\ell$ holds for
a contiguous interval of $\ell$ values that generally has the form
$(1/y_z,1/y_x]$ for some key $z$. If $y_x$ is amongst the $k$
smallest among $\{ y_z \mid z\in {\cal X}\}$ then the interval has the form
$(1/y_z,+\infty)$ and if $\Hash{x}$ is amongst the $k$ smallest in
$\{ \Hash{z} \mid z\in {\cal X}\}$ then
the interval is $(0,1/y_x]$.

\subsection{Estimation}
We now consider estimators that leverage all
the sampled keys $x\in S_L$ \cite{multiw:VLDB2009}.  This allows us to obtain tighter
estimates than when using any one sample $S_\ell$.
We consider 2-pass sampling, so that $w_x$ is available
for sampled keys, and compute for each key $x$ a probability
$\Phi(w_x)$ that it is included in {\em at least one} of $S_\ell$ for
$\ell\in L$, when fixing
the randomization on ${\cal X}\setminus\{x\}$.  

Once we have the probabilities $\Phi(w_x)$ we can apply an inverse 
probability estimate $\hat{Q}(f,H) = \sum_{x\in H} f(w_x)/\Phi(w_x)$. 
Since $\Phi(w_x)$ is at least as large as the inclusion probability of 
$x$ in any individual $S_\ell$,  the variance of the estimate for any 
query is at most that obtained by using any single sample.

 We now elaborate on computing the probabilities $\Phi$.
This probability can be
computed from $w_x$ and the set of pairs $\{\ell,\tau^{-x}_\ell\}$ for
$\ell\in L$.  Here,  we define
$\tau^{-x}_\ell$ to be the k$^{\text{th}}$ smallest $\seed_\ell(z)$ for $z\in {\cal X}\setminus
\{x\}$.  When $x\in S_\ell$, this is the threshold $\tau_\ell$
and otherwise, it is $\max_{z\in S_\ell}\seed_\ell(z)$.
\begin{lemma}
$$\Phi(x) = \Pr_{y\sim\Exp[w_x], h\sim U[0,1]} [\exists \ell\in L,\
C(\ell,x)]\ ,$$
where
$C(\ell,x)$ is the condition
$y < \max\{\tau^{-x}_\ell ,1/\ell\}$ and $h  < \ell \tau^{-x}_\ell$.
\end{lemma}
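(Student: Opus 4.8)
The plan is to compute the inclusion probability $\Phi(x)$ directly by conditioning on the per-key randomization. Recall from the coordinated construction that each key $x$ carries two independent random sources: a uniform hash $\Hash{x}\sim U[0,1]$ and an exponential $y_x\sim\Exp[w_x]$, and that the seed as a function of $\ell$ is
\begin{equation*}
\seed_\ell(x) = \big(y_x\leq 1/\ell\big)\,?\,\Hash{x}/\ell\,:\,y_x\ .
\end{equation*}
The key $x$ belongs to $S_\ell$ exactly when $\seed_\ell(x)$ is among the $k$ smallest seeds, which (fixing the randomization of all other keys) is equivalent to $\seed_\ell(x) < \tau^{-x}_\ell$, where $\tau^{-x}_\ell$ is defined in the statement as the $k$th smallest $\seed_\ell(z)$ over $z\neq x$. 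So the whole computation reduces to translating the single-$\ell$ event $\seed_\ell(x)<\tau^{-x}_\ell$ into a condition on the pair $(y_x,\Hash{x})$, and then taking the union over $\ell\in L$.

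\emph{First} I would expand the event $\seed_\ell(x)<\tau^{-x}_\ell$ by cases according to which branch of the seed definition applies. In the branch $y_x>1/\ell$ the seed is $y_x$, and inclusion needs $y_x<\tau^{-x}_\ell$; combined with $y_x>1/\ell$ this is possible only when $\tau^{-x}_\ell>1/\ell$, and it imposes no constraint on $\Hash{x}$ (equivalently $\Hash{x}<1$, which always holds, and indeed $\Hash{x}<\ell\tau^{-x}_\ell$ is automatic here since $\ell\tau^{-x}_\ell>1$). In the branch $y_x\leq 1/\ell$ the seed is $\Hash{x}/\ell$, so inclusion needs $\Hash{x}/\ell<\tau^{-x}_\ell$, i.e. $\Hash{x}<\ell\tau^{-x}_\ell$, together with $y_x\leq 1/\ell$. \emph{Then} I would check that both branches are captured uniformly by the single condition $C(\ell,x)$ stated in the lemma, namely $y_x<\max\{\tau^{-x}_\ell,1/\ell\}$ and $\Hash{x}<\ell\tau^{-x}_\ell$: when $\tau^{-x}_\ell\geq 1/\ell$ the first inequality reads $y_x<\tau^{-x}_\ell$ and $\ell\tau^{-x}_\ell\geq 1$ makes the hash condition vacuous, recovering the first branch (and absorbing the sub-case $y_x\leq 1/\ell$ with a free hash); when $\tau^{-x}_\ell<1/\ell$ the first inequality reads $y_x<1/\ell$ and the hash condition $\Hash{x}<\ell\tau^{-x}_\ell$ is the binding one, recovering the second branch.

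\emph{Finally}, since $x\in S_L$ iff $x\in S_\ell$ for at least one $\ell\in L$, I would take the union over $\ell$ of the per-$\ell$ events $C(\ell,x)$, giving
\begin{equation*}
\Phi(x)=\Pr_{y\sim\Exp[w_x],\,h\sim U[0,1]}\big[\exists\,\ell\in L,\ C(\ell,x)\big]\ ,
\end{equation*}
which is exactly the claim. Here $\tau^{-x}_\ell$ is a constant once the randomization of the other keys is fixed, so the only randomness left is $(y,h)$ for key $x$, matching the probability space in the statement. The main obstacle is not any hard estimate but the careful case bookkeeping at the boundary $\tau^{-x}_\ell=1/\ell$: one must verify that the compact condition $C(\ell,x)$ neither over- nor under-counts inclusion in either regime, in particular that the hash constraint is correctly vacuous precisely when $\ell\tau^{-x}_\ell\geq 1$ and binding otherwise, and that the definition of $\tau^{-x}_\ell$ as the $k$th smallest competing seed gives the right single-sample inclusion criterion under the fixed-randomization conditioning used throughout the paper.
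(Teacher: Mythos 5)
Your proposal is correct and follows essentially the same route as the paper: the paper's own proof simply asserts the single-$\ell$ inclusion condition $y_x < \max\{\tau^{-x}_\ell, 1/\ell\}$ and $\Hash(x) < \ell\tau^{-x}_\ell$ and then takes the union over $\ell\in L$, which is exactly your argument. Your case analysis at the boundary $\tau^{-x}_\ell \gtrless 1/\ell$ is a correct (and more explicit) justification of the step the paper leaves implicit.
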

\begin{proof}
Using the independent random variables $y_x \sim \Exp[w_x]$ and $\Hash{x}$, 
the condition for inclusion of $x$ in $S_\ell$ is that
$$y_x < \max\{\tau^{-x}_\ell ,1/\ell\}\,  \text{ and } \, \Hash(x)  <
\ell \tau^{-x}_\ell\ .$$
The condition for inclusion in $S_L$ is that $(y_x,\Hash{x})$ satisfy
the condition for at least one $\ell\in L$.
\end{proof}
When working with $L$ that contains a contiguous interval, such as $L=(0,\infty)$, we can express the $k$th and
$(k+1)$st smallest values in $\seed_\ell(z)$ as a function of $1/\ell$
as a piecewise linear function with at most $k\ln n$ pieces (in expectation).
This allows us to compute $\Phi(w_x)$ for all keys in $x\in S_L$.

\subsection{Sampling algorithm}
 We can engineer the sampling algorithm so that it
maintains state that is proportional to the number of
 distinct keys in $|S_L|$.   Let $\ell_i$ be our list of cap
 parameters in decreasing order.  The algorithm maintains 
for each $i$, the $k+1$ keys with smallest $\Hash{x}$ amongst those
with $y_x < 1/\ell_i$.  If for the highest $\ell$ values in $L$ there
are fewer than $k+1$ keys with $y_x<1/\ell$, we include the $k+1$ keys with smallest $y_x$.

}
\ignore{
For some set of $\ell$ values, let $\tau_\ell$ be  threshold values 
of size $k$ sketch.  These values are decreasing with $\ell$.  

 A key $x$  is included in a multi 
objective sample with respect to $\{\tau_\ell \}$ if and only if 
and only if, for at least one $\ell$ value,  $\seed_\ell [x] < \tau_\ell$, 
that is, its SH$_\ell$ seed is below $\tau_\ell$.

  The threshold can be set according to a fix $k$, so that there are 
  at least $k$ samples with respect to each objective, that is 
  $\tau_\ell$ is the smallest so that there are at least $k$ flows 
  such that $\seed_\ell[x] < \tau_\ell$.  This guarantees accuracy 
  that is at least that of a single-objective sample of size $k$. 
But the total size of the multi-objective sample is variable. 

  We can also compute a sample so that a total size budget of $K$ is 
  equally used among the multiple objectives, as to maximize $k \leq 
  K$ so that we have at least $k$ samples with respect to each 
  objective.

 Brute force:
  We know that we can always take even independent samples for 
 $\log(b/a)$  values with $\ell$ growing by a constant.  From these
samples we can estimate all cap statistics, using an appropriate
sample for the statistics.  
Actually, we can estimate any monotone frequency statistics which can
be expressed as a nonnegative sum over cap statistics
$$f(1) Q(\Cap_1) + (f(2)-f(1)) (Q(\Cap_2) - Q(\Cap_1))+\cdots\ .$$

  There also might be a worst case example where $\log(b/a)$ overhead
  is needed.  If we have
 $2^i k$ keys of size $10^{m-i}$. Then about $k$ keys from each group
 will be in the sample.
}

\begin{figure*}[t]
{\tiny 
\notinproc{
\parbox{.45\linewidth}{\centering 
  discrete $ k =  100 $, $\alpha =  1.2$,  $m =  10^5$, $rep =  200$,   relerr  1-pass  \\
\begin{tabular}{r | r r r r r r r r}
\hline 
 $\ell$, $T$    &  1  &  5  &  20  &  50  &  100  &  500  &  1000  &  10000  \\ 
\hline 
1    &   0.079    &   0.090    &   0.147    &   0.221    &   0.289    &   0.491    &   0.580    &   1.080    \\ 
5    & {\bf  0.076}   & {\bf  0.075}   &   0.085    &   0.109    &   0.137    &   0.253    &   0.350    &   0.754    \\ 
20    &   0.109    &   0.088    &   0.079    &   0.085    &   0.092    &   0.149    &   0.190    &   0.439    \\ 
50    &   0.105    &   0.083    & {\bf  0.077}   &   0.078    &   0.085    &   0.131    &   0.166    &   0.346    \\ 
100    &   0.115    &   0.103    &   0.083    & {\bf  0.078}   & {\bf  0.078}   &   0.087    &   0.103    &   0.260    \\ 
500    &   0.135    &   0.110    &   0.099    &   0.090    &   0.087    &   0.082    &   0.081    &   0.120    \\ 
1000    &   0.133    &   0.123    &   0.110    &   0.100    &   0.094    &   0.079    &   0.074    &   0.072    \\ 
10000    &   0.142    &   0.118    &   0.103    &   0.087    &   0.080    & {\bf  0.068}   & {\bf  0.061}   & {\bf  0.045}   \\ 
\end{tabular} 
}
\parbox{.45\linewidth}{\centering 
  discrete $ k =  100 $, $\alpha =  1.2 $,  $m =  10^5$, $rep =  200 $,   rel err  2-pass  \\
\begin{tabular}{||r | r r r r r r r r}
\hline 
 $\ell$, $T$    &  1  &  5  &  20  &  50  &  100  &  500  &  1000  &  10000  \\ 
\hline 
1    &   0.079    &   0.090    &   0.147    &   0.221    &   0.289    &   0.491    &   0.580    &   1.080    \\ 
5    & {\bf  0.075}   & {\bf  0.075}   &   0.085    &   0.109    &   0.138    &   0.253    &   0.350    &   0.754    \\ 
20    &   0.099    &   0.086    &   0.079    &   0.084    &   0.092    &   0.150    &   0.190    &   0.439    \\ 
50    &   0.100    &   0.084    & {\bf  0.077}   & {\bf  0.077}   &   0.083    &   0.128    &   0.164    &   0.345    \\ 
100    &   0.116    &   0.099    &   0.084    &   0.079    & {\bf  0.077}   &   0.086    &   0.102    &   0.260    \\ 
500    &   0.124    &   0.106    &   0.098    &   0.092    &   0.088    &   0.081    &   0.080    &   0.118    \\ 
1000    &   0.125    &   0.115    &   0.109    &   0.102    &   0.095    &   0.078    &   0.073    &   0.070    \\ 
10000    &   0.133    &   0.111    &   0.096    &   0.086    &   0.080    & {\bf  0.066}   & {\bf  0.061}   & {\bf  0.044}   \\ 
\end{tabular} 
}
}

\parbox{.45\linewidth}{\centering 
  discrete $ k =  100 $, $\alpha =  1.2 $,  $m =  10^5$, $rep =  200$,   NRMSE 1-pass  \\
\begin{tabular}{r | r r r r r r r r}
\hline 
 $\ell$, $T$    &  1  &  5  &  20  &  50  &  100  &  500  &  1000  &  10000  \\ 
\hline 
1    &   0.098    &   0.115    &   0.185    &   0.279    &   0.374    &   0.658    &   0.862    &   3.016    \\ 
5    & {\bf  0.094}   & {\bf  0.093}   &   0.112    &   0.144    &   0.184    &   0.332    &   0.449    &   1.316    \\ 
20    &   0.133    &   0.111    &   0.102    &   0.109    &   0.122    &   0.199    &   0.254    &   0.615    \\ 
50    &   0.138    &   0.108    & {\bf  0.098}   &   0.101    &   0.107    &   0.163    &   0.207    &   0.419    \\ 
100    &   0.146    &   0.125    &   0.104    & {\bf  0.099}   & {\bf  0.099}   &   0.111    &   0.133    &   0.311    \\ 
500    &   0.171    &   0.135    &   0.123    &   0.112    &   0.110    &   0.102    &   0.101    &   0.149    \\ 
1000    &   0.174    &   0.156    &   0.141    &   0.125    &   0.118    &   0.100    &   0.094    &   0.090    \\ 
10000    &   0.178    &   0.148    &   0.128    &   0.110    &   0.102    & {\bf  0.083}   & {\bf  0.076}   & {\bf  0.056}   \\ 
\end{tabular} 
}
\parbox{.45\linewidth}{\centering 
  discrete $ k =  100 $, $\alpha =  1.2 $,  $m =  10^5$, $rep =  200$,   NRMSE 2-pass  \\
\begin{tabular}{||r | r r r r r r r r}
\hline 
 $\ell$, $T$    &  1  &  5  &  20  &  50  &  100  &  500  &  1000  &  10000  \\ 
\hline 
1    &   0.098    &   0.115    &   0.185    &   0.279    &   0.374    &   0.658    &   0.862    &   3.016    \\ 
5    & {\bf  0.094}   & {\bf  0.093}   &   0.110    &   0.143    &   0.183    &   0.333    &   0.449    &   1.316    \\ 
20    &   0.123    &   0.109    &   0.101    &   0.108    &   0.120    &   0.199    &   0.254    &   0.614    \\ 
50    &   0.131    &   0.108    & {\bf  0.100}   & {\bf  0.099}   &   0.105    &   0.161    &   0.205    &   0.417    \\ 
100    &   0.144    &   0.122    &   0.105    &   0.099    & {\bf  0.097}   &   0.109    &   0.131    &   0.310    \\ 
500    &   0.156    &   0.130    &   0.120    &   0.114    &   0.110    &   0.101    &   0.099    &   0.147    \\ 
1000    &   0.161    &   0.148    &   0.137    &   0.126    &   0.118    &   0.099    &   0.092    &   0.088    \\ 
10000    &   0.165    &   0.140    &   0.120    &   0.107    &   0.099    & {\bf  0.082}   & {\bf  0.075}   & {\bf  0.054}   \\ 
\end{tabular} 
} 

\notinproc{
\parbox{.45\linewidth}{\centering 
  discrete $ k =  100 $, $\alpha =  1.5 $,  $m =  10^5 $, $rep =  500 $,   relerr  1-pass  \\
\begin{tabular}{r | r r r r r r r r}
\hline 
 $\ell$, $T$    &  1  &  5  &  20  &  50  &  100  &  500  &  1000  &
 10000  \\ 
\hline 
1    & {\bf  0.081}   &   0.105    &   0.151    &   0.205    &   0.256    &   0.439    &   0.556    &   0.958    \\ 
5    &   0.091    & {\bf  0.075}   &   0.091    &   0.114    &   0.138    &   0.243    &   0.309    &   0.687    \\ 
20    &   0.122    &   0.089    & {\bf  0.074}   &   0.080    &   0.091    &   0.146    &   0.188    &   0.419    \\ 
50    &   0.145    &   0.109    &   0.087    &   0.078    &   0.078    &   0.101    &   0.125    &   0.290    \\ 
100    &   0.150    &   0.111    &   0.083    & {\bf  0.070}   & {\bf  0.066}   &   0.077    &   0.093    &   0.199    \\ 
500    &   0.176    &   0.123    &   0.098    &   0.080    &   0.069    &   0.051    &   0.045    &   0.043    \\ 
1000    &   0.175    &   0.125    &   0.093    &   0.078    &   0.068    & {\bf  0.047}   &   0.038    &   0.022    \\ 
10000    &   0.175    &   0.134    &   0.100    &   0.081    &   0.071    &   0.047    & {\bf  0.037}   & {\bf  0.019}   \\ 
\end{tabular} 
}
\parbox{.45\linewidth}{\centering 
  discrete $ k =  100 $, $\alpha =  1.5 $,  $m =  10^5 $, $rep =  500 $,   rel err  2-pass  \\
\begin{tabular}{||r | r r r r r r r r}
\hline 
 $\ell$, $T$    &  1  &  5  &  20  &  50  &  100  &  500  &  1000  &
 10000  \\ 
\hline 
1    & {\bf  0.081}   &   0.105    &   0.151    &   0.205    &   0.256    &   0.439    &   0.556    &   0.958    \\ 
5    &   0.087    & {\bf  0.075}   &   0.091    &   0.114    &   0.138    &   0.244    &   0.309    &   0.687    \\ 
20    &   0.110    &   0.087    & {\bf  0.074}   &   0.079    &   0.090    &   0.144    &   0.187    &   0.419    \\ 
50    &   0.129    &   0.100    &   0.085    &   0.078    &   0.078    &   0.101    &   0.125    &   0.290    \\ 
100    &   0.137    &   0.103    &   0.079    & {\bf  0.069}   & {\bf  0.065}   &   0.078    &   0.093    &   0.199    \\ 
500    &   0.159    &   0.117    &   0.092    &   0.077    &   0.068    &   0.049    &   0.043    &   0.042    \\ 
1000    &   0.152    &   0.112    &   0.090    &   0.075    &   0.066    & {\bf  0.044}   &   0.036    &   0.019    \\ 
10000    &   0.163    &   0.121    &   0.094    &   0.079    &   0.070    &   0.045    & {\bf  0.036}   & {\bf  0.018}   \\ 
\end{tabular} 
}

\parbox{.45\linewidth}{\centering 
  discrete $ k =  100 $, $\alpha =  1.5 $,  $m =  10^5 $, $rep =  500 $,   NRMSE 1-pass  \\
\begin{tabular}{r | r r r r r r r r}
\hline 
 $\ell$, $T$    &  1  &  5  &  20  &  50  &  100  &  500  &  1000  &
 10000  \\ 
\hline 
1    & {\bf  0.102}   &   0.133    &   0.193    &   0.267    &   0.330    &   0.556    &   0.700    &   1.448    \\ 
5    &   0.114    & {\bf  0.097}   &   0.118    &   0.148    &   0.181    &   0.312    &   0.396    &   0.862    \\ 
20    &   0.153    &   0.111    & {\bf  0.094}   &   0.101    &   0.115    &   0.183    &   0.230    &   0.508    \\ 
50    &   0.184    &   0.137    &   0.112    &   0.100    &   0.099    &   0.128    &   0.156    &   0.353    \\ 
100    &   0.192    &   0.141    &   0.106    & {\bf  0.091}   & {\bf  0.084}   &   0.097    &   0.114    &   0.240    \\ 
500    &   0.225    &   0.157    &   0.122    &   0.101    &   0.087    &   0.064    &   0.057    &   0.058    \\ 
1000    &   0.221    &   0.160    &   0.119    &   0.098    &   0.086    & {\bf  0.060}   & {\bf  0.049}   &   0.029    \\ 
10000    &   0.223    &   0.171    &   0.127    &   0.103    &   0.091    &   0.061    &   0.049    & {\bf  0.025}   \\ 
\end{tabular} 
}
\parbox{.45\linewidth}{\centering 
  discrete $ k =  100 $, $\alpha =  1.5 $,  $m =  10^5 $, $rep =  500 $,   NRMSE 2-pass  \\
\begin{tabular}{||r | r r r r r r r r}
\hline 
 $\ell$, $T$    &  1  &  5  &  20  &  50  &  100  &  500  &  1000  &
 10000  \\ 
\hline 
1    & {\bf  0.102}   &   0.133    &   0.193    &   0.267    &   0.330    &   0.556    &   0.700    &   1.448    \\ 
5    &   0.110    & {\bf  0.096}   &   0.116    &   0.147    &   0.180    &   0.312    &   0.397    &   0.862    \\ 
20    &   0.139    &   0.109    & {\bf  0.094}   &   0.099    &   0.112    &   0.182    &   0.230    &   0.508    \\ 
50    &   0.163    &   0.129    &   0.108    &   0.100    &   0.099    &   0.127    &   0.156    &   0.353    \\ 
100    &   0.175    &   0.133    &   0.102    & {\bf  0.088}   & {\bf  0.083}   &   0.097    &   0.115    &   0.240    \\ 
500    &   0.198    &   0.148    &   0.115    &   0.097    &   0.086    &   0.062    &   0.054    &   0.057    \\ 
1000    &   0.191    &   0.144    &   0.115    &   0.097    &   0.084    & {\bf  0.056}   & {\bf  0.045}   &   0.024    \\ 
10000    &   0.207    &   0.154    &   0.120    &   0.102    &   0.089    &   0.058    &   0.047    & {\bf  0.023}   \\ 
\end{tabular} 
}

\parbox{.45\linewidth}{\centering 
  discrete $ k =  50 $, $\alpha =  1.8 $,  $m =  10^5 $, $rep =  500 $,   relerr  1-pass  \\
\begin{tabular}{r | r r r r r r r r}
\hline 
 $\ell$, $T$    &  1  &  5  &  20  &  50  &  100  &  500  &  1000  &  10000  \\ 
\hline 
1    & {\bf  0.104}   &   0.130    &   0.189    &   0.245    &   0.301    &   0.490    &   0.585    &   1.022    \\ 
5    &   0.136    & {\bf  0.111}   &   0.123    &   0.148    &   0.175    &   0.279    &   0.346    &   0.713    \\ 
20    &   0.188    &   0.123    & {\bf  0.101}   &   0.101    &   0.114    &   0.180    &   0.222    &   0.430    \\ 
50    &   0.238    &   0.156    &   0.116    & {\bf  0.099}   &   0.098    &   0.130    &   0.156    &   0.315    \\ 
100    &   0.260    &   0.164    &   0.122    &   0.100    &   0.091    &   0.097    &   0.113    &   0.228    \\ 
500    &   0.318    &   0.199    &   0.143    &   0.114    &   0.095    &   0.061    &   0.053    &   0.045    \\ 
1000    &   0.315    &   0.203    &   0.131    &   0.108    &   0.092    &   0.054    &   0.044    &   0.019    \\ 
10000    &   0.324    &   0.215    &   0.144    &   0.116    & {\bf  0.089}   & {\bf  0.051}   & {\bf  0.039}   & {\bf  0.016}   \
\end{tabular} 
}
\parbox{.45\linewidth}{\centering 
  discrete $ k =  50 $, $\alpha =  1.8 $,  $m =  10^5 $, $rep =  500 $,   rel err  2-pass  \\
\begin{tabular}{|| r | r r r r r r r r}
\hline 
 $\ell$, $T$    &  1  &  5  &  20  &  50  &  100  &  500  &  1000  &  10000  \\ 
\hline 
1    & {\bf  0.104}   &   0.130    &   0.189    &   0.245    &   0.301    &   0.490    &   0.585    &   1.022    \\ 
5    &   0.133    & {\bf  0.110}   &   0.120    &   0.147    &   0.175    &   0.279    &   0.346    &   0.713    \\ 
20    &   0.170    &   0.119    & {\bf  0.098}   &   0.100    &   0.113    &   0.180    &   0.223    &   0.430    \\ 
50    &   0.214    &   0.152    &   0.115    &   0.099    &   0.096    &   0.128    &   0.154    &   0.315    \\ 
100    &   0.218    &   0.148    &   0.117    & {\bf  0.099}   &   0.090    &   0.097    &   0.112    &   0.228    \\ 
500    &   0.253    &   0.179    &   0.135    &   0.111    &   0.094    &   0.058    &   0.049    &   0.042    \\ 
1000    &   0.269    &   0.182    &   0.129    &   0.105    & {\bf  0.088}   &   0.051    &   0.039    &   0.016    \\ 
10000    &   0.282    &   0.188    &   0.136    &   0.107    &   0.088    & {\bf  0.050}   & {\bf  0.037}   & {\bf  0.014}   \\ 
\end{tabular} 
}

\parbox{.45\linewidth}{\centering 
  discrete $ k =  50 $, $\alpha =  1.8 $,  $m =  10^5 $, $rep =  500 $,   NRMSE 1-pass  \\
\begin{tabular}{r | r r r r r r r r}
\hline 
 $\ell$, $T$    &  1  &  5  &  20  &  50  &  100  &  500  &  1000  &  10000  \\ 
\hline 
1    & {\bf  0.133}   &   0.169    &   0.247    &   0.321    &   0.396    &   0.630    &   0.753    &   1.377    \\ 
5    &   0.172    & {\bf  0.143}   &   0.162    &   0.194    &   0.230    &   0.365    &   0.445    &   0.866    \\ 
20    &   0.234    &   0.156    & {\bf  0.128}   &   0.129    &   0.143    &   0.224    &   0.275    &   0.531    \\ 
50    &   0.302    &   0.191    &   0.143    &   0.126    &   0.126    &   0.165    &   0.198    &   0.385    \\ 
100    &   0.327    &   0.206    &   0.154    & {\bf  0.126}   &   0.113    &   0.123    &   0.142    &   0.276    \\ 
500    &   0.397    &   0.252    &   0.181    &   0.150    &   0.125    &   0.080    &   0.069    &   0.065    \\ 
1000    &   0.404    &   0.258    &   0.168    &   0.137    &   0.116    &   0.069    &   0.056    &   0.025    \\ 
10000    &   0.416    &   0.272    &   0.181    &   0.145    & {\bf  0.112}   & {\bf  0.064}   & {\bf  0.049}   & {\bf  0.020}   \
\end{tabular} 
}
\parbox{.45\linewidth}{\centering 
  discrete $ k =  50 $, $\alpha =  1.8 $,  $m =  10^5 $, $rep =  500 $,   NRMSE 2-pass  \\
\begin{tabular}{||r | r r r r r r r r}
\hline 
 $\ell$, $T$    &  1  &  5  &  20  &  50  &  100  &  500  &  1000  &  10000  \\ 
\hline 
1    & {\bf  0.133}   &   0.169    &   0.247    &   0.321    &   0.396    &   0.630    &   0.753    &   1.377    \\ 
5    &   0.167    & {\bf  0.140}   &   0.160    &   0.194    &   0.230    &   0.366    &   0.445    &   0.866    \\ 
20    &   0.214    &   0.152    & {\bf  0.126}   &   0.127    &   0.143    &   0.224    &   0.276    &   0.532    \\ 
50    &   0.270    &   0.186    &   0.141    &   0.125    &   0.124    &   0.163    &   0.197    &   0.385    \\ 
100    &   0.276    &   0.187    &   0.143    & {\bf  0.123}   &   0.112    &   0.121    &   0.140    &   0.276    \\ 
500    &   0.316    &   0.225    &   0.172    &   0.144    &   0.123    &   0.076    &   0.064    &   0.064    \\ 
1000    &   0.347    &   0.231    &   0.165    &   0.134    &   0.112    &   0.065    &   0.050    &   0.021    \\ 
10000    &   0.365    &   0.235    &   0.170    &   0.133    & {\bf  0.108}   & {\bf  0.061}   & {\bf  0.045}   & {\bf  0.018}   \\
\end{tabular} 
}

\parbox{.45\linewidth}{\centering 
  discrete $ k =  50 $, $\alpha =  2 $,  $m =  10^5 $, $rep =  500 $,   relerr  1-pass  \\
\begin{tabular}{r | r r r r r r r r}
\hline 
 $\ell$, $T$    &  1  &  5  &  20  &  50  &  100  &  500  &  1000  &  10000  \\ 
\hline 
1    & {\bf  0.116}   &   0.136    &   0.184    &   0.220    &   0.261    &   0.387    &   0.466    &   0.862    \\ 
5    &   0.136    & {\bf  0.105}   &   0.114    &   0.132    &   0.159    &   0.241    &   0.294    &   0.517    \\ 
20    &   0.191    &   0.123    &   0.098    &   0.097    &   0.107    &   0.154    &   0.184    &   0.338    \\ 
50    &   0.225    &   0.144    & {\bf  0.094}   & {\bf  0.083}   &   0.082    &   0.107    &   0.127    &   0.227    \\ 
100    &   0.265    &   0.166    &   0.112    &   0.090    &   0.078    &   0.075    &   0.083    &   0.149    \\ 
500    &   0.314    &   0.171    &   0.114    &   0.086    &   0.068    &   0.036    &   0.027    &   0.011    \\ 
1000    &   0.303    &   0.192    &   0.120    &   0.086    &   0.068    &   0.037    &   0.028    &   0.011    \\ 
10000    &   0.315    &   0.189    &   0.120    &   0.085    & {\bf  0.065}   & {\bf  0.033}   & {\bf  0.025}   & {\bf  0.009}   \\
\end{tabular} 
}
\parbox{.45\linewidth}{\centering 
  discrete $ k =  50 $, $\alpha =  2 $,  $m =  10^5$, $rep =  500 $,   rel err  2-pass  \\
\begin{tabular}{||r | r r r r r r r r}
\hline 
 $\ell$, $T$    &  1  &  5  &  20  &  50  &  100  &  500  &  1000  &  10000  \\ 
\hline 
1    & {\bf  0.116}   &   0.136    &   0.184    &   0.220    &   0.261    &   0.387    &   0.466    &   0.862    \\ 
5    &   0.131    & {\bf  0.103}   &   0.111    &   0.131    &   0.158    &   0.241    &   0.294    &   0.517    \\ 
20    &   0.164    &   0.119    &   0.097    &   0.097    &   0.107    &   0.155    &   0.184    &   0.338    \\ 
50    &   0.192    &   0.130    & {\bf  0.094}   & {\bf  0.080}   &   0.079    &   0.106    &   0.126    &   0.227    \\ 
100    &   0.231    &   0.152    &   0.109    &   0.087    &   0.076    &   0.072    &   0.081    &   0.148    \\ 
500    &   0.253    &   0.162    &   0.112    &   0.084    &   0.065    &   0.033    &   0.024    &   0.010    \\ 
1000    &   0.262    &   0.172    &   0.113    &   0.085    &   0.066    &   0.032    &   0.023    &   0.008    \\ 
10000    &   0.259    &   0.168    &   0.114    &   0.082    & {\bf  0.063}   & {\bf  0.031}   & {\bf  0.022}   & {\bf  0.008} \\
\end{tabular} 
}
}

\parbox{.45\linewidth}{\centering 
  discrete $ k =  50 $, $\alpha =  2 $,  $m =  10^5$, $rep =  500 $,   NRMSE 1-pass  \\
\begin{tabular}{r | r r r r r r r r}
\hline 
 $\ell$, $T$    &  1  &  5  &  20  &  50  &  100  &  500  &  1000  &  10000  \\ 
\hline 
1    & {\bf  0.145}   &   0.172    &   0.235    &   0.290    &   0.345    &   0.505    &   0.601    &   1.063    \\ 
5    &   0.174    & {\bf  0.134}   &   0.147    &   0.170    &   0.202    &   0.311    &   0.370    &   0.636    \\ 
20    &   0.243    &   0.153    &   0.123    &   0.126    &   0.138    &   0.196    &   0.232    &   0.421    \\ 
50    &   0.280    &   0.181    & {\bf  0.120}   & {\bf  0.106}   &   0.104    &   0.134    &   0.160    &   0.282    \\ 
100    &   0.343    &   0.211    &   0.146    &   0.116    &   0.099    &   0.097    &   0.107    &   0.185    \\ 
500    &   0.397    &   0.222    &   0.141    &   0.107    &   0.085    &   0.046    &   0.035    &   0.018    \\ 
1000    &   0.384    &   0.243    &   0.156    &   0.110    &   0.086    &   0.047    &   0.036    &   0.013    \\ 
10000    &   0.397    &   0.231    &   0.150    &   0.107    & {\bf  0.083}   & {\bf  0.043}   & {\bf  0.032}   & {\bf  0.012}
\end{tabular} 
}
\parbox{.45\linewidth}{\centering 
  discrete $ k =  50 $, $\alpha =  2 $,  $m =  10^5$, $rep =  500 $,   NRMSE 2-pass  \\
\begin{tabular}{||r | r r r r r r r r}
\hline 
 $\ell$, $T$    &  1  &  5  &  20  &  50  &  100  &  500  &  1000  &  10000  \\ 
\hline 
1    & {\bf  0.145}   &   0.172    &   0.235    &   0.290    &   0.345    &   0.505    &   0.601    &   1.063    \\ 
5    &   0.165    & {\bf  0.132}   &   0.145    &   0.169    &   0.201    &   0.311    &   0.370    &   0.636    \\ 
20    &   0.205    &   0.147    &   0.122    &   0.125    &   0.137    &   0.196    &   0.232    &   0.422    \\ 
50    &   0.241    &   0.165    & {\bf  0.120}   &   0.104    &   0.101    &   0.132    &   0.158    &   0.282    \\ 
100    &   0.294    &   0.194    &   0.140    &   0.112    &   0.097    &   0.094    &   0.105    &   0.184    \\ 
500    &   0.320    &   0.202    &   0.138    &   0.105    &   0.082    &   0.042    &   0.031    &   0.016    \\ 
1000    &   0.334    &   0.216    &   0.146    &   0.109    &   0.084    &   0.041    &   0.029    &   0.010    \\ 
10000    &   0.326    &   0.206    &   0.140    & {\bf  0.104}   & {\bf  0.081}   & {\bf  0.040}   & {\bf  0.028}   & {\bf  0.010}
\end{tabular} 
}
}
\caption{Simulation Results for Discrete SH$_\ell$ \label{DiscreteRes:fig}}
\end{figure*}

\begin{figure*}[htbp]
{\tiny
\notinproc{
\parbox{.45\linewidth}{\centering
  continuous $ k =  100 $, $\alpha =  1.1 $,  $m =  100000 $, $rep =  500 $,   relerr  1-pass  \\
\begin{tabular}{r | r r r r r r r r}
\hline 
 $\ell$, $T$    &  1  &  5  &  20  &  50  &  100  &  500  &  1000  &  10000  \\ 
\hline 
0.1    & {\bf  0.079}   &   0.094    &   0.140    &   0.196    &   0.248    &   0.394    &   0.456    &   0.665    \\ 
1    &   0.079    &   0.086    &   0.119    &   0.164    &   0.210    &   0.356    &   0.432    &   0.640    \\ 
5    &   0.086    & {\bf  0.079}   &   0.086    &   0.106    &   0.131    &   0.243    &   0.310    &   0.498    \\ 
20    &   0.089    &   0.084    &   0.084    &   0.089    &   0.098    &   0.153    &   0.196    &   0.394    \\ 
50    &   0.090    &   0.083    & {\bf  0.081}   & {\bf  0.081}   &   0.085    &   0.114    &   0.139    &   0.295    \\ 
100    &   0.099    &   0.089    &   0.083    &   0.082    & {\bf  0.081}   &   0.089    &   0.104    &   0.218    \\ 
500    &   0.111    &   0.102    &   0.094    &   0.093    &   0.090    &   0.083    &   0.082    &   0.096    \\ 
1000    &   0.114    &   0.104    &   0.097    &   0.092    &   0.087    &   0.080    &   0.076    &   0.065    \\ 
10000    &   0.106    &   0.097    &   0.091    &   0.086    &   0.084    & {\bf  0.076}   & {\bf  0.073}   & {\bf  0.062}   \\ 
\end{tabular} 
}
\parbox{.45\linewidth}{\centering
  continuous $ k =  100 $, $\alpha =  1.1 $,  $m =  100000 $, $rep =  500 $,   rel err  2-pass  \\
\begin{tabular}{||r | r r r r r r r r}
\hline 
 $\ell$, $T$    &  1  &  5  &  20  &  50  &  100  &  500  &  1000  &  10000  \\ 
\hline 
0.1    &   0.079    &   0.094    &   0.141    &   0.196    &   0.249    &   0.394    &   0.456    &   0.665    \\ 
1    & {\bf  0.078}   &   0.085    &   0.118    &   0.163    &   0.210    &   0.355    &   0.431    &   0.640    \\ 
5    &   0.083    & {\bf  0.079}   &   0.087    &   0.105    &   0.132    &   0.245    &   0.310    &   0.499    \\ 
20    &   0.087    &   0.084    &   0.083    &   0.088    &   0.097    &   0.153    &   0.196    &   0.393    \\ 
50    &   0.087    &   0.083    & {\bf  0.081}   & {\bf  0.080}   &   0.084    &   0.114    &   0.139    &   0.295    \\ 
100    &   0.096    &   0.088    &   0.083    &   0.081    & {\bf  0.080}   &   0.088    &   0.103    &   0.219    \\ 
500    &   0.107    &   0.099    &   0.094    &   0.091    &   0.089    &   0.083    &   0.081    &   0.097    \\ 
1000    &   0.109    &   0.101    &   0.094    &   0.090    &   0.086    &   0.079    &   0.076    &   0.065    \\ 
10000    &   0.103    &   0.094    &   0.087    &   0.085    &   0.083    & {\bf  0.076}   & {\bf  0.072}   & {\bf  0.062}   \\ 
\end{tabular} 
}}
}

\noindent
{\tiny
\parbox{.45\linewidth}{\centering
  continuous $ k =  100 $, $\alpha =  1.1 $,  $m =  100000 $, $rep =  500 $,   NRMSE 1-pass  \\
\begin{tabular}{r | r r r r r r r r}
\hline 
 $\ell$, $T$    &  1  &  5  &  20  &  50  &  100  &  500  &  1000  &  10000  \\ 
\hline 
0.1    &   0.098    &   0.118    &   0.180    &   0.252    &   0.326    &   0.648    &   0.897    &   2.399    \\ 
1    & {\bf  0.098}   &   0.108    &   0.150    &   0.207    &   0.267    &   0.541    &   0.781    &   2.006    \\ 
5    &   0.109    & {\bf  0.100}   &   0.110    &   0.135    &   0.170    &   0.316    &   0.432    &   1.135    \\ 
20    &   0.114    &   0.106    &   0.105    &   0.112    &   0.126    &   0.198    &   0.252    &   0.672    \\ 
50    &   0.117    &   0.106    & {\bf  0.103}   &   0.105    &   0.108    &   0.145    &   0.179    &   0.418    \\ 
100    &   0.125    &   0.112    &   0.103    & {\bf  0.101}   & {\bf  0.101}   &   0.114    &   0.133    &   0.285    \\ 
500    &   0.141    &   0.130    &   0.122    &   0.119    &   0.115    &   0.106    &   0.103    &   0.120    \\ 
1000    &   0.144    &   0.133    &   0.123    &   0.118    &   0.112    &   0.102    &   0.097    &   0.083    \\ 
10000    &   0.133    &   0.121    &   0.115    &   0.110    &   0.108    & {\bf  0.098}   & {\bf  0.094}   & {\bf  0.080}   \\ 
\end{tabular} 
}
\parbox{.45\linewidth}{\centering
  continuous $ k =  100 $, $\alpha =  1.1 $,  $m =  100000 $, $rep =  500 $,   NRMSE 2-pass  \\
\begin{tabular}{|| r | r r r r r r r r}
\hline 
 $\ell$, $T$    &  1  &  5  &  20  &  50  &  100  &  500  &  1000  &  10000  \\ 
\hline 
0.1    &   0.098    &   0.118    &   0.180    &   0.252    &   0.326    &   0.649    &   0.897    &   2.399    \\ 
1    & {\bf  0.097}   &   0.107    &   0.149    &   0.206    &   0.266    &   0.540    &   0.780    &   2.005    \\ 
5    &   0.106    & {\bf  0.100}   &   0.111    &   0.135    &   0.171    &   0.316    &   0.433    &   1.135    \\ 
20    &   0.112    &   0.106    &   0.104    &   0.111    &   0.125    &   0.197    &   0.251    &   0.671    \\ 
50    &   0.112    &   0.106    & {\bf  0.103}   &   0.103    &   0.108    &   0.144    &   0.178    &   0.418    \\ 
100    &   0.120    &   0.111    &   0.103    & {\bf  0.101}   & {\bf  0.100}   &   0.113    &   0.131    &   0.285    \\ 
500    &   0.137    &   0.128    &   0.121    &   0.117    &   0.114    &   0.105    &   0.103    &   0.120    \\ 
1000    &   0.138    &   0.130    &   0.121    &   0.115    &   0.111    &   0.101    &   0.097    &   0.082    \\ 
10000    &   0.130    &   0.119    &   0.112    &   0.109    &   0.106    & {\bf  0.097}   & {\bf  0.093}   & {\bf  0.079}   \\ 
\end{tabular} 
}}

\notinproc{
\noindent
{\tiny
\parbox{.45\linewidth}{\centering
  continuous $ k =  100 $, $\alpha =  1.2 $,  $m =  100000 $, $rep =  500 $,   relerr  1-pass  \\
\begin{tabular}{r | r r r r r r r r}
\hline 
 $\ell$, $T$    &  1  &  5  &  20  &  50  &  100  &  500  &  1000  &  10000  \\ 
\hline 
0.1    &   0.079    &   0.101    &   0.153    &   0.216    &   0.281    &   0.493    &   0.601    &   1.053    \\ 
1    & {\bf  0.079}   &   0.089    &   0.127    &   0.178    &   0.232    &   0.427    &   0.535    &   0.900    \\ 
5    &   0.087    & {\bf  0.079}   &   0.088    &   0.113    &   0.143    &   0.269    &   0.353    &   0.635    \\ 
20    &   0.097    &   0.083    & {\bf  0.074}   & {\bf  0.081}   &   0.095    &   0.163    &   0.208    &   0.477    \\ 
50    &   0.121    &   0.100    &   0.091    &   0.087    &   0.088    &   0.116    &   0.148    &   0.333    \\ 
100    &   0.121    &   0.102    &   0.091    &   0.084    & {\bf  0.081}   &   0.095    &   0.110    &   0.237    \\ 
500    &   0.129    &   0.110    &   0.099    &   0.093    &   0.089    &   0.074    &   0.068    &   0.063    \\ 
1000    &   0.138    &   0.115    &   0.099    &   0.094    &   0.091    &   0.073    &   0.066    &   0.049    \\ 
10000    &   0.135    &   0.117    &   0.101    &   0.090    &   0.085    & {\bf  0.070}   & {\bf  0.064}   & {\bf  0.048}   \\ 
\end{tabular} 
}
\parbox{.45\linewidth}{\centering
  continuous $ k =  100 $, $\alpha =  1.2 $,  $m =  100000 $, $rep =  500 $,   rel err  2-pass  \\
\begin{tabular}{||r | r r r r r r r r}
\hline 
 $\ell$, $T$    &  1  &  5  &  20  &  50  &  100  &  500  &  1000  &  10000  \\ 
\hline 
0.1    &   0.078    &   0.100    &   0.153    &   0.217    &   0.281    &   0.493    &   0.601    &   1.053    \\ 
1    & {\bf  0.077}   &   0.088    &   0.128    &   0.179    &   0.233    &   0.426    &   0.535    &   0.900    \\ 
5    &   0.084    & {\bf  0.078}   &   0.088    &   0.113    &   0.142    &   0.269    &   0.353    &   0.635    \\ 
20    &   0.094    &   0.081    & {\bf  0.074}   & {\bf  0.081}   &   0.095    &   0.163    &   0.209    &   0.478    \\ 
50    &   0.116    &   0.098    &   0.090    &   0.086    &   0.087    &   0.114    &   0.146    &   0.333    \\ 
100    &   0.117    &   0.101    &   0.089    &   0.083    & {\bf  0.081}   &   0.092    &   0.108    &   0.237    \\ 
500    &   0.121    &   0.106    &   0.096    &   0.090    &   0.086    &   0.074    &   0.068    &   0.063    \\ 
1000    &   0.129    &   0.110    &   0.098    &   0.092    &   0.087    &   0.073    &   0.066    &   0.048    \\ 
10000    &   0.128    &   0.112    &   0.097    &   0.089    &   0.083    & {\bf  0.069}   & {\bf  0.063}   & {\bf  0.046}   \\ 
\end{tabular} 
}}

\noindent
{\tiny
\parbox{.45\linewidth}{\centering
  continuous $ k =  100 $, $\alpha =  1.2 $,  $m =  100000 $, $rep =  500 $,   NRMSE 1-pass  \\
\begin{tabular}{r | r r r r r r r r}
\hline 
 $\ell$, $T$    &  1  &  5  &  20  &  50  &  100  &  500  &  1000  &  10000  \\ 
\hline 
0.1    & {\bf  0.099}   &   0.126    &   0.189    &   0.267    &   0.348    &   0.676    &   0.942    &   3.061    \\ 
1    &   0.099    &   0.111    &   0.161    &   0.225    &   0.291    &   0.565    &   0.788    &   2.190    \\ 
5    &   0.109    & {\bf  0.100}   &   0.111    &   0.142    &   0.182    &   0.336    &   0.444    &   1.070    \\ 
20    &   0.122    &   0.105    & {\bf  0.094}   & {\bf  0.102}   &   0.118    &   0.202    &   0.261    &   0.649    \\ 
50    &   0.150    &   0.127    &   0.115    &   0.111    &   0.111    &   0.148    &   0.187    &   0.416    \\ 
100    &   0.153    &   0.128    &   0.116    &   0.106    & {\bf  0.102}   &   0.117    &   0.137    &   0.293    \\ 
500    &   0.161    &   0.139    &   0.125    &   0.116    &   0.111    &   0.092    &   0.086    &   0.081    \\ 
1000    &   0.173    &   0.145    &   0.125    &   0.117    &   0.111    &   0.093    &   0.085    &   0.062    \\ 
10000    &   0.169    &   0.142    &   0.125    &   0.113    &   0.106    & {\bf  0.087}   & {\bf  0.079}   & {\bf  0.059}   \\ 
\end{tabular} 
}
\parbox{.45\linewidth}{\centering
  continuous $ k =  100 $, $\alpha =  1.2 $,  $m =  100000 $, $rep =  500 $,   NRMSE 2-pass  \\
\begin{tabular}{||r | r r r r r r r r}
\hline 
 $\ell$, $T$    &  1  &  5  &  20  &  50  &  100  &  500  &  1000  &  10000  \\ 
\hline 
0.1    & {\bf  0.097}   &   0.125    &   0.189    &   0.267    &   0.348    &   0.677    &   0.942    &   3.061    \\ 
1    &   0.097    &   0.111    &   0.162    &   0.225    &   0.291    &   0.565    &   0.787    &   2.190    \\ 
5    &   0.106    & {\bf  0.099}   &   0.110    &   0.141    &   0.181    &   0.336    &   0.443    &   1.070    \\ 
20    &   0.119    &   0.102    & {\bf  0.093}   & {\bf  0.101}   &   0.119    &   0.203    &   0.262    &   0.650    \\ 
50    &   0.145    &   0.125    &   0.114    &   0.110    &   0.111    &   0.146    &   0.185    &   0.415    \\ 
100    &   0.147    &   0.127    &   0.113    &   0.105    & {\bf  0.102}   &   0.114    &   0.135    &   0.293    \\ 
500    &   0.152    &   0.134    &   0.122    &   0.115    &   0.109    &   0.092    &   0.085    &   0.080    \\ 
1000    &   0.161    &   0.138    &   0.123    &   0.114    &   0.108    &   0.092    &   0.084    &   0.060    \\ 
10000    &   0.159    &   0.138    &   0.120    &   0.110    &   0.103    & {\bf  0.087}   & {\bf  0.079}   & {\bf  0.057}   \\ 
\end{tabular} 
}}

\noindent
{\tiny
\parbox{.45\linewidth}{\centering
  continuous $ k =  100 $, $\alpha =  1.5 $,  $m =  100000 $, $rep =  500 $,   relerr  1-pass  \\
\begin{tabular}{r | r r r r r r r r}
\hline 
 $\ell$, $T$    &  1  &  5  &  20  &  50  &  100  &  500  &  1000  &  10000  \\ 
\hline 
0.1    & {\bf  0.083}   &   0.108    &   0.156    &   0.208    &   0.267    &   0.444    &   0.555    &   0.990    \\ 
1    &   0.083    &   0.096    &   0.132    &   0.177    &   0.230    &   0.379    &   0.456    &   0.900    \\ 
5    &   0.094    & {\bf  0.078}   &   0.091    &   0.113    &   0.137    &   0.236    &   0.304    &   0.622    \\ 
20    &   0.122    &   0.090    & {\bf  0.077}   &   0.080    &   0.090    &   0.142    &   0.176    &   0.368    \\ 
50    &   0.151    &   0.107    &   0.081    & {\bf  0.072}   &   0.073    &   0.097    &   0.118    &   0.235    \\ 
100    &   0.172    &   0.118    &   0.090    &   0.072    & {\bf  0.065}   &   0.062    &   0.071    &   0.137    \\ 
500    &   0.178    &   0.131    &   0.101    &   0.082    &   0.071    & {\bf  0.046}   &   0.038    &   0.020    \\ 
1000    &   0.180    &   0.131    &   0.097    &   0.083    &   0.070    &   0.047    & {\bf  0.038}   & {\bf  0.019}   \\ 
10000    &   0.182    &   0.128    &   0.103    &   0.085    &   0.072    &   0.047    &   0.038    &   0.020    \\ 
\end{tabular} 
}
\parbox{.45\linewidth}{\centering
  continuous $ k =  100 $, $\alpha =  1.5 $,  $m =  100000 $, $rep =  500 $,   rel err  2-pass  \\
\begin{tabular}{||r | r r r r r r r r}
\hline 
 $\ell$, $T$    &  1  &  5  &  20  &  50  &  100  &  500  &  1000  &  10000  \\ 
\hline 
0.1    &   0.083    &   0.108    &   0.157    &   0.208    &   0.266    &   0.444    &   0.555    &   0.990    \\ 
1    & {\bf  0.081}   &   0.095    &   0.132    &   0.177    &   0.229    &   0.379    &   0.455    &   0.900    \\ 
5    &   0.090    & {\bf  0.077}   &   0.089    &   0.112    &   0.136    &   0.235    &   0.304    &   0.622    \\ 
20    &   0.114    &   0.086    & {\bf  0.077}   &   0.079    &   0.089    &   0.141    &   0.175    &   0.368    \\ 
50    &   0.131    &   0.100    &   0.079    & {\bf  0.072}   &   0.072    &   0.095    &   0.117    &   0.236    \\ 
100    &   0.151    &   0.109    &   0.086    &   0.072    & {\bf  0.064}   &   0.060    &   0.069    &   0.137    \\ 
500    &   0.161    &   0.122    &   0.095    &   0.080    &   0.069    & {\bf  0.045}   & {\bf  0.036}   & {\bf  0.018}   \\ 
1000    &   0.160    &   0.118    &   0.092    &   0.079    &   0.069    &   0.045    &   0.036    &   0.018    \\ 
10000    &   0.161    &   0.120    &   0.097    &   0.082    &   0.070    &   0.045    &   0.036    &   0.018    \\ 
\end{tabular} 
}}
} 

\noindent
{\tiny
\parbox{.45\linewidth}{\centering
  continuous $ k =  100 $, $\alpha =  1.5 $,  $m =  100000 $, $rep =  500 $,   NRMSE 1-pass  \\
\begin{tabular}{r | r r r r r r r r}
\hline 
 $\ell$, $T$    &  1  &  5  &  20  &  50  &  100  &  500  &  1000  &  10000  \\ 
\hline 
0.1    &   0.106    &   0.134    &   0.198    &   0.266    &   0.341    &   0.558    &   0.688    &   1.508    \\ 
1    & {\bf  0.103}   &   0.120    &   0.168    &   0.228    &   0.292    &   0.478    &   0.586    &   1.320    \\ 
5    &   0.119    & {\bf  0.096}   &   0.113    &   0.142    &   0.174    &   0.301    &   0.382    &   0.766    \\ 
20    &   0.152    &   0.115    & {\bf  0.096}   &   0.100    &   0.112    &   0.176    &   0.220    &   0.455    \\ 
50    &   0.190    &   0.136    &   0.102    & {\bf  0.092}   &   0.092    &   0.121    &   0.148    &   0.294    \\ 
100    &   0.214    &   0.152    &   0.115    &   0.092    & {\bf  0.082}   &   0.078    &   0.088    &   0.167    \\ 
500    &   0.225    &   0.169    &   0.129    &   0.105    &   0.089    &   0.059    &   0.049    &   0.025    \\ 
1000    &   0.224    &   0.163    &   0.122    &   0.102    &   0.088    & {\bf  0.059}   & {\bf  0.048}   & {\bf  0.024}   \\ 
10000    &   0.230    &   0.162    &   0.130    &   0.108    &   0.091    &   0.059    &   0.049    &   0.025    \\ 
\end{tabular} 
}
\parbox{.45\linewidth}{\centering
  continuous $ k =  100 $, $\alpha =  1.5 $,  $m =  100000 $, $rep =  500 $,   NRMSE 2-pass  \\
\begin{tabular}{||r | r r r r r r r r}
\hline 
 $\ell$, $T$    &  1  &  5  &  20  &  50  &  100  &  500  &  1000  &  10000  \\ 
\hline 
0.1    &   0.106    &   0.134    &   0.198    &   0.266    &   0.341    &   0.558    &   0.688    &   1.508    \\ 
1    & {\bf  0.101}   &   0.118    &   0.167    &   0.228    &   0.292    &   0.478    &   0.586    &   1.320    \\ 
5    &   0.113    & {\bf  0.096}   &   0.111    &   0.140    &   0.172    &   0.300    &   0.381    &   0.766    \\ 
20    &   0.142    &   0.110    & {\bf  0.095}   &   0.098    &   0.110    &   0.175    &   0.219    &   0.454    \\ 
50    &   0.168    &   0.126    &   0.100    & {\bf  0.091}   &   0.091    &   0.120    &   0.147    &   0.294    \\ 
100    &   0.191    &   0.139    &   0.109    &   0.091    & {\bf  0.082}   &   0.076    &   0.087    &   0.167    \\ 
500    &   0.203    &   0.154    &   0.121    &   0.102    &   0.088    &   0.057    &   0.045    &   0.022    \\ 
1000    &   0.198    &   0.146    &   0.117    &   0.099    &   0.086    & {\bf  0.056}   & {\bf  0.045}   & {\bf  0.022}   \\ 
10000    &   0.205    &   0.152    &   0.122    &   0.104    &   0.089    &   0.057    &   0.046    &   0.023    \\ 
\end{tabular} 
}}

\notinproc{
\noindent
{\tiny
\parbox{.45\linewidth}{\centering
  continuous $ k =  50 $, $\alpha =  1.8 $,  $m =  100000 $, $rep =  500 $,   relerr  1-pass  \\
\begin{tabular}{r | r r r r r r r r}
\hline 
 $\ell$, $T$    &  1  &  5  &  20  &  50  &  100  &  500  &  1000  &  10000  \\ 
\hline 
0.1    & {\bf  0.106}   &   0.139    &   0.197    &   0.258    &   0.310    &   0.463    &   0.561    &   1.009    \\ 
1    &   0.112    &   0.125    &   0.172    &   0.220    &   0.266    &   0.411    &   0.497    &   0.899    \\ 
5    &   0.135    & {\bf  0.104}   &   0.122    &   0.151    &   0.181    &   0.287    &   0.348    &   0.642    \\ 
20    &   0.205    &   0.133    & {\bf  0.108}   &   0.106    &   0.121    &   0.178    &   0.213    &   0.400    \\ 
50    &   0.258    &   0.167    &   0.120    & {\bf  0.103}   &   0.097    &   0.119    &   0.143    &   0.259    \\ 
100    &   0.295    &   0.196    &   0.133    &   0.107    & {\bf  0.092}   &   0.078    &   0.084    &   0.152    \\ 
500    &   0.324    &   0.214    &   0.153    &   0.118    &   0.094    &   0.054    &   0.041    &   0.017    \\ 
1000    &   0.335    &   0.220    &   0.152    &   0.118    &   0.096    &   0.055    &   0.042    &   0.018    \\ 
10000    &   0.339    &   0.200    &   0.141    &   0.118    &   0.093    & {\bf  0.053}   & {\bf  0.041}   & {\bf  0.017}   \\ 
\end{tabular} 
}
\parbox{.45\linewidth}{\centering
  continuous $ k =  50 $, $\alpha =  1.8 $,  $m =  100000 $, $rep =  500 $,   rel err  2-pass  \\
\begin{tabular}{||r | r r r r r r r r}
\hline 
 $\ell$, $T$    &  1  &  5  &  20  &  50  &  100  &  500  &  1000  &  10000  \\ 
\hline 
0.1    & {\bf  0.106}   &   0.139    &   0.198    &   0.258    &   0.310    &   0.464    &   0.561    &   1.009    \\ 
1    &   0.112    &   0.124    &   0.172    &   0.220    &   0.266    &   0.411    &   0.497    &   0.899    \\ 
5    &   0.129    & {\bf  0.102}   &   0.120    &   0.149    &   0.180    &   0.286    &   0.348    &   0.642    \\ 
20    &   0.174    &   0.127    & {\bf  0.106}   &   0.106    &   0.118    &   0.177    &   0.213    &   0.400    \\ 
50    &   0.215    &   0.151    &   0.118    & {\bf  0.103}   &   0.096    &   0.119    &   0.143    &   0.258    \\ 
100    &   0.250    &   0.171    &   0.125    &   0.104    & {\bf  0.090}   &   0.078    &   0.083    &   0.152    \\ 
500    &   0.279    &   0.194    &   0.140    &   0.112    &   0.092    &   0.052    &   0.039    &   0.015    \\ 
1000    &   0.280    &   0.191    &   0.143    &   0.115    &   0.095    &   0.053    &   0.039    &   0.015    \\ 
10000    &   0.279    &   0.187    &   0.136    &   0.110    &   0.091    & {\bf  0.051}   & {\bf  0.038}   & {\bf  0.015}   \\ 
\end{tabular} 
}}

\noindent
{\tiny
\parbox{.45\linewidth}{\centering
  continuous $ k =  50 $, $\alpha =  1.8 $,  $m =  100000 $, $rep =  500 $,   NRMSE 1-pass  \\
\begin{tabular}{r | r r r r r r r r}
\hline 
 $\ell$, $T$    &  1  &  5  &  20  &  50  &  100  &  500  &  1000  &  10000  \\ 
\hline 
0.1    & {\bf  0.135}   &   0.179    &   0.254    &   0.328    &   0.393    &   0.588    &   0.713    &   1.403    \\ 
1    &   0.142    &   0.161    &   0.220    &   0.284    &   0.342    &   0.518    &   0.623    &   1.166    \\ 
5    &   0.172    & {\bf  0.132}   &   0.151    &   0.189    &   0.227    &   0.360    &   0.434    &   0.782    \\ 
20    &   0.256    &   0.165    & {\bf  0.135}   &   0.133    &   0.152    &   0.227    &   0.275    &   0.498    \\ 
50    &   0.320    &   0.212    &   0.151    & {\bf  0.129}   &   0.123    &   0.153    &   0.181    &   0.325    \\ 
100    &   0.368    &   0.243    &   0.166    &   0.133    & {\bf  0.115}   &   0.098    &   0.106    &   0.190    \\ 
500    &   0.413    &   0.275    &   0.193    &   0.149    &   0.118    &   0.068    &   0.053    &   0.022    \\ 
1000    &   0.418    &   0.281    &   0.191    &   0.147    &   0.122    &   0.070    &   0.054    &   0.023    \\ 
10000    &   0.423    &   0.259    &   0.184    &   0.149    &   0.118    & {\bf  0.066}   & {\bf  0.052}   & {\bf  0.021}   \\ 
\end{tabular} 
}
\parbox{.45\linewidth}{\centering
  continuous $ k =  50 $, $\alpha =  1.8 $,  $m =  100000 $, $rep =  500 $,   NRMSE 2-pass  \\
\begin{tabular}{||r | r r r r r r r r}
\hline 
 $\ell$, $T$    &  1  &  5  &  20  &  50  &  100  &  500  &  1000  &  10000  \\ 
\hline 
0.1    & {\bf  0.134}   &   0.179    &   0.254    &   0.328    &   0.394    &   0.588    &   0.713    &   1.403    \\ 
1    &   0.142    &   0.160    &   0.220    &   0.284    &   0.342    &   0.518    &   0.623    &   1.166    \\ 
5    &   0.164    & {\bf  0.131}   &   0.148    &   0.186    &   0.226    &   0.360    &   0.433    &   0.782    \\ 
20    &   0.219    &   0.156    & {\bf  0.131}   &   0.134    &   0.149    &   0.227    &   0.275    &   0.498    \\ 
50    &   0.269    &   0.191    &   0.149    & {\bf  0.129}   &   0.122    &   0.152    &   0.180    &   0.325    \\ 
100    &   0.317    &   0.214    &   0.157    &   0.129    & {\bf  0.112}   &   0.097    &   0.105    &   0.191    \\ 
500    &   0.353    &   0.245    &   0.176    &   0.142    &   0.117    &   0.066    &   0.049    &   0.019    \\ 
1000    &   0.353    &   0.242    &   0.180    &   0.144    &   0.119    &   0.067    &   0.050    &   0.019    \\ 
10000    &   0.358    &   0.239    &   0.173    &   0.139    &   0.115    & {\bf  0.064}   & {\bf  0.047}   & {\bf  0.018}   \\ 
\end{tabular} 
}}

\noindent
{\tiny
\parbox{.45\linewidth}{\centering
  continuous $ k =  50 $, $\alpha =  2 $,  $m =  100000 $, $rep =  500 $,   relerr  1-pass  \\
\begin{tabular}{r | r r r r r r r r}
\hline 
 $\ell$, $T$    &  1  &  5  &  20  &  50  &  100  &  500  &  1000  &  10000  \\ 
\hline 
0.1    & {\bf  0.100}   &   0.127    &   0.172    &   0.214    &   0.254    &   0.403    &   0.481    &   0.851    \\ 
1    &   0.103    &   0.112    &   0.152    &   0.195    &   0.234    &   0.360    &   0.425    &   0.750    \\ 
5    &   0.149    & {\bf  0.104}   &   0.114    &   0.138    &   0.161    &   0.239    &   0.282    &   0.508    \\ 
20    &   0.217    &   0.135    & {\bf  0.099}   &   0.094    &   0.099    &   0.142    &   0.168    &   0.303    \\ 
50    &   0.271    &   0.163    &   0.110    & {\bf  0.084}   &   0.074    &   0.074    &   0.084    &   0.145    \\ 
100    &   0.307    &   0.185    &   0.111    &   0.084    &   0.066    &   0.035    &   0.026    &   0.011    \\ 
500    &   0.314    &   0.190    &   0.125    &   0.090    &   0.071    &   0.036    &   0.026    &   0.010    \\ 
1000    &   0.324    &   0.183    &   0.119    &   0.084    & {\bf  0.066}   & {\bf  0.033}   & {\bf  0.024}   & {\bf  0.009}   \\ 
10000    &   0.316    &   0.194    &   0.121    &   0.090    &   0.068    &   0.035    &   0.025    &   0.009    \\ 
\end{tabular} 
}
\parbox{.45\linewidth}{\centering
  continuous $ k =  50 $, $\alpha =  2 $,  $m =  100000 $, $rep =  500 $,   rel err  2-pass  \\
\begin{tabular}{|| r | r r r r r r r r}
\hline 
 $\ell$, $T$    &  1  &  5  &  20  &  50  &  100  &  500  &  1000  &  10000  \\ 
\hline 
0.1    & {\bf  0.099}   &   0.127    &   0.172    &   0.214    &   0.254    &   0.403    &   0.481    &   0.851    \\ 
1    &   0.100    &   0.111    &   0.151    &   0.194    &   0.234    &   0.360    &   0.425    &   0.750    \\ 
5    &   0.137    & {\bf  0.104}   &   0.113    &   0.137    &   0.160    &   0.239    &   0.282    &   0.508    \\ 
20    &   0.186    &   0.128    & {\bf  0.098}   &   0.092    &   0.098    &   0.142    &   0.167    &   0.303    \\ 
50    &   0.223    &   0.142    &   0.104    &   0.083    &   0.073    &   0.073    &   0.083    &   0.145    \\ 
100    &   0.257    &   0.161    &   0.110    & {\bf  0.082}   &   0.065    &   0.031    &   0.022    &   0.009    \\ 
500    &   0.254    &   0.170    &   0.120    &   0.090    &   0.069    &   0.032    &   0.023    &   0.008    \\ 
1000    &   0.255    &   0.167    &   0.113    &   0.083    & {\bf  0.063}   & {\bf  0.030}   & {\bf  0.022}   & {\bf  0.007}   \\ 
10000    &   0.256    &   0.166    &   0.114    &   0.086    &   0.067    &   0.032    &   0.023    &   0.008    \\ 
\end{tabular} 
}}
} 

\noindent
{\tiny
\parbox{.45\linewidth}{\centering
  continuous $ k =  50 $, $\alpha =  2 $,  $m =  100000 $, $rep =  500 $,   NRMSE 1-pass  \\
\begin{tabular}{r | r r r r r r r r}
\hline 
 $\ell$, $T$    &  1  &  5  &  20  &  50  &  100  &  500  &  1000  &  10000  \\ 
\hline 
0.1    & {\bf  0.126}   &   0.159    &   0.216    &   0.274    &   0.326    &   0.502    &   0.597    &   1.061    \\ 
1    &   0.129    &   0.141    &   0.192    &   0.244    &   0.293    &   0.449    &   0.526    &   0.908    \\ 
5    &   0.193    & {\bf  0.138}   &   0.146    &   0.173    &   0.202    &   0.300    &   0.353    &   0.626    \\ 
20    &   0.277    &   0.169    & {\bf  0.124}   &   0.118    &   0.125    &   0.183    &   0.216    &   0.377    \\ 
50    &   0.339    &   0.206    &   0.140    &   0.108    &   0.094    &   0.096    &   0.108    &   0.182    \\ 
100    &   0.390    &   0.236    &   0.146    & {\bf  0.107}   &   0.085    &   0.046    &   0.034    &   0.022    \\ 
500    &   0.397    &   0.250    &   0.162    &   0.114    &   0.092    &   0.047    &   0.034    &   0.012    \\ 
1000    &   0.396    &   0.232    &   0.150    &   0.108    & {\bf  0.083}   & {\bf  0.042}   & {\bf  0.031}   & {\bf  0.011}   \\ 
10000    &   0.404    &   0.244    &   0.155    &   0.114    &   0.085    &   0.043    &   0.032    &   0.012    \\ 
\end{tabular} 
}
\parbox{.45\linewidth}{\centering
  continuous $ k =  50 $, $\alpha =  2 $,  $m =  100000 $, $rep =  500 $,   NRMSE 2-pass  \\
\begin{tabular}{|| r | r r r r r r r r}
\hline 
 $\ell$, $T$    &  1  &  5  &  20  &  50  &  100  &  500  &  1000  &  10000  \\ 
\hline 
0.1    & {\bf  0.125}   &   0.159    &   0.216    &   0.274    &   0.326    &   0.502    &   0.597    &   1.061    \\ 
1    &   0.127    &   0.139    &   0.190    &   0.244    &   0.293    &   0.449    &   0.526    &   0.908    \\ 
5    &   0.178    & {\bf  0.137}   &   0.144    &   0.172    &   0.202    &   0.300    &   0.353    &   0.626    \\ 
20    &   0.235    &   0.163    & {\bf  0.123}   &   0.116    &   0.125    &   0.183    &   0.216    &   0.378    \\ 
50    &   0.282    &   0.184    &   0.133    &   0.106    &   0.093    &   0.094    &   0.106    &   0.181    \\ 
100    &   0.327    &   0.204    &   0.140    &   0.105    &   0.083    &   0.041    &   0.030    &   0.020    \\ 
500    &   0.321    &   0.218    &   0.152    &   0.114    &   0.089    &   0.042    &   0.030    &   0.010    \\ 
1000    &   0.322    &   0.208    &   0.143    & {\bf  0.105}   & {\bf  0.080}   & {\bf  0.039}   & {\bf  0.028}   & {\bf  0.009}   \\ 
10000    &   0.326    &   0.213    &   0.147    &   0.109    &   0.084    &   0.040    &   0.028    &   0.010    \\ 
\end{tabular} 
}}
\caption{Simulation Results for Continuous SH$_\ell$ \label{ContRes:fig}}
\end{figure*}

\section{Simulations} \label{experiments:sec}

 Our experimental evaluation is aimed to understand the error 
  distribution of our estimators.  Our analysis provided statistical 
  guarantees on  the errors that are close to the ``gold standard'' attainable 
  on aggregated data.  The analysis, however is worst-case in terms of 
  the dependence on the disparity $\max\{\ell/T,T/\ell\}$, the 
factors of $(e/(e-1))^{0.5}   \approx 1.26$ (2-pass) and $(2e/(e-1))^{0.5}
\approx 1.8$ (1-pass),  which assume a worst-case frequency distribution 
(error is larger when $w_x   \approx \ell$), and not reflecting  the advantage of with-replacement 
sampling that is significant when there is skew.   We therefore 
expect actual errors to be much lower than our upper bounds. 

Our sampling algorithms and estimators were implemented in Python 
using numpy.random and hashlib libraries.  
Simulations were performed on MacBook Air and Mac mini computers.
We did not attempt  to benchmark performance in terms of running time, since 
  computationally, our algorithms are similar to  the 
widely applied distinct sampling or counting algorithms and can easily be tuned and scaled to very 
large data sets and common platforms.

We generated streams of $10^5$ elements with uniform weights.  The 
keys were drawn from a Zipf distribution with parameter $\alpha = 1,1. 1.2, 1.5, 1.8,
  2$.  This range of Zipf parameters is typical to large data sets and 
  working with them allowed us to finely understand the 
  error dependence on the skew (Zipf with larger $\alpha$ is more skewed and 
  has fewer distinct keys per number of elements).  
 The average number of distinct keys in our 
  simulations, and respective sample sizes we used,
 was $4.3 \times 10^4$ for $\alpha=1.1$ (used $k=100$);  $1.84 \times 10^4$ for $\alpha=1.2$ (used $k=100$);
  $ 3.04 \times 10^3$ for $\alpha=1.5$ (used $k=100$);
$841$ for $\alpha=1.8$ (used $k=50$); and 
$437$ for $\alpha=2$ (used $k=50$). 

  For each stream, we computed the exact frequencies of each key for 
  reference in the  error computation of the estimates. 
For a set of sample cap parameters $\ell = 1,5, 20,
  50, 100, 1000, 10000$ (and also $\ell =
  0.1$ with continuous samples), 
we  computed discrete and continuous fixed-$k$ SH$_\ell$ samples. 
 Discrete SH$_\ell$ sampling  used Algorithm \ref{ellaSH:alg}  with scoring 
 function \eqref{delementscore:eq} and continuous SH$_\ell$ sampling used 
 Algorithm \ref{cellaSH:alg}. 

   From each sample, we computed an estimate of the frequency cap 
   statistic $Q(\Cap_T,{\cal X})$ over all keys,  for parameters $T =$ 1, 5, 20, 50, 100, 1000,
   10000.  With discrete SH$_\ell$, we used the estimator of the 
   form \eqref{coefform:eq}  and computed estimation coefficients as 
   in Theorem~\ref{discreteestcoef:thm}. 
With  continuous SH$_\ell$,  we used the 
   estimator \eqref{contest:eq} with coefficient function \eqref{contbeta:eq}, which for 
  $\Cap_T$ statistics is:
$$\beta(c) = \frac{\min\{T,c\}}{\min\{1,\ell\tau\}} + \tau^{-1}
I_{c<T} \ .$$

For each $\ell,T$ combination, we also computed the estimate that is obtained from 2-pass 
   algorithms (Section~\ref{2pass:sec}), 
applied with element scoring \eqref{delementscore:eq} for discrete schemes 
   and \eqref{elementscorecont:eq} for continuous schemes.  We used 
   the inverse probability estimate $\sum_x \min\{T,w_x\}/\Phi(w_x)$,
   where $\Phi(w_x)$ is \eqref{contcumphi} for continuous schemes and 
   as outlined in Section \ref{discrete:sec} for discrete schemes. 

  For each of these estimates, we computed the 
   relative and NRMSE errors, averaged over multiple 
   ($rep=200$ or $rep=500$)  simulations (each using a fresh hash 
   function and randomness).   
Selected simulation results showing the errors for $\ell,T$
combinations are 
provided in 
Figure~\ref{DiscreteRes:fig} for discrete SH$_\ell$  and 
in Figure~\ref{ContRes:fig} for continuous SH$_\ell$.  The minimum 
error for each statistics $T$ across samples $\ell$ is boldfaced. 
\onlyinproc{Additional results are provided in the TR \cite{freqCap:2015}.}

\notinproc{\subsection*{Discussion of results}}\onlyinproc{\noindent
  {\bf Discussion:}}
 When looking at the parameter $\ell$ with smallest error for each cap
 statistics $T$, we see the diagonal pattern expected from 
 our analysis, where the error is minimized when $\ell\approx T$ and
 degrades with disparity between $T$ and $\ell$.
Note that the smallest distinct sampling threshold 
we had was $\tau \approx 0.001$ (for $\alpha=1.1$), therefore, our high 
  $\ell $ values effectively emulated uncapped SH.

Even for  these realistic distributions, we observe that 
a considerable performance gain by 
  using an appropriate  sample for our particular cap statistics. 
We can also see that the sensitivity of the error to the parameter $\ell$ increases with
skew (higher Zipf parameter $\alpha$).
In particular, the ratio of the error to the boldfaced minimum when
 using a high $\ell$ sample to estimate distinct counts
was up to a factor of  3 whereas the reverse could be 30 fold or more.
The increase in error for mid-cap statistics by using the better one of
$\ell=1,\infty$ instead of the minimum was up to 40\%.  Note however
that even this is optimistic, as we measured error on the whole
population -- on segments with frequency  distributions that do not match that of
the population, error can be much higher.\footnote{
 To make this point clearer, our selected segment was
  the whole population, which  means that for the segment, the number of samples was the
  same as when sampling using $\ell = T$.  Estimation quality
  deterioration from disparity was only due to the allocation of
  sampling probabilities within segment.  We can expect worst results
  (but again, theory bounds the worst-case pretty tightly), when
  adversely selecting segments.  For example, 
sampling a skewed distribution with very large $\ell$  and choosing
segments with small $w_x=1$ and $T=1$.  In this case, the segment can
have a high fraction of distinct keys but a small fraction of total
weight and will obtain very few samples. }


  Comparing the error of 2-pass versus streaming estimates (both are the
  same for distinct counts $\ell=1$ but diverge otherwise), we
  observe that the benefit of the second pass is limited to 10\% and
typically lower.  This agrees with our CV upper bounds which are only
slightly larger for the streaming estimates.
This suggests that the choice of scheme should depend on the computational platform.

  The $\ell=1, T=1$ estimates have NRMSE $\approx 1/\sqrt{k-2}$, this
  is because the upper bounds for
  approximate distinct counting are fairly tight~\cite{ECohen6f,ECohenADS:PODS2014}  as there is no
  dependence on the frequency distribution.    In our simulations, for
  higher cap values $T$,   the  minimum error (over $\ell$) 
was typically much lower than the CV upper bounds.
  This suggests using adaptive confidence bounds, based on
  sampled frequencies, rather than relying only on the CV upper bounds.

\section{Related Work}
 There is a large body of work on computing statistics over unaggregated data  which
 we can not hope to cover here.  The toolbox includes
deterministic 
  algorithms \cite{MisraGries:1982}, other sampling algorithms
  \cite{incsum:VLDB2009}, and 
Linear sketches (random linear projections)
\notinproc{\cite{JLlemma:1984,ams99,indyk:stable,CormodeMuthu:2005}}
\onlyinproc{\cite{JLlemma:1984,ams99,indyk:stable}.}
\notinproc{
Deterministic algorithms work well for approximate heavy hitters and quantiles.
Linear sketches project the
key-weight vectors to a lower dimensional vector. 
Linearity implies efficient updates of the sketch when processing
elements in a streaming setting.}
Most related to frequency cap statistics are
sketches based on $p$-stable distributions that are designed to estimate frequency moments for $p\in [0,1]$ \cite{indyk:stable}
and $L_p$ sampling \cite{MoWo:SODA2010,Jowhari:Saglam:Tardos:11}. 
These techniques do not apply for cap statistics, as there 
are no appropriate stable distributions for cap functions.  
\notinproc{
They are also
specific to the choice of $p$ and there is no 
support for segment queries.  $L_p$ samples,
which sample keys roughly proportionally to $w_x^p$, are
with-replacement, so less effective for skewed data,  and have
polylogarithmic encoding overhead.  
Of relevance to
us is also a characterization of all monotone frequency statistics  that can be estimated in
polylogarithmic space and a single pass \cite{BravermanOstro:STOC2010}.  The construction, however, is mostly
of theoretical interest.
Generally, linear sketches have a
significant encoding overhead and 
in practice,  when updates are positive, are outperformed by
sample-based sketches.
In particular, all practical distinct counting algorithms are based on the 
sample-based MinHash sketches 
\cite{FlajoletMartin85,hyperloglog:2007,hyperloglogpractice:EDBT2013,ECohenADS:PODS2014}
and for sum queries,  weighted SH experimentally
dominated linear sketches even in the presence of some negative 
updates \cite{CCD:sigmetrics12}.  
}
\ignore{
Beyond frequency moments, a full characterization of
monotone frequency statistics  that can be estimated in
polylogarithmic space and a single pass was provided in
\cite{BravermanOstro:STOC2010}.  The construction, however, is mostly of theoretical interest.
  $L_p$ sampling uses a sketch which samples keys with probability that is roughly proportionally
to $w_x^p$ \cite{MoWo:SODA2010,Jowhari:Saglam:Tardos:11}.  
This can be replicated to obtain a with-replacement
sample of multiple keys.  With replacement sampling, however, is much less effective
on skewed data than without replacement sampling since repetitions of
heaviest keys are likely.  More importantly,  there is
significant encoding overhead in using linear sketches for the
sampling of keys and in practice when updates are positive they do not compete with sample-based sketches:
All practical distinct counting algorithms are based on the 
sample-based MinHash sketches 
\cite{FlajoletMartin85,hyperloglog:2007,hyperloglogpractice:EDBT2013,ECohenADS:PODS2014}.  
For sum queries,  weighted SH experimentally
dominated linear sketches even in the presence of some negative 
updates \cite{CCD:sigmetrics12}.  
}





\subsection*{Conclusion}


  Frequency cap statistics are fundamental to data analysis.
  We propose a principled and practical sampling solution for scalably
  and accurately estimating frequency cap
  statistics over unaggregated data sets.  The sample is computed
  using state proportional to the specified desired sample size and the estimates have
  error bounds that nearly match those that can be obtained by
  an optimal weighted sample of the same size that can only be computed over the aggregated view.
Our design brings the benefits of approximate distinct counters,
 which are extensively deployed in the industry, to general
 frequency cap statistics.

 Looking ahead, we would like to apply
our framework for sampling unaggregated
 data sets to other statistics, extend it to support negative updates
 \cite{GemullaLH:vldb06,CCD:sigmetrics12}, and understand the
 theoretical boundaries of the approach.


\notinproc{
\section{Acknowledgement}

}\onlyinproc{\smallskip \noindent {\bf Acknowledgement:}} The author would like to thank Kevin Lang for bringing to her
  attention the use of  frequency capping in online advertising and the need for
 efficient sketches that support it.

{\onlyinproc{\small}\notinproc{\small}
\bibliographystyle{plain}
\bibliography{cycle} 
}

\onlyinproc{\end{document}}
\appendices

\section{Count distribution} \label{contdistc:sec}

  We provide the proof of Theorem \ref{cdensity:thm}.
\begin{proof}
 We first consider fixed-$\tau$ SH$_\ell$.
We use  $\Phi(w)$ \eqref{contcumphi}, which is the probability that a key of weight 
 $w$ is sampled.  This is the same as the probability of a key with
 $w_x \gg w$ starting to get counted after processing $y\leq w$ of its
 weight.
The partial derivative of $\Phi(w)$ with respect to 
 $w$ is the density function 
 $\phi(y)$ on the weight $y$  at which a key starts getting 
 counted:
\begin{eqnarray*}
\lefteqn{\phi(w) = \frac{\partial \Phi(w)}{\partial w}}\\
 &=&
\min\{1,\tau\ell\}\max\{1/\ell,\tau\} \exp(-w \max\{1/\ell,\tau\} \\
&=& \tau \exp(-w \max\{1/\ell,\tau\} \ . 
\end{eqnarray*}
For a particular key $x$, the density function of $c_x$ is equal to $\phi$ 
 when in the range $[0,w_x]$.
Elsewhere we have that
$\int_{w_x}^\infty \phi(y)dy=1-\Phi(w_x)$ is the probability 
that $x$ is not sampled.

  We now establish our claim for the fixed-size sampling algorithm. 
We start with a precise definition of the conditioning 
  we use.  The randomization used for a key $x$ includes 
the random hash value used in 
$\keybase{x}$, the randomization used to assign scores to all the elements of $x$, and the 
random $u_x,z_x$ (freshly drawn per eviction step) used to adjust the counters.   
Observe that given that
a key $x$ is cached, the threshold value $\tau$ only depends on the randomization 
of the other keys but not on that of $x$. 
When $x$ is not cached, the threshold value $\tau$ may depend 
on the randomization used for $x$.   But in this case,
$c_x=0$. 

We show that after each step, the distribution of the final value of
$c_x$ has the claimed density, when conditioned on the 
current threshold $\tau$.  
Correctness when $\tau$ does not change follows from the treatment of 
the fixed threshold case. 
It remains to consider eviction steps. 
Let $\tau$ be the threshold value before eviction and let $\tau^*$ be 
the new threshold value after eviction. 
If the new $\tau^*$ is determined by our key $x$, then our key $x$ is 
evicted. 
Otherwise, the new $\tau^*$ is determined by the $k$th largest $z_x$
among all other keys.  This value depends only on the randomization of 
other keys and not on $u_x,r_x$. 

  We now need to show that the particular computation we used for the 
  count adjustment preserves the claimed form of the distribution. For that we assume 
  that the distribution of $c_x$ was as claimed with respect 
  to the initial threshold $\tau$.  We express it as a function of the 
  final threshold $\tau^*$. 

  We first consider the case  $\tau\ell >
  1$ and $\tau^*\ell \geq 1$. 
With probability $\tau^*/\tau$ the density at $y$ is the same and 
with probability $(1-\tau^*/\tau)$, it is the integral over $u$ of the 
density with $\tau$ at $u<y$ and the density of  a new deduction,
which is $\Exp(\tau^*)$ at $y-u$. Now observe that 
the density of the deduction conditioned on $\tau$ before the adjustment was 
(by our assumption) $\tau e^{-y\tau}$.  We obtain 
{\small 
\begin{eqnarray*}
 \lefteqn{\frac{\tau^*}{\tau} \tau e^{-\tau y} +
 (1-\frac{\tau^*}{\tau})\int_0^y \tau e^{-\tau u}  \tau^* e^{-\tau^*  (y-u)} du}\\
 &=& \tau^*  e^{-\tau y} +\tau^*(\tau-\tau^*) e^{-\tau^* y}
 \int_0^y e^{-u(\tau-\tau^*)}du\\
 &=& \tau^* e^{-\tau y} + \tau^* e^{-\tau^* y} (1-e^{-y(\tau-\tau^*)})\\
 &=& \tau^* e^{-\tau^* y}
\end{eqnarray*}
}

We now consider the case that $\tau\ell > 1 $ and $\tau^*\ell < 1$. 
With probability $1-\tau^*\ell$, we have $\keybase{x} \geq \tau^*$ and 
the final count is $0$.  With probability 
$(\ell\tau)^{-1} \tau^*\ell= \tau^*/\tau$, we have $u_x \leq 1/(\ell\tau)$ and 
$\keybase{x} < \tau^*$ and the count remains the same.  Otherwise,
with probability $(1-(\ell\tau)^{-1})\tau^*\ell = \tau^*(\ell -
\tau^{-1})$ we have $u_x > 1/(\ell\tau)$ and $\keybase{x} < \tau^*$. 
In this case we consider the density $y$ of the sum of the previous $u$
and new deduction $(y-u) \sim \Exp(1/\ell)$.  We obtain 
{\small 
\begin{eqnarray*}
\lefteqn{\frac{\tau^*}{\tau} \tau e^{-\tau y} +
 \tau^*(\ell-\tau^{-1})\int_0^y \tau e^{-\tau u}
 (1/\ell) e^{-(y-u)/\ell}  du}\\
 &=& \tau^* e^{-\tau y} + \tau^*(\tau-\ell^{-1}) e^{-y/\ell}\int_0^y 
 e^{-(\tau-\ell^{-1})u}du \\
&=& \tau^* e^{-\tau y} +
\tau^* e^{-y/\ell}(1-e^{-(\tau-\ell^{-1})y}) \\
&=& \tau^* e^{-y/\ell}
\end{eqnarray*}
}

Last, we consider the case $\tau\ell < 1$. 
With probability $\tau^*/\tau$ the key maintains the same count, since 
conditioned on $\keybase{x}< \tau$, we have $\keybase{x}< \tau^*$ with probability 
$\tau^*/\tau$. 
Otherwise, the count is $0$. 
So we obtain the density 
$$\frac{\tau^*}{\tau}  \tau e^{-y/\ell}=\tau^* e^{-y/\ell}\ .$$
\end{proof}

\section{ppswor Variance Analysis}

 In our variance analysis, we make use of the following
notion of domination of a distribution by another distribution (or
  function):  A distribution on $y\geq 0$ with density function $b$ is
  {\em dominated} by
  a function $s$ if
$$ \forall z,\ \int_0^z b(y)dy \leq \int_0^z s(y)dy\ .$$
We will use domination to bound variance.  We will compute the
variance $v(y)$ conditioned on a threshold value $y$ and then
compute the unconditioned variance as the 
expectation of $v(y)$ over the distribution of the threshold.  We will
use a dominating distrbution to bount the variance using the
general property: 
\begin{lemma} \label{dom:lemma}
 If $v(y)$ is non-increasing in $y$ and $b$ is dominated by $s$
then
$\int_0^\infty b(y) f(y) dy \leq \int_0^\infty s(y) v(y) dy$.
\end{lemma}


  We are now ready to upper bound the CV for ppswor.  Our bounds for
  other sampling schemes  build on this ppswor proof.   We start with
  some basic lemma.
We use  the   notation $s_{W,k}$ and $S_{W,k}$ respectively for the density and 
  cummulative distribution functions of the Erlang distribution $\Erlang(W,k)$, which
  is a sum of $k$ independent
  exponential distribution with parameter $W$.
\begin{lemma} \label{Erlang:lemma}
Let $B$ be a  distribution which can be expressed as the sum of $k$ independent
exponential distributions, each with parameter that is at most $W$
(the set of parameters can be a random variable and parameters may not
be independent).  Then $B$ is dominated by $\Erlang(W,k)$.
\end{lemma}

  Consider now ppswor sampling of ${\cal X}$ with respect to weights
  $w_x$ and let $W=\sum_z w_z$.
For a particular key $x$, let $B_x$ be the distribution of the $k$th
smallest seed value in ${\cal
  X}\setminus x$.
\begin{lemma} \label{Bdomination:lemma}
For all keys $x$, the distribution $B_x$ is dominated by
$\Erlang(W,k)$.
\end{lemma}
\begin{proof}
Let $\tau' \sim B_x$.  By definition, $\tau'$ is the $k$th smallest  of independent
exponential random  variables with parameters $w_y$ for
 $y\in {\cal X}\setminus x$.
  From properties of the exponential
 distribution, the minimum seed is exponentially distributed with parameter
 $\sum_{y\in {\cal X}\setminus x} w_y  = W-w_x$. Conditioned on a
 particular key $z_1$ having the smallest seed,  the difference between the minimum and second smallest is exponentially
 distributed with parameter $W-w_x-w_1$, where $w_1$ is the weight of the
key $z_1$ with  minimum seed, and so on.   Therefore, the distribution on $\tau'$ conditioned on
 the ordered set of smallest-seed elements is a sum of $k$
 exponential random  variables with parameters {\em at most} $W$. The
 distribution  $B_x$ is a convex combination of such distributions:
 One distribution for each possible ordered subset of size $k$ of
 ${\cal X}\setminus x$, and each such choice has probability equal to
 the probability of the ordered subset being the first $k$ keys of
 ppswor sampling from ${\cal X}\setminus x$,.

Therefore, from Lemma~\ref{Erlang:lemma}, the distribution $B_x$ (for any $x$) is 
dominated  by Erlang with parameters $(W,k)$.
\end{proof}

\begin{theorem}  \label{ppsworcv:thm}
Consider ppswor sampling with respect to weights $f(w_x)$.
For  a segment $H$ with proportion $q = Q(f,H)/Q(f,{\cal X})$, the CV
of the estimate $\hat{Q}(f,H)$ \eqref{ppsworest} is at most $(q (k-1))^{-0.5}$.
\end{theorem}
\begin{proof}
  We extend the analysis in \cite{ECohen6f,ECohenADS:PODS2014} 
 (note that here we take $k$ to be the sample size without the
  threshold ($k+1$ smallest seed) whereas in
  \cite{ECohen6f,ECohenADS:PODS2014} $k$ is larger by $1$).

  WLOG, since we are considering sampling aggregated data, we assume $w_x\equiv f(w_x)$.
  Let $W = \sum_{x\in {\cal X}} w_x$ be the total weight of the   population.   

  We first consider the 
variance of the inverse probability estimate for a key with 
  weight $w$ with respect to a fixed threshold $\tau$.  The variance 
  is $(1/p -1)w^2$, where $p=1-e^{-w\tau}$ and is at most 
$$\var[\hat{w}_x \mid \tau] = w^2 \frac{e^{-\tau w}}{1-e^{-\tau w}}
\leq w/\tau\ ,$$
using the relation $e^{-x}/(1-e^{-x}) \leq 1/x$. 

We now consider the ``perspective'' of a key $x$ and the 
distribution $B_x$ of the $k$th smallest seed value in ${\cal
  X}\setminus x$.    From Lemma \ref{Bdomination:lemma}, $B_x$ is dominated by $\Erlang(W,k)$.

We will bound the variance of the estimate using the relation
$$\var[\hat{w}_x] = \E_{\tau' \sim B_x} \var[\hat{w}_x \mid \tau'] \ .$$

Since the conditioned variance $\var[\hat{w}_x \mid \tau']$ is non-increasing with
$\tau'$, from Lemma \ref{dom:lemma}, 
domination implies that 
$$\E_{\tau' \sim B_x} \var[\hat{w}_x  \mid \tau' ] \leq 
\E_{\tau' \sim S_{W,k}}\var[\hat{w}_x \mid \tau']\ .$$
Therefore it suffices to upper bound the expectation for $S_{W,k}$.

 We now use the Erlang density 
function \cite{Feller2} $$s_{W,k}(x) = \frac{W^k x^{k-1}}{(k-1)!}
e^{-Wx}\ $$ 
and the relation
$\int_0^\infty x^a e^{-bx} dx = a!/b^{a+1}$ to bound the variance:
\begin{eqnarray*}
\var[\hat{w_x}]  & \leq  &  \int_0^\infty s_{W,k}(z) \var[\hat{w}_x
\mid z] d z \\
& \leq &  \int_0^\infty \frac{W^{k} z^{k-2}}{(k-1)!} e^{-Wz}
         \frac{w}{z} dz \\
& \leq & w  \frac{W^{k}}{(k-1)!} \int_0^\infty z^{k-2} e^{-Wz}
         dz = \frac{wW}{k-1}\ .
\end{eqnarray*}

 Since  covariances between different keys are zero \cite{bottomk:VLDB2008}, 
the variance on a set $H$ with weight $w(H)$ is the
 sum of variances $\var[\hat{w}(H)] \leq  w(H) W/(k-1)$.  We divide by
 $w(H)^2$ and take the square root to obtain an upper bound on the CV.
\end{proof}

\section{SH sum statistics variance}
  We now consider the fixed-$k$ SH estimator applied for sum statistics
  $f(w)\equiv w$.  The estimate is $\hat{Q}(w,H)=\sum_{x\in H}
  (c_x+\tau^{-1})$ \cite{CCD:sigmetrics12}.  The estimator has at
  least the variance of the ppswor estimator, since it has the same
  distribution over keys, and the ppswor inverse probability estimate, which can
  not be applied with SH, minimizes variance for this distribution
  (over all unbiased estimators that can be expressed as a sum over sampled
  keys of per-key unbiased estimates).  Surprisingly, we obtain  the
  same upper bound on the CV:
\begin{theorem}  \label{aSHcv:thm}
The SH sum estimator on a segment $H$ with proportion $q =
w(H)/w({\cal X})$ has CV of at most $(q (k-1))^{-0.5}$.
\end{theorem}
\begin{proof}
We  bound the variance of the fixed-$k$ SH estimate on a key $x$
conditioned on $\tau$.  With probability $e^{-\tau w}$ the key is not
sampled, the estimate is $0$, and the contribution to variance is
$w^2$.  Otherwise, with density $\tau e^{-\tau y}$ the count is $c_x=w-y$, the
 estimate is $c_x+\tau^{-1}$, and the contribution to the variance is
 $(\tau^{-1}-y)^2$.
We obtain 
\begin{eqnarray*}
\var[\hat{w}_x \mid \tau] &=& w^2 e^{-w\tau} + \int_0^w \tau
                                e^{-y\tau} (\tau^{-1}-y)^2dy  \\
&=& \tau^{-2}(1-e^{-\tau w}) \leq w/\tau\ .
\end{eqnarray*}
The last inequality uses the relation $1-e^{-x} \leq x$.

  The distribution of $\tau'$,  the $k$th smallest seed in ${\cal X}\setminus x$,
  is the same as in ppswor and we can conclude as in the proof of Theorem \ref{ppsworcv:thm}.  We
take the expectation of this variance over the distribution
  $S_{W,k}$ which dominates  $B_x$
and using the zero covariances, bound 
the variance on $w(H)$ by $\frac{W w(H)}{k-1}$.
\end{proof}

\section{SH$_\ell$  cap statistics variance}

 We bound the variance by relating
the 
distribution of $\seed(x)$ under fixed-$k$ SH$_\ell$ to 
the distribution  with 
ppswor with respect to key weights $\min\{w_x,T\}$, that is, using 
$\seed{x} \sim \Exp[\min\{w_x,T\}]$.
  For the purpose of this analysis, we work with SH$_\ell$
  seed  distribution that is exponential when conditioned on $y<1/\ell$ instead of 
  uniform. This does not change the algorithm or estimation, since 
  there is a monotone transformation that preserves seed order.
The SH$_\ell$ density function of the seed of a key with weight $w$ is 
\begin{equation}  \label{shellseeddist}
b_{\ell,w}(y) = (y<1/\ell)\, ? \, \ell e^{-\ell y} \frac{1-e^{-w/\ell}}{1-1/e} \, : \,
w e^{-wy}\ .
\end{equation}

  We now can relate $b_{\ell,w_x}(y)$ to $\Exp[\min\{w_x,T\}]$:
\begin{lemma}  \label{seedrel:lemma}
For any key $x$ and cap value $T$, the density function
$b_{\ell,w_x}(y)$ of $\seed(x)$ under fixed-$k$ SH$_\ell$ is
dominated by 
$$\frac{e}{e-1} \max\{1,\ell/T\}   \min\{w_x,T\} e^{-\min\{w_x,T\} y}\
.$$ 
\end{lemma}
\begin{proof} 
We show that for any point $z\geq 0$,
{\small
\begin{eqnarray}
\lefteqn{\int_0^z b_{\ell,w}(y) dy}\nonumber \\
&\leq&  \int_0^z \frac{e}{e-1}
\max\{1,\frac{\ell}{T}\}   \min\{w,T\} e^{-\min\{w,T\} y}  dy \nonumber \\
&=& \frac{e}{e-1} \max\{1,\frac{\ell}{T}\} (1-e^{-\min\{w,T\} z})  \label{domby}
\end{eqnarray}
}


The proof is via case analysis.  We start with $z \leq 1/\ell$. We have that 
\begin{eqnarray*}
\int_0^z b_{\ell,w_x}(y) dy &=& \int_0^z \ell e^{-\ell y}
\frac{1-e^{-w/\ell}}{1-e^{-1}} \\
&=& (1-e^{-\ell z}) \frac{1-e^{-w/\ell}}{1-e^{-1}} 
\end{eqnarray*}
Therefore, to establish the claim \eqref{domby} we need to show that
\begin{equation} \label{domby2}
(1-e^{-\ell z}) (1-e^{-w/\ell}) \leq \max\{1,\frac{\ell}{T}\} (1-e^{-\min\{w,T\} z})
\end{equation}


\begin{itemize}
\item {\bf 
Case  $w<T$:}
$$(1-e^{-\ell z}) (1-e^{-w/\ell}) \leq (1-e^{-wz}) = (1-e^{-\min\{w,T\}
  z})$$ using the relation
\begin{equation} \label{exprelation}
\forall a,b\geq 0,\ (1-e^{-a})(1-e^{-b})\leq 1-e^{-ab}\ .
\end{equation}
and thus \eqref{domby2} holds.
\item
{\bf Case  $w\geq T$ and $\ell \geq T$:}
We have
\begin{eqnarray*}
(1-e^{- \ell z})(1-e^{-w/\ell }) &\leq& 1-e^{-\ell z} \leq
\frac{\ell}{T} (1-e^{-Tz})\\
&=& \frac{\ell}{T} (1-e^{-\min\{w,T\}z})
\end{eqnarray*}
The last inequality follows from the function
$(1-\exp(-x)/x$ being monotone decreasing:  Therefore $\ell \geq T$
implies $\ell z \geq Tz$ and thus
$\frac{(1-e^{-\ell z})}{\ell z} \leq \frac{(1-e^{-T z})}{T z}$ which
implies
$1-e^{-\ell z}) \leq \frac{\ell}{T}(1-e^{-T z})$.
We therefore obtain that \eqref{domby2} holds.
\item
{\bf Case  $w\geq T$ and $\ell\leq T$:} We have
\begin{eqnarray*}
(1-e^{-\ell z}) (1-e^{-w/\ell}) &\leq&  (1-e^{-T z}) \\
&=& (1-e^{-\min\{w,T\} z})\ .
\end{eqnarray*}
 and therefore \eqref{domby2} holds.
\end{itemize}

We now consider $z\geq 1/\ell$.
We have 
\begin{eqnarray*}
\int_0^{1/\ell} b_{\ell,w_x}(y) dy &=& 1-e^{-w/\ell} \\
\int_{1/\ell}^z b_{\ell,w_x}(y) dy &=& e^{-w/\ell} - e^{-w z}
\end{eqnarray*}
Thus,
$\int_{0}^z b_{\ell,w_x}(y) dy = 1-e^{-w z}\ .$
To verify \eqref{domby}, 
we need to show that for all $z\geq 1/\ell$,
\begin{equation} \label{domby3}
1-e^{-w z} \leq \frac{e}{e-1} \max\{1,\frac{\ell}{T}\}
(1-e^{-\min\{w,T\} z})\ .
\end{equation}
This is immediate for $w\leq T$.  We now consider $w\geq T$.
Since
$1-e^{-w z} \leq 1$, it suffices to show that
$$\frac{e}{e-1} \max\{1,\frac{\ell}{T}\} (1-e^{-T z}) \geq 1 \ .$$  
Using $z\geq 1/\ell$, it suffices to show
\begin{equation} \label{domby4}
\frac{e}{e-1} \max\{1,\frac{\ell}{T}\} (1-e^{-T/\ell}) \geq 1 \ .
\end{equation}
If $T\geq \ell$ then we substitute $1-e^{-{T/\ell}} \geq 1-e^{-1}=\frac{e-1}{e}$ to
show that \eqref{domby4} holds. 
If $T< \ell$ we have $T/\ell <1$ and use the inequality
$(1-e^{-a}) \geq a (1-e^{-1})$ for $a\leq 1$ to obtain:
$$\frac{e}{e-1} \max\{1,\frac{\ell}{T}\} (1-e^{-T/\ell}) \geq
\frac{e}{e-1} \frac{\ell}{T} \frac{T}{\ell} (1-e^{-1}) = 1\ .$$

\end{proof}


 We are now able to express, for a key $x$, a dominating distribution to the distribution of the $k$th
 smallest seed in ${\cal X}\setminus x$ when using fixed-$k$ SH$_\ell$.
\begin{lemma}  \label{dominateaSHell:lemma}
The distribution $B$ of the 
 $k$th smallest seed,  where seeds for $z\in {\cal X}\setminus x$ are
independently drawn from $b_{\ell,w_z}$, 
is dominated by the function
$$\frac{e}{e-1} \max\{1,\frac{\ell}{T}\} s_{W,k}\ ,$$
 where 
$$W=\sum_{z\in {\cal X}} \min\{w_z,T\}\ $$ and $s_{W,k}$ is the
density of $\Erlang(W,k)$.
\end{lemma}
\begin{proof}
From Lemma \ref{seedrel:lemma},  $B$ is dominated by 
$\frac{e}{e-1} \max\{1,\ell/T\}$  times the density of the $k$th seed
according to $\Exp[\min\{w_x,T\}]$.
The latter distribution is dominated by $s_{W,k}$.
We get the claim from transitivity 
of domination.
\end{proof}

\subsection{CV bound for 2-pass estimator} \label{2passcontashell:sec}
 We are now ready to bound the variance of the 2-pass fixed-$k$
 SH$_\ell$ estimator.  Recall that the
estimator is applied to an SH$_\ell$ sample which includes the exact
$\Cap_T(w_x)$ values of sampled keys.

 We first bound  the variance for a fixed $\tau$.
\begin{lemma} \label{twopassaSHellvar}
$$\var[\hat{\Cap}_T(w_x) \mid \tau ]  \leq 
\max\{\frac{T}{\ell},1\} \frac{\min\{w_x,T\}}{\tau}\ .$$
\end{lemma}
\begin{proof}
The inclusion probability of a key $x$ conditioned on
  the threshold $\tau$ is
\begin{equation}\label{probtau:eq}
\Pr[\seed{x}< \tau] = \bigg\{ \begin{array}{lll} \tau\ell >1  & \text{:} & (1-e^{-\tau w_x}) \\
\tau\ell \leq 1 & \text{:} & (1- e^{-\ell \tau}) \frac{1-e^{-w_x/\ell}}{1-1/e}\end{array}\bigg. 
\end{equation}

The variance conditioned on $\tau$ of the inverse
  probability estimate is
\begin{equation} \label{vartauShell}
\var[\hat{\Cap}_T(w_x) \mid \tau ] = (\frac{1}{\Pr[\seed{x}<\tau]}-1)
\min\{w_x,T\}^2 \ .
\end{equation}

 For $\tau \ell > 1$ we have 
\begin{eqnarray*}
(\frac{1}{\Pr[\seed(x)<\tau]}-1) &\leq&
 \frac{e^{-\tau w_x}}{1-e^{-\tau w_x}} \\
&=& \frac{1}{e^{\tau w_x}-1} \leq \frac{1}{\tau  w_x}\ .
\end{eqnarray*}
The last inequality uses the relation $e^x\leq 1+x$ for $x\geq 0$.
Substituting in \eqref{vartauShell} we obtain 
$$\var[\hat{\Cap}_T(w_x) \mid \tau ]  \leq
\min\{w_x,T\}^2  \frac{1}{\tau  w_x} \leq \frac{\min\{w_x,T\}}{\tau}\
.$$

It remains to treat the case $\tau\ell \leq 1$.
We will establish that
\begin{equation}  \label{midclaim}
\frac{1-1/e}{(1-e^{-\ell \tau})(1-e^{-w/\ell})} -1 \leq \frac{1}{\min\{w_x,\ell\}\tau}\ .
\end{equation}

 We first consider $w \geq \ell$.  In this case the right hand size is
 fixed at $1/(\ell\tau)$.  To maximize the left hand size, 
over $w\geq \ell$, we
 take $w=\ell$.  We then obtain that the left hand size is at most
$\frac{e^{-\ell\tau}}{1-e^{-\ell\tau}} \leq \frac{1}{\ell \tau}$, which
establishes \eqref{midclaim}.

  We next consider $w\leq \ell$, recalling that we already assume $\tau
  \ell<1$, and thus have $w<\ell<1/\tau$.
To maximize the left hand side of \eqref{midclaim} under these
assumptions we need to minimize the denominator
$h(\ell)\equiv (1-e^{-\ell \tau})(1-e^{-w/\ell})$ in the range
$w<\ell<1/\tau$.
By taking the derivative 
$\frac{\partial h}{\partial \ell} \geq 0$, we see that it is negative
in this range.   Therefore, $h(\ell)$ is
minimized for $\ell=1/\tau$.
Substituting $\ell=1/\tau$, we obtain that the left hand size of
\eqref{midclaim} is at most
$\frac{e^{-w\tau}}{1-e^{-w\tau}} \leq \frac{1}{w\tau}$, and thus
\eqref{midclaim} is fully established.

 We now note that the left hand size of \eqref{midclaim} is equal to
 $\frac{1}{\Pr[\seed(x)<\tau]}-1$.
Substituting in \eqref{vartauShell}, we obtain that
\begin{eqnarray*}
\lefteqn{\var[\hat{\Cap}_T(w) \mid \tau ]} \\ 
& = & \bigg(\frac{1}{\Pr[\seed{x}<\tau]}-1\bigg)
\min\{w,T\}^2 \\ 
&\leq&  \frac{1}{\ell \tau}\min\{w,T\}^2 \\
&\leq&
\min\{\frac{w}{\ell},\frac{T}{\ell}\} \frac{\min\{w,T\}}{\tau} \\
&\leq&
\min\{1,\frac{T}{\ell}\} \frac{\min\{w,T\}}{\tau}
\end{eqnarray*}
The second to last inequality uses our assumption that $w\leq \ell$.
\end{proof}

We are now ready to conclude the proof of Theorem \ref{cv2passSHell:thm}.
  We bound the variance with respect to the 
  distribution $B$ using the dominating function as in Lemma \ref{dominateaSHell:lemma}. 
 We
  obtain that the variance for key $x$ is
$$\var[\hat{\Cap}_T(w_x)] \leq \frac{e}{e-1}  \max\{\frac{T}{\ell},\frac{\ell}{T}\}
 W \frac{\min\{w_x,T\}}{k-1}\ .$$  We conclude as in the proof of
 Theorem \ref{ppsworcv:thm}, showing that our estimate of $Q(\Cap_T,H)$
for a segment $H$ with
 proportion $q$ of the $\Cap_T$ statistics  has CV that is at most
$\bigg(\frac{e}{e-1} \frac{\max\{T/\ell,\ell/T\}}{q(k-1)}\bigg)^{0.5}$.

\subsection{1-pass variance bound}  \label{1passcontashell:sec}

We provide the proof of Theorem~\ref{cv1passSHell:thm}, which bounds
the variance of the 1-pass estimators.  

The estimators are applied to the same sample distribution of included 
keys and the proof outline is similar to that of the 2-pass estimator
(Theorem~\ref{cv2passSHell:thm}).  The only component we are missing
is
a bound on the conditional variance $\var[\hat{\Cap}_T(w_x) \mid \tau ]$:
  Since the exact weight $w_x$ is not available,  we can not
apply Lemma \ref{twopassaSHellvar} and instead compute the variance of
the 1-pass estimator that is applied to  $c_x$.

\begin{lemma}
\begin{eqnarray*}
\var[\hat{\Cap}_T(w_x) \mid \tau ]  & \leq &
\frac{\min\{w_x,T\}}{\tau} \bigg(\frac{\ell}{T}(1-e^{-T/\ell}) +
\frac{T}{\ell}\bigg)\\
&\leq& (1+\frac{T}{\ell})
\frac{\min\{w_x,T\}}{\tau}\ .
\end{eqnarray*}
\end{lemma}
\begin{proof}

The estimation coefficients $\beta(c)$ are provided in Theorem \ref{estcoeffcont:thm}.
We bound
\begin{eqnarray*}
\var[\hat{\Cap}_T(w) \mid \tau ] & =&  \E[\beta(c)^2]-\E[\beta(c)]^2
\\
& = & \E[\beta(c)^2]-\min\{T,w\}^2\ .
\end{eqnarray*}
The last inequality follows from unbiasedness
$\E[\beta(c)] = \Cap_T(w)$.

 For $\Cap_T$, we have $f'(x) = 1$ for $x\leq T$ and $f'(x)=0$
 otherwise.  Therefore,
$$\beta(c) = \frac{\min\{T,c\}}{\min\{1,\ell\tau\}} + \tau^{-1}
I_{c<T} \ .$$

We use the density function of $c$ given $w$ which is provided in Theorem
\ref{cdensity:thm}.
The density function for $w \geq c>0$ is $\tau e^{-\tau(w-c)}$ when
$\tau\ell \geq 1$  and $\tau e^{-(w-c)/\ell}$ when $\tau\ell \leq 1$.
We have that $\beta(c)=0$ when $c=0$.
We now use case analysis.

We first consider the case $\tau\ell >1$ and $w\leq T$.
We have $\beta(c)= c +1/\tau$.
\begin{eqnarray*}
\E[\beta^2]-w^2  & = & -w^2 + \int_0^w \tau e^{-\tau x} \beta(w-x)^2 dx \\
&=& -w^2 + \int_0^w \tau e^{-\tau x} (\tau^{-1}+w-x)^2 dx\\
&=& (1- e^{-\tau w})(w^2+\tau^{-2})-w^2 \leq w/\tau 
\end{eqnarray*}
The last inequality uses the relation $(1-e^{-x}) \leq x$.

For $\tau\ell >1$ and $w> T$, we have $\beta(c)=T$ for $w\geq c\geq T$
and $\beta(c)= c+1/\tau$ for $c<T$. Therefore,
{\small
\begin{eqnarray*}
\lefteqn{\E[\beta^2] - T^2 }\\
& = & -T^2+ \int_0^{w-T} \tau e^{-\tau x} T^2 dx +
\int_{w-T}^w \tau e^{-\tau x} (\tau^{-1}+w-x)^2 dx \\
&=& -T^2 + (1-e^{-\tau (w-T)})T^2  + e^{-\tau (w-T)}(T^2+\tau^{-2}) -\tau^{-2}e^{-\tau w})\\
&=&  \tau^{-2} e^{-\tau w}(e^{\tau T}-1) \leq \tau^{-2}(1-e^{-\tau T})
    \leq T/\tau\ .
\end{eqnarray*}
}
Using the fact that $e^{-Tw}$ is maximized (subject to $w\geq T$) when
$w=T$ and $(1-e^{-x}) \leq x$.

For $\tau\ell <1$ and $w\leq T$, we have $\beta(c)=\tau^{-1}(c/\ell+1)$.
\begin{eqnarray*}
\E[\beta^2] & = & \int_0^w \tau e^{- x/\ell}
\tau^{-2}(\frac{w+\ell-x}{\ell})^2 dx \\
& = & \frac{1}{\tau\ell^2} \int_0^w e^{- x/\ell} (w+\ell-x)^2 dx \\
&=& \frac{\ell}{\tau}(1-e^{-w/\ell}) + \frac{w^2}{\tau\ell} \\
&=& \frac{w}{\tau}\bigg(\frac{\ell}{w}(1-e^{-w/\ell}) +
\frac{w}{\ell}\bigg)\\
&\leq& \frac{w}{\tau}\bigg(\frac{\ell}{T}(1-e^{-T/\ell}) +
\frac{T}{\ell}\bigg)\\
&\leq&
    \frac{w}{\tau}(1+\frac{T}{\ell})\ .
\end{eqnarray*}
The second to last inequality follows from the function
$(1-e^{-x})/x+x$ being increasing in the range $x>0$.  Therefore, 
subject to fixed $\ell$ and $w\leq T$, the expression using $x= w/\ell$ is maximized at $w=T$.

For $\tau\ell < 1$ and $w> T$:
\begin{eqnarray*}
\E[\beta^2] & = & \int_0^{w-T} \tau e^{- x/\ell} T^2/(\tau\ell)^2 dx
                  +\\
&&                   \int_{w-T}^w \tau e^{- x/\ell} (\tau^{-1}+(w-x)/(\tau\ell))^2 dx \\
&=& (\tau\ell)^{-1} T^2 (1-e^{-(w-T)/\ell}) + \\ &&  (\tau\ell)^{-1}
    \int_{w-T}^w (1/\ell)e^{- x/\ell} (\ell+w-x)^2dx\\
&=& (\tau\ell)^{-1}\bigg( T^2 (1-e^{-(w-T)/\ell}) + \\
  &&    e^{-(w-T)/\ell}(\ell^2+T^2)-\ell^2 e^{-w/\ell} \bigg)\\
&=& (\tau\ell)^{-1}\bigg( T^2 +  e^{-(w-T)/\ell} \ell^2 -\ell^2
    e^{-w/\ell} \bigg)\\
&=& \frac{T}{\tau}\bigg( \frac{T}{\ell} +
\frac{\ell}{T}e^{-w/\ell}(e^{T/\ell} -1)\bigg) \\
&\leq& \frac{T}{\tau}\bigg( \frac{T}{\ell} +
\frac{\ell}{T}(1-e^{-T/\ell}) \bigg) \\
&\leq & \frac{T}{\tau}(1+T/\ell) 
\end{eqnarray*}
The second to last derivation substitutes $w=T$ for the $w$ value that
maximizes the expression subject to $w\geq T$.  We then use
$(1-e^{-x})\leq x$.
\end{proof}

\section{Discretized threshold sampling}

  A variation on the fixed-$k$ discrete SH$_\ell$ sampling scheme
that can be useful in practice
is to limit the algorithm to work with a discrete set of thresholds (think $\tau=\alpha ^i$ for 
  some $\alpha<1$) (see Algorithm~\ref{dtellaSH:alg}).  When 
  the cache is full, the threshold is adjusted in iteration until its 
  size drops below $k$.  
Discretized thresholds with fixed-$k$ SH were 
  considered in \cite{flowsketch:JCSS2014}. 
Discretized thresholds have the advantage that the number of times 
keys are pulled out/placed back on the priority queue for updates is lower. 
Another advantage is that the 
  estimators, when expressed as coefficients which depend on 
  $\ell$ and $\tau$,  can be reused with different samples. 

\ignore{
\begin{procedure}[h]{\caption{AdjustCountScore($\tau,x$)}}
\DontPrintSemicolon 
\KwIn{key $x$ currently in $\Counters$, fixed $\ell$, threshold $\tau$}
\KwOut{Decrease count until $0$ and key is removed from $\Counters$ or until $\seed{x}<\tau$}
\SetKwFunction{rand}{rand}
\SetKwFunction{Hash}{Hash}
\SetKwFunction{Return}{return}
\While{$\Counters{y} > 0$ and $\seed{y} \geq \tau$}
{$\Counters{y} \gets \Counters{y}-1$\;
  $\seed{y} \gets \ElementScore{y}$\;      } 
\If{$\Counters{y} == 0$}{ delete $\Counters{y}$}\;
\end{procedure}
}

\begin{algorithm2e}[h]
\caption{stream sampling: max size $k$ and discretized 
  thresholds \label{dtellaSH:alg}}
\DontPrintSemicolon 
\SetKwArray{Counters}{Counters}
\SetKwFunction{rand}{rand}
\SetKwFunction{Hash}{Hash}
\KwData{sample size $k$, $\alpha<1$, stream of elements from ${\cal X}$}
\KwOut{set of $k$ pairs $(x,c_x)$ where $x\in {\cal X}$ and $c_x \in [1,w_x]$}
\SetKwFunction{Return}{return}
$\Counters \gets \emptyset$  \tcp*[h]{Initialization} \;
$\tau \gets 1$ \tcp*[h]{Sampling Threshold} \;
\tcp{Processing a stream element of key $x$}
\ForEach{stream element $h$ with key $x$}
{
  \If{$x$ is in $\Counters$}{$\Counters{x} \gets \Counters{x}+1$\;}
  \Else{
    $\seed{x} \gets \ElementScore{h}$\;
    \If{$\seed{x} < \tau$}
    {
      $\Counters{x} \gets 1$\;
      \While {$|\Counters| > k$}
      {
        $\tau \gets \alpha \tau$\;
        \While {$\max\{\seed{x} \mid x \text{ in } \Counters\} \geq 
          \tau$}
        {
          $y \gets \arg\max\{\seed{x} \mid x \text{ in } \Counters\}$\;
         \While{$\Counters{y} > 0$ and $\seed{y} \geq \tau$}
         {$\Counters{y} \gets \Counters{y}-1$\;
           $\seed{y} \gets \ElementScore{y}$\;      } 
         \If{$\Counters{y} == 0$}{ delete $\Counters{y}$, $\seed{y}$}\;
        }
      }
    }
  }
}
\Return{$\tau$ ;  $(x,\Counters{x})$ for $x$ in $\Counters$}
\end{algorithm2e}

\section{Approximate $\Cap_T$ Counters}

  Our design computes a sample of the active keys, over which 
segment statistics can be estimated.
One can also consider the more basic problem of only estimating the statistics 
  over the full data set $Q(f,{\cal X})$. 

The case $\Cap_1$ corresponds to a (approximate) distinct count of 
  keys, which is a fundamental and well-studied problem.
The case $\Cap_\infty$ is the total weight of
  the full stream and can easily be computed. 
 
 Our constructions can be modified to be more efficient when we are
 only interested in estimating $Q(\Cap_T,{\cal X})$:  Instead of 
storing full identifiers
  of cached keys, which are needed for a sample,  we can hash the key domain to a domain of size 
  that is polynomial in the number $n$ of distinct keys.  The resulting sketch size 
in this case would be $O(\log n)$ to represent each key hash and 
the count (which we can cap by a polynomial in $T$, to ensure the
representation of the counts is at most $O(\log T)$.   The result is an
 approximate $\Cap_T$ counter on streams that has state
(structure size) that is $O(\epsilon^{-2} (\log T  + \log n)$ and
provides estimates with  CV  of $\epsilon$.  

  State of the art approximate distinct counters, however, have a
  smaller, double logarithmic dependence on $n$
  \cite{hyperloglog:2007,hyperloglogpractice:EDBT2013}.
We present here a light weight algorithm that provides a rough
approximation of $Q(\Cap_T,{\cal X})$ with double logarithmic state.
We apply to each stream element the string returned by the element scoring function
$\ElementScore{h}$ \eqref{delementscore:eq}  used
in our discrete SH$_T$ algorithm.  We then apply any off-the-shelf
approximate distinct counter 
  \cite{FlajoletMartin85,hyperloglog:2007,hyperloglogpractice:EDBT2013,ECohenADS:PODS2014}
  to the
stream of  $\ElementScore{h}$.
Recall that the  elements being counted are the
identifiers of key-bucket pairs from the original stream, where a bucket $b \sim U[1,\ldots,T]$
    is drawn independently for each stream appearance of the key. 

  We now analyse the quality of this approximation.
\begin{lemma}
The expected number of distinct strings generated is between
$(1-1/e) Q(\Cap_T,{\cal X})$ and $Q(\Cap_T,{\cal X})$\ .
\end{lemma}
\begin{proof}
The expected number of distinct strings  that are 
  generated for a key of cardinality $w$ is 
  $\ell(1-(1-1/T)^w)$.  This is because the probability that we do 
  not hit a certain bucket with $w$ elements is $(1-1/T)^w$.  Thus,
  the expected number of empty buckets is $T(1-1/T)^w$. 

  So in expectation, a distinct counter applied 
with $T$ buckets would produce an underestimate.   The worst relative 
  error is obtained for keys with $w=T$, where the expected count is 
  $T(1-1/e)$, thus the relative error is $1/e$.   However, the 
  error depends on the distribution of key sizes, and is small for 
  cardinalities much larger or smaller than $T$.  
\end{proof}
This approach can obtain a rough estimate of $Q(\Cap_T,{\cal X})$
 to within $(1-1/e,1)$ using state of size $O(\log\log n)$.  (Since
 there is inherent error of $1-1/e$, there is no point in using a more
 accurate distinct counter)

  One approach to reduce the error, left for future work, is to
apply the counting with multiple  values of $T$.

\ignore{
\subsection{Future}
 Modification of SH and weighted adaptive SH that support negative 
 updates had been proposed \cite{GemullaLH:vldb06,CCD:sigmetrics12}. 
A natural question is to extend the negative updates support of \cite{CCD:sigmetrics12} to 
our spectrum sketches.

  Using multi-objective capped samples to feed linear programs that optimize 
  add allocation among several advertisers.  As a data sparsification 
  method. 
}


\ignore{


 We first bound  the variance for a fixed $\tau$.
\begin{lemma} \label{twopassaSHellvar}
$$\var[\hat{\Cap}_T(w_x) \mid \tau ]  \leq \frac{e}{e-1}
\max\{\frac{T}{\ell},1\} \frac{\min\{w_x,T\}}{\tau}\ .$$
\end{lemma}
\begin{proof}
The inclusion probability of a key $x$ conditioned on
  the threshold $\tau$ is
\begin{equation}\label{probtau:eq}
\Pr[\seed{x}< \tau] = \bigg\{ \begin{array}{lll} \tau\ell >1  & \text{:} & (1-e^{-\tau w_x}) \\
\tau\ell \leq 1 & \text{:} & (1- e^{-\ell \tau}) \frac{1-e^{-w_x/\ell}}{1-1/e}\end{array}\bigg. 
\end{equation}

The variance conditioned on $\tau$ of the inverse
  probability estimate is
\begin{equation} \label{vartauShell}
\var[\hat{\Cap}_T(w_x) \mid \tau ] = (\frac{1}{\Pr[\seed{x}<\tau]}-1)
\min\{w_x,T\}^2 \ .
\end{equation}

 For $\tau \ell > 1$ we have 
\begin{eqnarray*}
(\frac{1}{\Pr[\seed(x)<\tau]}-1) &\leq&
 \frac{e^{-\tau w_x}}{1-e^{-\tau w_x}} \\
&=& \frac{1}{e^{\tau w_x}-1} \leq \frac{1}{\tau  w_x}\ .
\end{eqnarray*}
The last inequality uses the relation $e^x\leq 1+x$ for $x\geq 0$.
Substituting in \eqref{vartauShell} we obtain $$\var[\hat{\Cap}_T(w_x) \mid \tau ]  \leq
\min\{w_x,T\}^2  \frac{1}{\tau  w_x} \leq \frac{\min\{w_x,T\}}{\tau}\ .$$

For $\tau\ell \leq 1$ we use $(1-e^{-y}) \geq (1-1/e)y$ for $y\leq 1$
to obtain $1-e^{-\ell\tau} \geq \ell\tau (1-1/e)$.  We substitute in 
\eqref{probtau:eq} to obtain
\begin{eqnarray*}
\Pr[\seed(x)<\tau] &\geq& \ell\tau(1-e^{-w_x/\ell}) \\
&\geq& \ell\tau \min\{1,w_x/\ell\}(1-1/e) \\
&\geq& (1-1/e) \tau\min\{\ell,w_x\}
\end{eqnarray*}
Where the second to last inequality uses the relation $(1-e^{-y}) \geq (1-1/e)\min\{1,y\}$.
Therefore, substituting in \eqref{vartauShell} we obtain
\begin{eqnarray*}
\var[\hat{\Cap}_T(w_x) \mid \tau ] &\leq&  \frac{e}{e-1}
\frac{\min\{w_x,T\}^2}{\min\{\ell,w_x\} \tau} \\
&\leq& \frac{e}{e-1}
\max\{\frac{T}{\ell},1\} \frac{\min\{w_x,T\}}{\tau} \ . 
\end{eqnarray*}
\end{proof}

}
\end{document}